%% file: main.tex
\documentclass[a4paper,USenglish,cleveref,autoref,thm-restate]{lipics-v2021}

\usepackage{amsthm,amssymb,amsmath}  
\usepackage{mathtools}
\usepackage{graphicx}
\usepackage{xspace}
\usepackage{todonotes}
\usepackage{tikz}
\usepackage{forest}
\usepackage{comment}
\usepackage{pgfplots}
\usepackage{pifont}
\pgfplotsset{compat=1.18}

\usetikzlibrary{shapes.geometric, arrows.meta, positioning, patterns}

\let\oldtextsc\textsc
\renewcommand{\textsc}[1]{\textup{\oldtextsc{#1}}}

\newtheorem{fact}[theorem]{Fact}

\newcommand{\cO}{\mathcal{O}}
\def\dd{\mathinner{.\,.}}
\newcommand{\CT}{\textsf{\textup{CT}}}

\newcommand{\absolute}[1]{\lvert#1\rvert}

\newcommand{\floor}[1]{\lfloor#1\rfloor}

\newcommand{\ceil}[1]{\lceil#1\rceil}

\def\hdist{\textnormal{\textsf{CHd}}}
\def\dphd{\textnormal{\textsf{optH}}}
\def\dphdsub{\textnormal{\textsf{optHSub}}}
\def\dphdnosub{\textnormal{\textsf{optHNoSub}}}
\def\problemname{\textsc{Approximate CT-Matching with Substitutions}\xspace}

\def\psv{\textnormal{\textsf{psv}}}
\def\nsv{\textnormal{\textsf{nsv}}}

\def\PD{\textnormal{\textsf{PD}}}
\def\ND{\textnormal{\textsf{ND}}}
\def\per{\textnormal{per}}

\newlength{\probboxminipagewidth}
\newlength{\fboxsepbak}
\newcommand{\defproblem}[4][%
]{%
\setlength{\fboxsepbak}{\fboxsep}
\setlength{\fboxsep}{.5em}
\setlength{\probboxminipagewidth}{\textwidth - 2\fboxsep - 2\fboxrule}%
\par\vspace{\topsep}\noindent\fbox{%
  \begin{minipage}{\probboxminipagewidth}
    {#1{{\boldmath\large#2\unboldmath}}} \par\smallskip
    {\bf{Input:}} #3 \par\smallskip
    {\bf{Output:}} #4
  \end{minipage}%
  }%
\par\vspace{\topsep}%
\setlength{\fboxsep}{\fboxsepbak}%
}

\let\dagstuhlifnum\ifnum

\title{Approximate Cartesian Tree Matching\texorpdfstring{\\}{\ }with Substitutions}
\titlerunning{Approximate Cartesian Tree Matching with Substitutions}

\ccsdesc[500]{Theory of computation~Pattern matching}

\keywords{Cartesian tree, Hamming distance, approximate pattern matching} 

\category{} 

\acknowledgements{We thank the anonymous reviewers of STACS 2026 for their helpful feedback.}

\nolinenumbers 

\EventEditors{Meena Mahajan, Florin Manea, Annabelle McIver, and Nguy\~{\^{e}}n Kim Th\'{\u{a}}ng}
\EventNoEds{4}
\EventLongTitle{43rd International Symposium on Theoretical Aspects of Computer Science (STACS 2026)}
\EventShortTitle{STACS 2026}
\EventAcronym{STACS}
\EventYear{2026}
\EventDate{March 10--13, 2026}
\EventLocation{Grenoble, France}
\EventLogo{}
\SeriesVolume{364}
\ArticleNo{46}

\hideLIPIcs

\author{Panagiotis Charalampopoulos}{King's College London, UK}{p.charalampopoulos@kcl.ac.uk}{https://orcid.org/0000-0002-6024-1557}{}
\author{Jonas Ellert}{DIENS, \'{E}cole normale sup\'{e}rieure de Paris, PSL Research University, France}{ellert.jonas@gmail.com}{https://orcid.org/0000-0003-3305-6185}{Partially funded by grant ANR20-CE48-0001 from the French National Research Agency.}
\author{Manal Mohamed}{King's College London, UK}{manal.1.mohamed@kcl.ac.uk}{https://orcid.org/0000-0002-1435-5051}{}

\authorrunning{P. Charalampopoulos, J. Ellert, M. Mohamed}
\Copyright{Panagiotis Charalampopoulos, Jonas Ellert, and Manal Mohamed} 

\begin{document}

\maketitle

\begin{abstract}
The Cartesian tree of a sequence captures the relative order of the sequence's elements.
In recent years, Cartesian tree matching has attracted considerable attention, particularly due to its applications in time series analysis.
Consider a text $T$ of length $n$ and a pattern $P$ of length~$m$. 
In the exact Cartesian tree matching problem, the task is to find all length-$m$ fragments of $T$ whose Cartesian tree coincides with the Cartesian tree $\CT(P)$ of the pattern.
Although the exact version of the problem can be solved in linear time [Park et al., TCS 2020], it remains rather restrictive; for example, it is not robust to outliers in the pattern.

To overcome this limitation, we consider the approximate setting, where the goal is to identify all fragments of $T$ that are \emph{close} to some string whose Cartesian tree matches $\CT(P)$.
In this work, we quantify \emph{closeness} via the widely used Hamming distance metric.
For a given integer parameter $k>0$, we present an algorithm that computes all fragments of~$T$ that are at Hamming distance at most $k$ from a string whose Cartesian tree matches $\CT(P)$.
Our algorithm runs in time $\cO(n \sqrt{m} \cdot k^{2.5})$ for $k \leq m^{1/5}$ and in time $\cO(nk^5)$ for $k \geq m^{1/5}$, thereby improving upon the state-of-the-art $\cO(nmk)$-time algorithm of Kim and Han [TCS 2025] in the regime $k = o(m^{1/4})$.

On the way to our solution, we develop a toolbox of independent interest.
First, we introduce a new notion of periodicity in Cartesian trees. Then, we lift multiple well-known combinatorial and algorithmic results for string matching and periodicity in strings to Cartesian tree matching and periodicity in Cartesian trees.
\end{abstract}

\clearpage
\setcounter{page}{1}

\section{Introduction}

Pattern matching is a central topic in theoretical computer science and a ubiquitous task in practical applications.
In several real-world scenarios, such as stock price analysis and music analysis, one may wish to match a pattern with respect to the relative order between its sequential data points rather than their exact values.
For instance, consider the challenge of matching the performance of a stock in the past quarter to its historical performance in order to predict its future behaviour.
In this task, we are not concerned with the exact stock price but rather with its ``ups and downs''.
Several notions of matching have been devised for capturing such scenarios, including Cartesian tree matching~\cite{DBLP:journals/corr/abs-2505-09236,DBLP:conf/cpm/KimC21,DBLP:conf/spire/NishimotoFNI21,DBLP:conf/cpm/OsterkampK25,exactCTmatching,DBLP:journals/tcs/SongGRFLP21} and order-preserving pattern matching~\cite{DBLP:journals/dam/CantoneFK20,DBLP:journals/spe/ChhabraFKT17,DBLP:journals/ipl/ChoNPS15,DBLP:journals/spe/DecaroliGM19,DBLP:conf/cpm/0002HSSTY16,DBLP:journals/tcs/GawrychowskiU16,DBLP:journals/acta/JargalsaikhanHUYS24,DBLP:journals/ipl/KimKNS23,DBLP:conf/spire/NakamuraIBT17,DBLP:journals/iandc/RussoCHBF22}.
Here, we focus on the Cartesian tree matching problem, which was introduced by Park et al.~\cite{exactCTmatching}.

A Cartesian tree $\CT(S)$ of a string $S$ is a binary tree defined recursively as follows: the root of the tree corresponds to the index $i$ of the minimum element in the sequence (under a fixed tie-breaking strategy); the left subtree of the root corresponds to the Cartesian tree of $S[1 \dd i-1]$, and the right subtree of the root corresponds to the Cartesian tree of $S[i+1 \dd \absolute{S}]$.
In what follows, we assume that ties are always resolved in favour of the leftmost elements.
The Cartesian tree of a string $S$ can be built in linear time~\cite{DBLP:journals/jal/BenderFPSS05}.

We say that a string $U$ CT-matches a string $V$ if and only if $\CT(U)=\CT(V)$.
This notion of matching is well-behaved: it is a prime example of substring consistent equivalence relations~\cite{DBLP:journals/tcs/MatsuokaAIBT16} since, if $U$ and $V$ CT-match, then, for any $1\leq i \leq j \leq \absolute{U}$, $U[i\dd j]$ and $V[i \dd j]$ CT-match as well~\cite{exactCTmatching}.

In the Cartesian tree matching (CT-matching) problem, we are given a text $T$ of length~$n$ and a pattern $P$ of length $m$, and the goal is to compute all $m$-length fragments of $T$ that CT-match $P$.
Park et al.~\cite{exactCTmatching} showed that this problem admits a linear-time solution.
An alternative linear-time algorithm, as well as practical solutions, were later presented in~\cite{DBLP:journals/tcs/SongGRFLP21}.
Related problems such as dictionary CT-matching~\cite{exactCTmatching,DBLP:journals/tcs/SongGRFLP21} and indexing for CT-matching~\cite{DBLP:conf/cpm/KimC21,DBLP:conf/spire/NishimotoFNI21,DBLP:conf/cpm/OsterkampK25} have also been studied.

\subparagraph{Approximate Cartesian tree matching.} Similarly to the case of standard string matching, requiring exact CT-matches can be quite restrictive.
For instance, a single outlier value in the pattern (arising, for example, from an extreme but short-lived event or from noise in the data) may cause all relevant occurrences in the text to be missed.
To address this, one can instead seek approximate CT-matches of~$P$ in $T$, that is, fragments of $T$ that are \emph{close} to a string $X$ such that $\CT(X)=\CT(P)$.
Here, \emph{closeness} may be measured in terms of some distance measure such as the Hamming distance or the edit distance.
We consider the low distance regime, where an integer threshold parameter $k > 0$ is provided. Then, we have to report fragments of $T$ at distance at most $k$ from a string $X$ with $\CT(X)=\CT(P)$.

Auvray et al.~\cite{DBLP:journals/corr/abs-2505-09236} recently presented an $\cO(nm)$-time algorithm for the case of one difference (such as a swap or a substitution).
Kim and Han~\cite{DBLP:journals/tcs/KimH25} studied the problem of computing the edit and Hamming distances of two strings under CT-matching.
They presented a dynamic-programming-based solution and, as corollaries, obtained
an $\cO(n^3mk)$-time algorithm for approximate CT-matching under the edit distance
and an $\cO(nmk)$-time algorithm for approximate CT-matching under the Hamming distance.
In this work, we focus on approximate CT-matching under the Hamming distance.

\subparagraph{Our contributions.} For equal-length strings $X$ and $Y$,
let $\hdist(X \leadsto Y)$ denote the minimum number of substitutions needed to transform $X$ into a string $X'$ such that $\CT(X') = \CT(Y)$. We further define $\hdist_k(X \leadsto Y) = \min \{k+1 , \hdist(X \leadsto Y)\}$. We consider the following problem.

\defproblem{\problemname}{A text $T$ of length $n$, a pattern $P$ of length $m$, and an integer threshold $k>0$.}{$\hdist_k(T[i \dd i+m) \leadsto P)$ for all $i\in [1,n-m+1]$.}

The state of the art solution of Kim and Han~\cite{DBLP:journals/tcs/KimH25} computes $\hdist_k(T[i \dd i+m) \leadsto P)$ for each ${i \in [1,n-m+1]}$ separately in $\cO(mk)$ time, resulting in $\cO(nmk)$ overall time.
In the spirit of a line of works in approximate string matching (see, e.g., \cite{DBLP:conf/soda/BringmannWK19,DBLP:conf/focs/Charalampopoulos20}), we show that this is not necessary by transferring the ``few matches or almost periodicity'' paradigm to approximate CT-matching. 
We obtain the following result.

\begin{restatable}{theorem}{mainthm}\label{thm:main}
The \problemname problem can be solved in $\cO(n \sqrt{m} \cdot k^{2.5})$ time for $k \in [1, \floor{m^{1/5}}]$ and in $\cO(nk^5)$ time for any $k \in [\floor{m^{1/5}}, m]$.
\end{restatable}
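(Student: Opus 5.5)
We follow the ``few matches or almost periodicity'' paradigm, applied to Cartesian trees. First we reduce to the standard setting. Split $T$ into $\cO(n/m)$ overlapping blocks of length $\frac32 m$ (or $2m$), and solve each block independently. This reduces the problem to showing that one block of length $\cO(m)$ can be handled in time $\cO(m^{1.5}k^{2.5})$ when $k\le m^{1/5}$, and in time $\cO(mk^5)$ otherwise. Multiplying by the $\cO(n/m)$ blocks gives the bound in \cref{thm:main}.

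For a single block we choose a pattern fragment $Q=P[1\dd \ell]$ whose length $\ell$ is a parameter, and we examine the periodicity structure of $\CT(Q)$. This relies on the new notion of Cartesian-tree periodicity: a period $p$ such that $Q[1\dd \ell-p]$ and $Q[p+1\dd \ell]$ CT-match. We aim for a structural dichotomy, lifted from the string case to CT-matching, and we expect it to be the main obstacle. Either
\begin{enumerate}
\item[(a)] $Q$ is far from every string with a small CT-period, in which case every $k$-approximate CT-occurrence of $P$ inside a block of length $\cO(m)$ must start at one of only $\cO(k)$ candidate positions (\emph{few matches}); or
\item[(b)] $Q$, and hence each approximate occurrence, is within $\cO(k)$ substitutions of a string with CT-period $q=\cO(m/k)$ (\emph{almost periodicity}).
\end{enumerate}
The difficulty is that CT-matching is not preserved under concatenation. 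Consequently, the classical arguments (Fine--Wilf, and the fact that occurrences of a periodic pattern form arithmetic progressions) must be reproved for Cartesian trees. Here the ties at period boundaries, i.e.\ the previous/next smaller values $\psv,\nsv$ crossing the period, must be controlled. We plan to prove these via the parent-distance encoding $\PD$, which turns CT-matching into ordinary string equality of $\PD$ encodings up to capping at the fragment boundary.

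In case (a) we verify each candidate position with the $\cO(mk)$ dynamic program of Kim and Han. This costs $\cO(mk^2)$ per block, which is within budget. To find the candidates, we run an exact CT-matching (\cite{exactCTmatching}) of $\cO(k)$ disjoint pieces of $Q$. By the pigeonhole principle one piece matches exactly, and non-periodicity bounds the number of its occurrences.

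In case (b) we use the approximate periodic structure of the text block. The text is also $\cO(k)$-close to the periodic structure on the relevant region, or else only few positions survive. Shifting the alignment by the period $q$ then changes $\hdist$ only at the $\cO(k)$ ``defect'' positions. We will show that $\hdist_k$ restricted to the positions $i\equiv r \pmod q$ can be computed incrementally. We maintain a dynamic-programming state (the dynamic program $\dphd$ with substitution and no-substitution variants) over the periodic body, whose cost depends only on the defects near the alignment. This gives roughly $\mathrm{poly}(k)$ work per alignment plus $\cO(\sqrt{m\cdot\mathrm{poly}(k)})$-type overhead, obtained by handling the periodic part in chunks.

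Balancing the choice of $\ell$ (taking $\ell\approx \sqrt{m/k}$-scale so that the case-(a) and case-(b) costs match) should yield $\cO(m^{1.5}k^{2.5})$ per block. For $k\ge m^{1/5}$ the analysis instead saturates at $\cO(mk^5)$, which is where the two expressions meet: $\sqrt m\,k^{2.5}=k^5$ exactly when $k=m^{1/5}$.
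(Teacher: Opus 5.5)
Your outline has the right skeleton (the standard trick, a ``few candidates versus almost periodic'' dichotomy, and verification with the $\cO(mk)$ dynamic program), but neither branch is established, and the periodic branch is missing the key idea.

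\textbf{Case (a).} The pigeonhole argument does not give $\cO(k)$ candidates. A piece of length $\ell/(k+1)$ without small CT-border-period has $\cO(mk/\ell)$ CT-occurrences in a block of length $\cO(m)$, so you get $\cO(mk^2/\ell)$ candidates. Moreover, ``$Q$ is far from every CT-periodic string'' does not imply that the individual pieces are aperiodic. An $\cO(k)$ bound would need a CT-analogue of the deep structural theorem for $k$-mismatch, which you do not prove.

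The paper instead uses a constructive dichotomy (\cref{lem:filtering_preprocess}). Either it finds $\Delta\ge 2k$ disjoint fragments of length $\ceil{m/(2\Delta)}$ forming a CT-rainbow, or a fragment of length at least $m/(4\Delta)$ with CT-block-period $p<\Delta$. In the rainbow case, at most one fragment CT-occurs at each text position, so there are $\cO(m)$ marks in total. Hence only $\cO(m/\Delta)$ alignments collect the $\Delta-k$ marks they need, for cost $\cO(m^2k/\Delta)$ (\cref{lem:aper_case}).

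Since your case (a) cost does not depend on $\ell$ and you derive no cost for case (b), your final balancing has nothing behind it. The paper balances $\cO(nmk/\Delta)$ against $\cO(nk^4\Delta)$ with $\Delta\approx\sqrt m/k^{1.5}$. For larger $k$ it takes $\Delta=2k$, and once $k\ge m^{1/4}/128$ it runs Kim--Han directly, since then $nmk=\cO(nk^5)$.

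\textbf{Case (b).} This is where the real difficulty lies, and the proposal only asserts the conclusion. The claim that shifting by $q$ changes $\hdist$ only near $\cO(k)$ defects is exactly what fails naively for Cartesian trees. A single substitution can redirect $\psv$/$\nsv$ pointers of arbitrarily distant positions. Also, the entries $\dphd(v,x)$ of the Kim--Han program are actual text values, which do not repeat under a CT-period, so there is no evident incremental update.

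The ingredients you are missing are these.
\begin{itemize}
\item The CT-block-period, which additionally places the leftmost minimum at an endpoint. This confines all $\psv$ (or all $\nsv$) pointers inside the periodic fragment; a bare CT-border-period does not guarantee this.
\item \cref{lem:no_sub_in_run}: take aligned $U\approx V$ of $r\ge 2k+1$ blocks, where each block start is a minimum of the remaining body, and where no $\psv_Y$ pointer from the right lands in the body except in its last block. Then no optimal substitution sequence touches $X[c_{k+1}\dd c_{r-k}]$.
\item \cref{lem:trim,lem:trim:nsv}: as a consequence, the middle of such a pair can be deleted from both strings without changing $\hdist_k$.
\end{itemize}
Enforcing the pointer condition requires the bookkeeping of explicitly and implicitly bad blocks. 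After that, each non-overlap alignment shrinks to strings of length $\cO(k^2p^2)$, verified in $\cO(k^3p^2)$ time. The $\cO(k^3\Delta)$ overlap alignments and the Case~I candidates (obtained by forcing a bad pattern block onto one of few bad text blocks) are verified directly. Together this gives $\cO(mk^4\Delta)$. Without a trimming statement of this kind, your case (b) yields no concrete bound.
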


Hence, we improve upon the state of the art by polynomial factors when $k = \cO(m^{1/4-\epsilon})$ for any constant $\epsilon > 0$.
An illustration of the time complexities of our algorithm and the algorithm of Kim and Han~\cite{DBLP:journals/tcs/KimH25} for the important case when $n=\Theta(m)$ is given in \cref{fig:plot}. Along the way to proving \cref{thm:main}, we develop a toolbox that we expect to find further applications in the study of Cartesian trees.

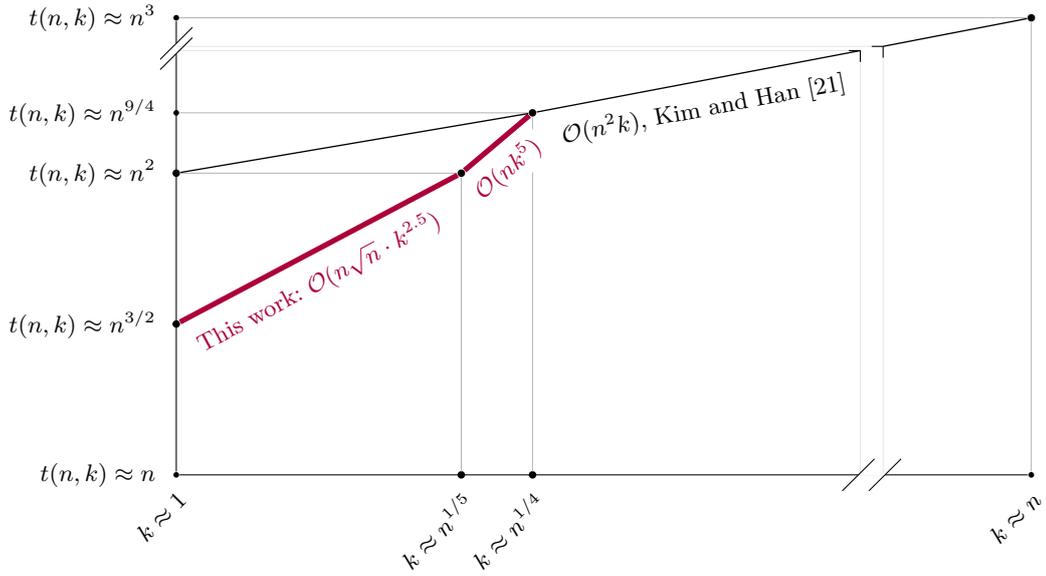
\begin{figure}
\centering\small
{\input{figs/plot.tex}}
\caption{The running time $t(n,k)$ of algorithms for \problemname as a function of $k$ for the special case when $m= \Theta(n)$, shown on a doubly logarithmic~scale.}\label{fig:plot}
\end{figure}

\subparagraph{Technical overview.} 

It is a folklore fact that a pattern $P$ of length $m$ is either periodic or it has at most one exact occurrence in a text $T$ of length $n \leq 3m/2$~\cite{BG95}.
In approximate string matching, recent algorithmic breakthroughs~\cite{DBLP:conf/soda/BringmannWK19,DBLP:conf/focs/Charalampopoulos20} were obtained via analogous combinatorial results:
either $P$ has \emph{few} approximate occurrences in~$T$, or $P$ is \emph{almost periodic}.

In the CT-matching model, it is not even clear what the right formalisation of periodicity~is. Indeed, several notions of periodicity have been suggested and studied thus far \cite{DBLP:journals/tcs/MatsuokaAIBT16,exactCTmatching}.
Here, we introduce a new notion of periodicity, namely \emph{CT-block-periodicity}, that provides strong locality guarantees.
We then exploit these guarantees to show that we either have few approximate CT-matches or the pattern is almost CT-block-periodic.
Dealing with the almost CT-block-periodic case is the main technical challenge we overcome.
The workhorse of our approach for this case is a combinatorial result (see \cref{lem:trim,lem:trim:nsv} for precise statements) that allows us to compute $\hdist_k(T[i \dd i+m) \leadsto P)$ very efficiently for almost all values of $i$ when $P$ is almost CT-block-periodic.
Essentially, we prove (under some extra assumptions) that, for equal-length strings $X = X_1 Q^r X_2$ and $Y = Y_1 Q^r  Y_2$ with $\absolute{X_1} = \absolute{Y_1}$, the value of $\hdist_k(X \leadsto Y)$ is the same for all choices of $r \geq 2k+1$.\footnote{We use $Q^r$ for simplicity in this overview, noting that standard string periodicity implies CT-block-periodicity. \cref{lem:trim,lem:trim:nsv} are more general.}
This allows us to trim aligned pairs of long CT-block-periodic fragments in $T[i \dd i+m)$ and $P$, thus obtaining short strings $F$ and $P'$ such that $\hdist_k(T[i \dd i+m) \leadsto P) = \hdist_k(F \leadsto P')$, and compute the latter value in $\cO(\absolute{F} \cdot k)$ time.
In summary, in the case when the number of (candidate) approximate CT-matches may be large, the periodic structure allows us to verify almost all of them efficiently.

\subparagraph{Other related work.}
Cartesian trees have received significant attention in part due to their relation to the range minimum query problem~\cite{DBLP:journals/jal/BenderFPSS05,DBLP:journals/algorithmica/DemaineLW14,DBLP:journals/siamcomp/FischerH11}.
Other problems that have been studied in the Cartesian tree matching model include computing the longest common subsequence~\cite{DBLP:conf/iwoca/TsujimotoSMNI24}, performing
subsequence matching~\cite{DBLP:conf/cpm/OizumiKMIA22}, computing palindromes~\cite{DBLP:journals/iandc/MienoFNIBT25}, and matching indeterminate strings~\cite{DBLP:conf/cpm/GawrychowskiGL20}.
Other works on strings under substring consistent equivalence relations include
\cite{DBLP:conf/walcom/Hendrian20,DBLP:conf/cpm/JargalsaikhanHY22,DBLP:conf/spire/KikuchiHYS20}.

\section{Preliminaries}

For integers $i,j \in \mathbb Z$, we write $[i, j] = [i, j + 1) = (i - 1, j] = (i - 1, j + 1)$ to denote $\{ h \in \mathbb Z \mid i \leq h \leq j\}$.
For a set $Z \subseteq \mathbb Z$ and an integer $y \in \mathbb Z$, we define $Z-y = \{z-y: z \in Z\}$.

A string $X$ of length $n = \absolute{X}$ is a finite sequence of $n$ characters
over a finite alphabet~$\Sigma$.
For $i, j \in [1,n]$, the $i$-th character of $X$ is denoted by $X[i]$, and we use either of $X[i\dd j]$, $X[i\dd j+1)$, $X(i-1\dd j]$, and $X(i-1\dd j+1)$ to denote the \emph{substring} $X[i]X[i+1]\cdots X[j]$ of $X$.
If $i > j$, then $X[i\dd j]$ is the empty string.
We say that a string $Y$ has an occurrence at position $i$ of $X$ if $i + \absolute{Y} \leq \absolute{X} + 1$ and $X[i\dd i + \absolute{Y}) = Y$.
We may refer to a substring $X[i\dd j]$ as a \emph{fragment} if we want to emphasize that we mean the specific occurrence of $X[i\dd j]$ at position $i$. Two fragments are disjoint if they do not share any positions.

An integer $p>0$ is a \emph{period} of $X$ if $X[i] = X[i + p]$ for all $i \in [1,\absolute{X}-p]$.
The smallest period of $X$ is referred to as \emph{the period} of $X$ and is denoted by $\per(X)$.
A \emph{run} in $X$ is a fragment $Y=X[i \dd j]$ that satisfies $p:=\per(Y) \le \absolute{Y}/2$ and is inclusion-maximal; that is, $X[i-1] \ne X[i-1+p]$ (or $i = 1$) and $X[j+1] \ne X[j+1-p]$ (or $j=\absolute{X}$).

\begin{lemma}[Periodicity Lemma (weak version)~\cite{fine1965uniqueness}]\label{lem:perlemma} 
If a string $S$ has periods $p$ and $q$ such that $p+q \leq \absolute{S}$, then $\gcd(p, q)$ is also a period of $S$.
\end{lemma}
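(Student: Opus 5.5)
The plan is the classical Fine--Wilf argument: show that the set of periods of $S$ is closed under subtraction as long as the lengths allow it, and then run the Euclidean algorithm on $p$ and $q$. Without loss of generality let $p > q$; if $p = q$ there is nothing to prove. The central claim is that $p - q$ is a period of $S$ whenever $p+q \le \absolute{S}$.

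To prove the claim, fix $i \in [1, \absolute{S} - (p-q)]$ and show $S[i] = S[i+p-q]$. We split on the position of $i$.
\begin{itemize}
\item If $i \le \absolute{S} - p$, then $i + p \le \absolute{S}$. Period $p$ gives $S[i] = S[i+p]$. Since $(i+p) - q = i+p-q \ge 1$, period $q$ gives $S[i+p] = S[i+p-q]$.
\item Otherwise $i > \absolute{S}-p \ge q$, so $i - q \ge 1$. Period $q$ gives $S[i] = S[i-q]$. Since $(i-q)+p = i+p-q \le \absolute{S}$, period $p$ gives $S[i-q] = S[i+p-q]$.
\end{itemize}
The second case uses the hypothesis $p+q \le \absolute{S}$ in the form $\absolute{S} - p \ge q$. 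Together the two cases cover every $i$, so $p-q$ is a period.

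Next I would iterate. The pair $(p-q, q)$ has sum $p \le p+q \le \absolute{S}$, so the hypothesis is preserved. We apply the claim repeatedly, replacing the larger element of the pair by the difference. The sum strictly decreases at each step, so the process terminates. Since the subtractive Euclidean algorithm preserves the gcd, it terminates with the pair $(\gcd(p,q), \gcd(p,q))$, and hence $\gcd(p,q)$ is a period. Formally this is an induction on $p+q$.

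I expect no real obstacle here. The only delicate point is the index bookkeeping in the second case, which is exactly where $p+q\le\absolute{S}$ enters. One also has to check that the hypothesis survives each reduction step, and it does because the sum only decreases.
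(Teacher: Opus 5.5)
Your proof is correct. The two-case index check shows that $p-q$ is a period exactly because $\absolute{S}-p \ge q$. Since the pair sum only decreases, the hypothesis survives each reduction, and induction on $p+q$ (the subtractive Euclidean algorithm) delivers $\gcd(p,q)$.

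The paper gives no proof of its own for this lemma; it simply quotes it from Fine and Wilf. Your argument is the standard self-contained proof of this weak version. Under the sharp hypothesis $p+q-\gcd(p,q)\le\absolute{S}$, your second case would fail for the indices $i\in(\absolute{S}-p,\,q]$. The weak hypothesis is precisely what makes that range empty, and the weak version is all the paper uses.
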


Throughout the paper, we assume that the alphabet $\Sigma$ is totally ordered, and that an order comparison takes constant time.
For a string $X$, we define its \emph{leftmost minimum} as the (positioned) character $X[j]$, where $j = \min\{\, i \in [1,\absolute{X}] : X[i] = \min_{k \in [1,\absolute{X}]} X[k] \,\}$.
In other words, $j$ is the leftmost position at which the minimum character in $X$ occurs.

\subsection{Cartesian Trees}

In this work, consistent with the literature, we ensure that Cartesian trees are uniquely defined by resolving ties in favour of leftmost positions.

\begin{definition}
The Cartesian tree of a string $X$, denoted by $\CT(X)$, is a rooted ordered binary tree with $\absolute{X}$ nodes.
The Cartesian tree of an empty string is the order-zero graph.
For a non-empty string $X$, $\CT(X)$ is the tree obtained by creating a root node and (recursively) attaching to it $\CT(X[1 \dd j))$ as a left subtree and $\CT(X(j \dd \absolute{X}])$ as a right subtree, where $X[j]$ is the leftmost minimum of $X$.
\end{definition}

\begin{definition}
We say that a string $X$ CT-matches a string $Y$ (or that $X$ and $Y$ CT-match) if and only if $\CT(X)=\CT(Y)$.
We denote that two strings $X$ and $Y$ CT-match by writing $X \approx Y$ and that they do not CT-match by writing $X \not\approx Y$.
\end{definition}

\begin{fact}\label{fact:ct_transitivity}
The CT-equivalence relation ($\approx$) is a transitive relation. That is, for any three strings $X$, $Y$, and $Z$, if $X \approx Y$ and $Y \approx Z$, then $X \approx Z$ as well.
\end{fact}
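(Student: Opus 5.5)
The claim follows directly from the definition: CT-matching is defined by equality of Cartesian trees, and equality of trees is transitive. So the plan is to unfold the definition of $\approx$ and use transitivity of equality; no structural reasoning about Cartesian trees is needed.

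Concretely, let $X$, $Y$, $Z$ be strings with $X \approx Y$ and $Y \approx Z$. By the definition of CT-matching, $X \approx Y$ means $\CT(X) = \CT(Y)$, and $Y \approx Z$ means $\CT(Y) = \CT(Z)$. Here $\CT(\cdot)$ is a well-defined function of the string, since the tie-breaking rule (leftmost minimum) makes the recursive construction deterministic. Also, "$=$" denotes equality of rooted ordered binary trees, which is an equivalence relation on such trees. Transitivity of that equality then gives $\CT(X) = \CT(Z)$, which by definition is exactly $X \approx Z$.

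The only point that needs any care is confirming that equality of rooted ordered binary trees is indeed transitive. If trees are viewed as abstract objects up to isomorphism preserving root and left/right child order, this holds because the composition of two such isomorphisms is again one. Alternatively, one can identify $\CT(X)$ with its canonical labelling on positions $[1,\absolute{X}]$, with node $i$ corresponding to $X[i]$; then equality is literal equality of, say, parent arrays, and transitivity is immediate. Either way there is no real obstacle. In passing, this also shows that $\approx$ is reflexive and symmetric, hence an equivalence relation, which is convenient later.
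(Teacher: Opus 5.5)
Your proposal is correct and matches the paper's (implicit) reasoning: the paper states this as a fact without proof because it follows immediately from defining $X \approx Y$ as $\CT(X)=\CT(Y)$ and applying transitivity of equality, exactly as you do.
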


For strings $T[1\dd n]$ (the text) and $P[1\dd m]$ (the pattern), we say that $i \in [1,n - m + 1]$ is a CT-occurrence of $P$ in $T$ (or that $P$ CT-occurs in $T$ at position $i$) if $P\approx T[i\dd i+\absolute{P})$. The CT-matching problem consists in computing all CT-occurrences of $P$ in $T$.
\begin{lemma}[\cite{exactCTmatching}]\label{lem:exact}
Given a text $T[1\dd n]$ and a pattern $P[1 \dd m]$, all CT-occurrences of $P$ in $T$ can be computed in $\cO(n + m)$ time.
\end{lemma}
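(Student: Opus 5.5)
The plan is to reduce CT-matching to a Knuth--Morris--Pratt style scan over a suitable integer encoding of the strings, as in Park et al. For a string $X$ and $i\in[1,\absolute{X}]$, let $\psv_X(i)$ be the largest $j<i$ with $X[j]\le X[i]$, or $0$ if there is none. The non-strict inequality matches the leftmost tie-breaking rule, since an earlier equal character counts as smaller. Define the \emph{parent-distance} array by $\PD_X[i]=i-\psv_X(i)$ when $\psv_X(i)>0$, and $\PD_X[i]=0$ otherwise.

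The first step is the characterisation $X\approx Y \iff \PD_X=\PD_Y$ for equal-length $X,Y$. I would prove it through the standard stack-based construction of $\CT(X)$. Scanning left to right, position $i$ pops every stack element greater than $X[i]$. It then becomes the right child of the new stack top $\psv_X(i)$ and takes the last popped element as its left child. This construction depends only on the sequence $(\psv_X(i))_i$, which $\PD_X$ encodes. Conversely, $\psv_X(i)$ can be read off the tree: it is the nearest ancestor of $i$ that lies to the left of $i$ in in-order. An induction on $i$ then gives both directions.

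The second step is a locality rule for fragments. For $S=X[a\dd b]$ and $t\in[1,b-a+1]$,
\[
\PD_S[t] =
\begin{cases}
\PD_X[a+t-1] & \text{if } \PD_X[a+t-1]\le t-1,\\
0 & \text{otherwise.}
\end{cases}
\]
This holds because the previous smaller-or-equal value inside the window is the global one whenever the global one lies inside the window, and is absent otherwise. Two consequences follow. First, $\approx$ is substring consistent: if $X\approx Y$, then $X[i\dd j]\approx Y[i\dd j]$. Second, given $\PD_T$, the encoding of any fragment at any offset can be evaluated in $\cO(1)$ time.

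The third step is the algorithm. Compute $\psv_T$ and $\PD_T$, and likewise $\PD_P$, in linear time with a monotone stack. Build a failure function $\pi$ on $P$, where $\pi[q]$ is the largest $\ell<q$ such that $P[q-\ell+1\dd q]\approx P[1\dd \ell]$. Then scan $T$ while maintaining the largest $q$ with $T[i-q\dd i-1]\approx P[1\dd q]$.

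The extension test at position $i$ uses the prefix characterisation: $T[i-q\dd i]\approx P[1\dd q+1]$ iff the shorter fragments already match (which holds by the invariant) and the capped value of $\PD_T[i]$ with threshold $q$ equals $\PD_P[q+1]$. On a mismatch or a full match, set $q\gets\pi[q]$ and repeat. The failure function itself is computed the same way by running $P$ against itself, using capped values of $\PD_P$.

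The main obstacle is justifying that KMP remains correct for $\approx$. Three points need care. The candidate alignments are nested borders in the CT sense, which follows from substring consistency together with Fact~\ref{fact:ct_transitivity}. Shifting from alignment $q$ to $\pi[q]$ never skips an occurrence, which follows from the same two properties. The character comparison is a well-defined $\cO(1)$ test, which is exactly what the locality rule provides. Once these are in place, the usual amortised argument ($q$ increases at most once per text position) gives $\cO(n+m)$ total time.
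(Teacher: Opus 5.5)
Your proposal is correct and is essentially the proof of Park et al.\ that the paper cites, since the paper takes this lemma as a black box from \cite{exactCTmatching} rather than proving it. The ingredients match: the parent-distance characterisation (\cref{lem:psv_equiv}), the fragment locality rule (\cref{fact:pd_fragment}) that makes the comparison an $\cO(1)$ capped test, and a KMP scan whose correctness rests on substring consistency and transitivity.
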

\begin{lemma}[\cite{exactCTmatching}]\label{lem:multiple}
Given a text $T[1\dd n]$ and patterns $P_1[1 \dd m_1], \cdots, P_h[1\dd m_h]$, all the CT-occurrences of all the patterns in $T$ can be computed in $\cO((n + \sum_{i = 1}^h m_i)\log h)$~time.
\end{lemma}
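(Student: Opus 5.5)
The statement is \cref{lem:multiple}: all CT-occurrences of $h$ patterns can be found in $\cO((n+\sum_i m_i)\log h)$ time. The plan is to follow the Aho--Corasick approach from the single-pattern algorithm of \cref{lem:exact}, using the parent-distance encoding of Cartesian trees.

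\textbf{The encoding.} For a string $X$, let $\PD_X[i]$ be $i-j$, where $j<i$ is the largest index with $X[j]\le X[i]$; set $\PD_X[i]=0$ if no such $j$ exists. Choosing $\le$ here matches the leftmost-minimum tie-breaking. Two facts are needed, both standard.

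First, $X\approx Y$ if and only if $\PD_X=\PD_Y$. This holds because $\CT(X)$ is the tree built incrementally by the stack algorithm, and $\PD_X$ records exactly how many stack elements are popped at each step.

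Second, the encoding updates locally under taking substrings: $\PD_{X[s\dd e]}[i-s+1]$ equals $\PD_X[i]$ if $\PD_X[i]\le i-s$, and equals $0$ otherwise. So one array over $T$ gives the encoding of every window in $\cO(1)$ time per entry.

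\textbf{Building the automaton.} Insert the strings $\PD_{P_1},\dots,\PD_{P_h}$ into a trie whose alphabet is $[0,\max_i m_i)$, storing each node's children in a balanced search tree. The trie has $\cO(\sum_i m_i)$ nodes, and each child access costs $\cO(\log h)$, since a node has at most $h$ children.

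Compute failure links by BFS as in Aho--Corasick. Let node $v$ represent the encoding of some pattern prefix $U$. Its failure link points to the node for the longest proper suffix $U'$ of $U$ whose encoding, re-encoded as a suffix via the second fact, is a trie node. When we follow a failure link, the next symbol $\PD_U[|U|+1]$ is re-encoded relative to the shorter suffix: it is kept if it is at most $|U'|$ and becomes $0$ otherwise. Output links collect the patterns that end at each node.

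\textbf{Scanning the text.} Compute $\PD_T$ in $\cO(n)$ time with a stack. Then scan $T$ exactly as in Aho--Corasick, maintaining the current depth $d$ and re-encoding each symbol against $d$ whenever a failure link is taken.

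\textbf{Main obstacle: the running time.} The bound needs the standard amortization. Each failure transition strictly decreases the depth, and each step of the scan increases the depth by at most one. So there are $\cO(n)$ failure transitions in total during the scan, and $\cO(\sum_i m_i)$ during construction. Every transition costs one $\cO(\log h)$ child lookup.

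The delicate point is correctness: the automaton must be well defined even though the re-encoding of a symbol depends on the current depth. The key observation is that, for a fixed depth $d$, re-encoding depends only on the symbol and on $d$. Hence the transition from a node is determined by that node's depth together with the raw symbol. This is exactly the argument behind \cref{lem:exact}, extended here to a trie. Reporting outputs along output links costs time proportional to the number of occurrences reported, which gives $\cO((n+\sum_i m_i)\log h)$ overall.
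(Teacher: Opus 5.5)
Your proposal is correct and matches the argument of Park et al.\ that the paper cites without proof: an Aho--Corasick automaton over the parent-distance encodings of the patterns, where each symbol is re-encoded against the current depth when failure links are followed, and $\cO(\log h)$-time child lookups account for the $\log h$ factor. One small caveat is that reporting via output links costs an additional $\cO(\mathrm{occ})$ time, which is not bounded by $(n+\sum_i m_i)\log h$ in general (for instance, with many CT-equivalent patterns); this is harmless in \cref{lem:aper_case}, where the patterns form a CT-rainbow of equal-length strings, so at most one of them CT-occurs at any position of $T$.
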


For equal-length strings $X$ and $Y$, $\hdist(X \leadsto Y)$
denotes the minimum number of substitutions needed to transform $X$ into a string $X'$ such that $X' \approx Y$.
We further define $\hdist_k(X \leadsto Y) = \min \{k+1, \hdist(X \leadsto Y)\}$.
Given a text $T[1\dd n]$ and a pattern $P[1\dd m]$, we say that $P$ has a $k$-approximate CT-occurrence at a position $i \in [1,n - m + 1]$ of $T$ if $\hdist(T[i \dd i+m) \leadsto P) \leq k$.

The following fact states that CT-matching is a substring consistent equivalence relation.

\begin{fact}[Theorem 2, \cite{exactCTmatching}]\label{fact:subeq}
For any strings $U$ and $V$ with $U \approx V$, we have $U[i \dd j] \approx V[i \dd j]$ for all $i, j \in [1,\absolute{U}]$.
\end{fact}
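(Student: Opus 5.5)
The plan is to reduce CT-matching to a characterization that visibly restricts to substrings, namely the \emph{parent-distance} (or previous-smaller-value) encoding of Park et al. For a string $X$ and a position $t$, let $\psv_X(t)$ be the largest $s<t$ with $X[s] \le X[t]$, or $0$ if no such $s$ exists. The non-strict inequality is chosen to match the leftmost-minimum tie-breaking, since an equal element to the left behaves like a smaller one. The first step is to show that $\CT(X)=\CT(Y)$ holds if and only if $\absolute{X}=\absolute{Y}$ and $\psv_X(t)=\psv_Y(t)$ for every $t$.

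I would prove this equivalence by induction on the length, using the recursive definition of $\CT$.

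\textbf{Forward direction.} Assume $\CT(X)=\CT(Y)$. Then the roots coincide, so both strings have their leftmost minimum at the same position $j$. The left subtrees are $\CT(X[1\dd j))=\CT(Y[1\dd j))$ and the right subtrees are $\CT(X(j\dd \absolute{X}])=\CT(Y(j\dd\absolute{Y}])$.
\begin{itemize}
\item For $t<j$, the value $\psv_X(t)$ is determined inside $X[1\dd j)$, so the induction hypothesis applies.
\item For $t=j$, every earlier element is strictly larger than $X[j]$, so $\psv_X(j)=0$, and likewise for $Y$.
\item For $t>j$, we have $X[j]\le X[t]$, so $\psv_X(t)\ge j$. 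It is therefore determined inside $X[j\dd\absolute{X}]$. It equals $j$ exactly when the psv of $t$ within the right part is $0$, and otherwise it equals $j$ plus the psv within the right part. Induction applied to the right part gives the claim.
\end{itemize}

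\textbf{Converse.} Assume the psv arrays agree. The position of the leftmost minimum is recovered as the largest $t$ with $\psv(t)=0$. Indeed, positions with psv equal to $0$ are exactly the prefix minima in the strict sense, and the last of them is the leftmost global minimum. Restricted to the two sides, the psv arrays are again equal, using the same shift-by-$j$ relation for the right side. Induction then gives equal subtrees.

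\textbf{Main step.} With this characterization, the statement follows. Let $U\approx V$ and fix $i\le j$; the case $i>j$ is trivial because both substrings are empty. For $t\in[i,j]$, the psv of $t$ within $U[i\dd j]$ equals $\psv_U(t)-i+1$ if $\psv_U(t)\ge i$, and equals $0$ otherwise. The same formula holds for $V$, and $\psv_U(t)=\psv_V(t)$. Hence the psv arrays of $U[i\dd j]$ and $V[i\dd j]$ coincide, so these substrings CT-match.

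I expect the main obstacle to be the careful bookkeeping of ties and index shifts in the equivalence. In particular, one must check that the non-strict $\le$ in $\psv$ is exactly what corresponds to choosing the leftmost minimum as the root.
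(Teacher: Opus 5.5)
Your proof is correct and follows essentially the same route as the paper. The paper does not prove this fact itself but takes it from Park et al.; it follows exactly as you argue, by combining the $\psv$ characterization of CT-matching (\cref{lem:psv_equiv}, which you additionally reprove by induction) with the formula expressing the $\psv$/parent-distance array of a fragment $U[i \dd j]$ in terms of that of $U$ (\cref{fact:pd_fragment}).
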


There is a natural bijection between positions of a string $X$ and nodes in $\CT(X)$.
We formalise this concept using the previous-smaller-or-equal-value function.
For a string $X$, we define
\begin{align*}
\psv_X : \quad & [1, \absolute{X}] \rightarrow [0, \absolute{X}), 
        & i \mapsto \max(\{j \in [1, i) \mid X[j] \leq X[i]\} \cup \{0\}).
\end{align*}
The parent-distance representation $\PD(X)$ of a string $X$ (see \cite{exactCTmatching}) is a length-$\absolute{X}$ string with
\[
\PD(X)[i]= 
\begin{cases}
   i - \psv_X(i)& \text{ if } \psv_X(i) \neq 0;\\
   0 &\text{otherwise.}
\end{cases}
\]
Intuitively, we have $\psv_X(i)=j$ if and only if
the left subtree of the node corresponding to $X[i]$ in $\CT(X)$
is $\CT(X(j \dd i))$.
Given either of $\psv_X$ and $\PD(X)$, one can (recursively) construct $\CT(X)$, and hence the following lemma holds.

\begin{lemma}[{\cite[Theorem 1]{exactCTmatching}}]\label{lem:psv_equiv}
    For equal-length strings $X$ and $Y$, $X \approx Y$ if and only if $\PD(X) = \PD(Y)$, or equivalently $\forall i \in [1, \absolute{X}] : \psv_X(i) = \psv_Y(i)$.
\end{lemma}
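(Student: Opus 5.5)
We prove the two directions of \cref{lem:psv_equiv} separately. We also show that $\PD(X)=\PD(Y)$ is equivalent to $\psv_X=\psv_Y$. The latter is immediate from the definition: $\PD(X)[i]$ determines $\psv_X(i)$. Indeed, $\psv_X(i)=0$ if $\PD(X)[i]=0$, and $\psv_X(i)=i-\PD(X)[i]$ otherwise. Note that $\psv_X(i)\neq 0$ gives $\PD(X)[i]=i-\psv_X(i)\geq 1$, so the two cases cannot be confused. It therefore suffices to show that $\CT(X)=\CT(Y)$ holds if and only if $\psv_X=\psv_Y$.

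The core step is a characterisation of $\psv_X$ in terms of $\CT(X)$ alone. We identify the nodes of $\CT(X)$ with positions via the in-order numbering. This is justified by induction on the recursive definition: the left subtree covers $[1,j)$, the root is $j$, and the right subtree covers $(j,\absolute{X}]$. We then claim the following. Let $L_i$ be the set of positions in the left subtree of node $i$, which is an interval $(a,i)$. Then $\psv_X(i)=a$, where $a$ is the in-order index just before the leftmost node of the subtree of $i$ (with $a=0$ if that subtree starts at position $1$). To show this, we consider the subtree of $i$. It corresponds to a fragment $X(a\dd b)$ in which $X[i]$ is the leftmost minimum. Hence every $j\in(a,i)$ has $X[j]>X[i]$, which gives $\psv_X(i)\le a$. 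For the reverse inequality, we go up from $i$ to the first ancestor $v$ of which $i$ lies in the right subtree. That ancestor is exactly position $a$, and it satisfies $X[a]\le X[i]$, since it was chosen as a leftmost minimum of a range containing $i$. If no such ancestor exists, then $a=0$. Thus $\psv_X(i)=a$. This shows that $\psv_X$ is a function of the tree shape only, so $\CT(X)=\CT(Y)$ implies $\psv_X=\psv_Y$.

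For the converse, we reconstruct $\CT(X)$ from $\psv_X$ by recursion on the interval $[1,\absolute{X}]$. The root is the position $j$ that is the leftmost minimum. We characterise it as the smallest $j$ such that $\psv_X(h)\ge j$ for all $h>j$ with $\psv_X(h)\neq 0$. Equivalently, and more simply, the root is the \emph{largest} $j$ with $\psv_X(j)=0$. Indeed, $\psv_X(j)=0$ means that $X[j]$ is strictly smaller than all earlier characters, so these $j$ are the successive prefix strict minima, and the last of them is the leftmost global minimum.

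For sub-intervals $[l,r]$ of the recursion, we use the restricted quantity $\psv_{X[l\dd r]}(h-l+1)$. This equals $\max(\psv_X(h)-l+1,0)$, because previous-smaller values inside a fragment are the global ones clipped at $l$. So the recursion only ever reads $\psv_X$, and equal $\psv$ functions yield identical trees by induction on length. The main care point is this clipping claim together with the leftmost tie-breaking (the non-strict $\le$ in $\psv$). We check it directly: $\psv_X(h)$ is the nearest $j<h$ with $X[j]\le X[h]$. If that $j$ is at least $l$, it is also the nearest such position within the fragment. Otherwise, no such position exists in $[l,h)$, and the fragment value is $0$.
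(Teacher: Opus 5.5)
Your proof is correct. The identification of the left subtree of node $i$ with $(\psv_X(i),i)$ is argued properly in both directions. The upper bound comes from $X[i]$ being the leftmost minimum of its subtree's fragment, and the lower bound from the lowest ancestor whose right subtree contains $i$. The top-down reconstruction also works: by your clipping identity, the root of $\CT(X[l\dd r])$ is the largest $h\in[l,r]$ with $\psv_X(h)<l$.

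One slip should be deleted. Your first description of the root, ``the smallest $j$ such that $\psv_X(h)\ge j$ for all $h>j$ with $\psv_X(h)\neq 0$'', is false, because $j=1$ always satisfies it (for $X=[3,1,2]$ the root is position $2$). It becomes correct if you drop the clause ``with $\psv_X(h)\neq 0$''. You only use the second description (the last $j$ with $\psv_X(j)=0$), so nothing depends on the false one, but the word ``equivalently'' is wrong as written.

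The paper does not prove this lemma itself. It cites Park et al.\ and only remarks that $\psv_X(i)=j$ exactly when the left subtree of node $i$ is $\CT(X(j\dd i))$, and that $\CT(X)$ can be rebuilt recursively from $\psv_X$. Your argument is a rigorous version of exactly these two remarks. The one detailed proof of this type in the paper, for the $\nsv$ analogue (\cref{lem:nsv_equiv}), instead follows Park et al.: it inducts on the length, passing from $X[1\dd i]$ to $X[1\dd i+1]$ and placing the new character on the right spine of the tree.

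Each route has its own advantages:
\begin{itemize}
\item Your top-down route mirrors the recursive definition of $\CT$, gives an explicit tree-shape formula for $\psv_X$, and never has to analyse how appending a character restructures the tree. The price is the clipping identity for fragments, which is just the $\psv$ version of \cref{fact:pd_fragment}.
\item The incremental route avoids fragments altogether, since appending a character does not change $\psv$ at earlier positions. It also corresponds directly to the left-to-right stack-based construction of Cartesian trees.
\end{itemize}
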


We provide an example of a Cartesian tree $\CT(X)$, the function $\psv_X$, the array $\PD(X)$, and the distance function $\hdist$ in \cref{fig:ct-tree}.
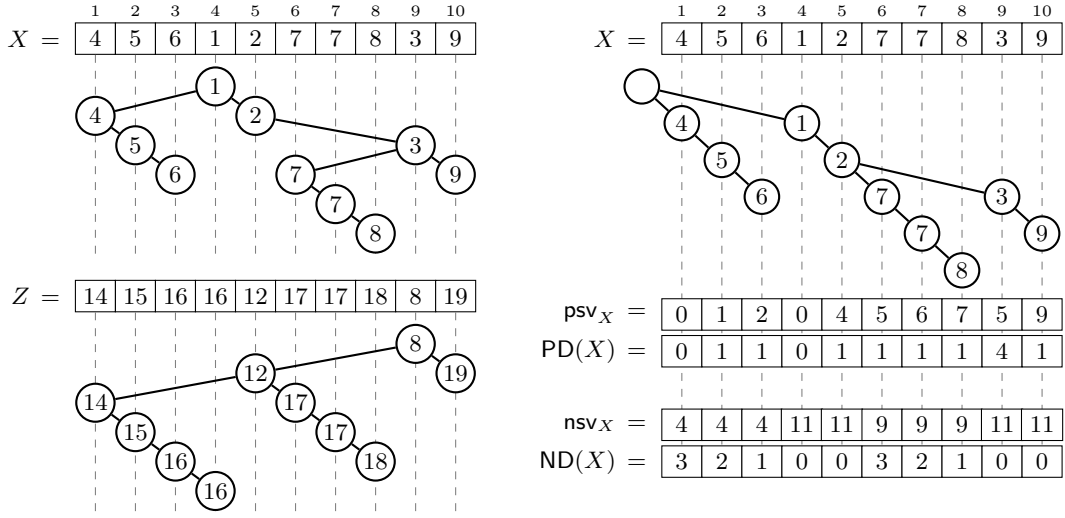
\begin{figure}
\centering
{\input{figs/CT-tree-tables}}
\caption{The Cartesian tree $\CT(X)$ of
$X = [4, 5, 6, 1, 2, 7, 7, 8, 3, 9]$ (top left), the tree induced by $\psv_X$ (top right), and the values of $\psv_X$, $\nsv_X$, $\PD(X)$, and $\ND(X)$ (bottom right). For $Y = [14, 15, 16, 11, 12, 17, 17, 18, 13, 19]$, we have $X \approx Y$, illustrating that different sequences can yield the same Cartesian tree. In contrast, the Cartesian tree of $
Z = [14, 15, 16, 16, 12, 17, 17, 18, 8, 19]$ (bottom left) is different, i.e., we have $Z \not\approx X$. However, we have $\hdist(X \leadsto Z) = 2$ because we can substitute the values of $X$ at positions 4 and 9 to obtain $
X' = [4, 5, 6, \mathbf{6}, 2, 7, 7, 8, \mathbf{1}, 9]\approx Z$.}
\label{fig:ct-tree}
\end{figure}%
For a string $X$, analogously to $\psv_X$, we define the
next-smaller-value function, which captures the structure of the Cartesian tree as well:
\begin{align*}
\nsv_X : \quad & [1, \absolute{X}] \to [2, \absolute{X}+1], 
        & i \mapsto \min\bigl(\{\, j \in (i, \absolute{X}] \mid X[j] < X[i] \,\} \cup \{\absolute{X}+1\}\bigr).
\end{align*}%
We also define a length-$\absolute{X}$ string $\ND(X)$ by
\[
\ND(X)[i] =
\begin{cases}
\nsv_X(i) - i, & \text{if } \nsv_X(i) \neq \absolute{X}+1, \\
0, & \text{otherwise}. 
\end{cases}
\]
An analogue of \cref{lem:psv_equiv} for $\nsv_X$ and $\ND(X)$ is provided
as \cref{lem:nsv_equiv} in \cref{app:symmetry}.

\section{Cartesian Periodicity}

Several definitions of periodicity for strings with respect to Cartesian tree matching have been studied (see~\cite{DBLP:journals/tcs/MatsuokaAIBT16,exactCTmatching}).
Here, we use the following definitions.

\begin{definition}[\cite{DBLP:journals/tcs/MatsuokaAIBT16}]
    A string $P[1 \dd m]$ has a CT-border-period $p \in [1, m]$ if and only if $P[1 \dd m - p] \approx P(p \dd m]$.
\end{definition}

\begin{lemma}\label{lem:period:minimum_in_border}
    The leftmost minimum of a string $P[1 \dd m]$ with CT-border-period $p \in [1, \floor{\frac m2}]$ lies outside $P(p \dd m-p]$.
\end{lemma}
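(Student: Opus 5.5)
Let $X = P[1 \dd m-p]$ and $Y = P(p \dd m]$, so $X \approx Y$ by assumption. Suppose, for contradiction, that the leftmost minimum of $P$ is $P[j]$ with $j \in (p, m-p]$. Since $p \le m/2$, this range is nonempty, and position $j$ lies in both fragments. In $X$ it is position $j$. In $Y$ it is position $j-p$.

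The first step is to read off $\psv$ and $\PD$ at these two positions.
\begin{itemize}
\item In $Y$: $P[j]$ is a global minimum of $P$, and every earlier position of $Y$ is a position in $(p, j)$ of $P$. All such positions satisfy $P[i] > P[j]$, because $j$ is the leftmost minimum. Hence $\psv_Y(j-p) = 0$.
\item In $X$: the same reasoning gives $\psv_X(j) = 0$.
\end{itemize}
By \cref{lem:psv_equiv}, $\psv_X = \psv_Y$ pointwise. Therefore $\psv_X(j-p) = \psv_Y(j-p) = 0$ and $\psv_Y(j) = \psv_X(j) = 0$, where the second pair requires $j \le m-p$, which holds.

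The key observation is that $\psv_X(i) = 0$ means $X[i]$ is strictly smaller than every earlier character, i.e.\ $X[i]$ is a strict prefix minimum. So in $X$, both position $j-p$ and position $j$ are strict prefix minima. Since $j-p < j$, this gives $P[j] = X[j] < X[j-p] = P[j-p]$, which is consistent with $j$ being the leftmost minimum, so no contradiction yet. I must use the other direction instead. In $Y$, position $j$ of $Y$ corresponds to $P[j+p]$, and $\psv_Y(j) = 0$ says $P[j+p] < P[i]$ for all $i \in (p, j+p)$. This range includes $i = j$, so $P[j+p] < P[j]$. That contradicts the minimality of $P[j]$ in $P$. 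Here $j+p \le m$ because $j \le m-p$, so the position is valid.

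The argument thus has three steps: (1) $\psv_X(j) = 0$ because $j$ is the leftmost minimum of $P$; (2) transfer this via CT-matching to get $\psv_Y(j) = 0$; (3) interpret $\psv_Y(j) = 0$ as $P[j+p] < P[j]$, which contradicts the choice of $j$. The transfer in step (2) is what uses $j \le m-p$, since position $j$ must exist in $Y$. The condition $j > p$ is needed in step (3), so that $P[j]$ lies inside $Y$ and strictly before position $j+p$. The main care point is the strict versus non-strict comparison in $\psv$ (it uses $\le$). A value $\psv = 0$ yields strict inequality against all earlier characters, which is exactly what the contradiction needs.
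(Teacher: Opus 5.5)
Your proof is correct and is essentially the paper's argument. The paper notes that $P[j]$ would be the root of both $\CT(P[1\dd m-p])$ and $\CT(P(p\dd m])$, with left subtrees of different sizes $j-1$ and $j-p-1$; you express the same offset mismatch through $\psv$ instead, transferring $\psv_X(j)=0$ via \cref{lem:psv_equiv} to $\psv_Y(j)=0$ and so obtaining $P[j+p]<P[j]$. Two small cleanups: the detour through position $j-p$ can be dropped, and the range $(p, m-p]$ is empty when $p=m/2$ (so it is not always nonempty), but then there is simply nothing to prove.
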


\begin{proof}
    Assume that $i \in (p, m-p]$ is the position of the leftmost minimum of $P$.
    Then $P[i]$ is also the leftmost minimum of both $P[1 \dd m-p]$ and $P(p \dd m]$, and hence it is the root of both $\CT(P[1 \dd m-p])$ and $\CT(P(p \dd m])$. However, in this case, the left subtree under the root of $\CT(P[1 \dd m-p])$ is of size $i - 1$, while the the left subtree under the root of $\CT(P(p \dd m])$ is of size $i - p - 1$.
    This contradicts the assumption that $P[1 \dd m-p] \approx P(p \dd m]$.
\end{proof}

\begin{definition}
    A string $P[1 \dd m]$ has a CT-block-period  $p \in [1, m]$ if and only if
    \begin{itemize}
        \item $p$ divides $m$,
        \item $p$ is a CT-border-period of $P$, and
        \item the leftmost minimum of $P$ is either $P[1]$ or $P[m]$.
    \end{itemize}
\end{definition}

\begin{remark}
The above definition restricts the definition of a \emph{full period} from \cite{PARK2020181} by imposing the third condition.
This condition crucially ensures that, for a CT-block-periodic fragment $X[i\dd j]$ of a string $X$, either $\psv_X[k] \in [i,j)$ for all $k \in (i,j]$ or $\nsv_X[k] \in (i,j]$ for all $k \in [i,j)$, thus providing locality guarantees.
\end{remark}

\begin{lemma} \label{lem:period:border_to_block}
    Given $P[1 \dd m]$ and a positive integer $p < m/3$ that is a CT-border-period of~$P$, we can compute a fragment of length at least $m - 2p$ that has CT-block-period $p$ in $\cO(p)$ time.
\end{lemma}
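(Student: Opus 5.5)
The plan is to anchor the fragment at the leftmost minimum of $P$, which \cref{lem:period:minimum_in_border} confines to the two length-$p$ ends of $P$, and then cut a fragment whose length is a multiple of $p$ starting or ending there.

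\textbf{Locating the minimum in $\cO(p)$ time.} Since $p < m/3 \le \floor{m/2}$, \cref{lem:period:minimum_in_border} applies: the leftmost minimum $P[j]$ of $P$ satisfies $j \in [1,p]$ or $j \in (m-p,m]$. These two ranges are disjoint because $m-p > 2p$. We scan $P[1\dd p]$ and $P(m-p\dd m]$ and compute the leftmost minimum of each range. We take the smaller of the two values; on a tie we take the one in $[1,p]$. By the lemma, this gives the leftmost minimum position $j$ of the whole of $P$ in $\cO(p)$ time.

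\textbf{Case $j \le p$.} Let $t = \floor{(m-j+1)/p}$ and output $F = P[j \dd j+tp)$. Its length is $tp > m-j+1-p \ge m-2p$, so $|F| \ge m-2p$. In particular $t \ge 2$, since $m-2p > p$.
\begin{itemize}
\item $p$ divides $|F|$ by construction.
\item $p$ is a CT-border-period of $F$. From $P[1\dd m-p] \approx P(p\dd m]$ and \cref{fact:subeq}, restricting both sides to positions $[j, j+tp-p)$ gives $P[j\dd j+tp-p) \approx P[j+p \dd j+tp)$. This is exactly $F[1\dd |F|-p] \approx F(p\dd |F|]$.
\item The leftmost minimum of $F$ is $F[1]$. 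Indeed, $P[j]$ is a global minimum of $P$ and it is the first position of $F$.
\end{itemize}

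\textbf{Case $j > m-p$.} By symmetry, let $t = \floor{j/p}$ and output $F = P(j-tp \dd j]$. Its length is $tp > j-p > m-2p$. Divisibility holds by construction, and the border property again follows from \cref{fact:subeq} applied to the aligned positions. For the third condition we must show that the leftmost minimum of $F$ is its last character $F[|F|] = P[j]$. Since $j$ is the \emph{leftmost} minimum of $P$, every $P[i]$ with $i<j$ is strictly larger than $P[j]$. Hence $P[j]$ is the unique minimum of $F$, and so it is the leftmost one.

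\textbf{Main obstacle.} The delicate point is the running time: scanning all of $P$ would cost $\cO(m)$. The argument therefore relies on \cref{lem:period:minimum_in_border} to restrict the search to $2p$ positions. It also relies on the tie-breaking between the two ranges to recover the true leftmost minimum, which the third condition needs in the case $j > m-p$. The fragment itself is output as a pair of endpoints in $\cO(1)$ time.
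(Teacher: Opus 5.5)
Your proposal is correct and follows essentially the paper's approach: use \cref{lem:period:minimum_in_border} to confine the leftmost minimum to the first or last $p$ positions, cut a fragment whose length is a multiple of $p$ that starts or ends at that minimum, and get the CT-border-period from \cref{fact:subeq}. The differences are minor: you take the longest such fragment ($t=\floor{(m-j+1)/p}$ or $t=\floor{j/p}$ blocks), whereas the paper uses the fixed length $m' = p\ceil{m/p}-2p$, and you spell out the $\cO(p)$ two-range scan with its tie-breaking rule, which the paper leaves implicit.
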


\begin{proof}
    \cref{fact:subeq} implies that any substring of $P$ of length at least $p$ has CT-border-period~$p$.
    Let $m' = p \cdot \ceil{\frac{m}{p}} - 2p$ and note that $p < m - 2p \leq m' < m - p$.
    By \cref{lem:period:minimum_in_border}, there exists an offset $a \in [0, p)$ such that the leftmost minimum of $P$ is either $P[1 + a]$ or $P[m - a]$.
    If the leftmost minimum is $P[1 + a]$, then the fragment $P[1 + a \dd m' + a]$ has CT-block-period~$p$. 
   Otherwise, the fragment $P(m - a - m' \dd m - a]$ has CT-block-period~$p$.
\end{proof}

\begin{lemma}\label{lem:blockper_chain}
    Consider a string $P[1 \dd m]$ that has a CT-block-period $p \in [1, \floor{\frac m2})$ and such that $P[1]$ is the leftmost minimum of~$P[1 \dd m-p]$.
    For $i \in [1, m/p]$, let $c_i = (i-1) \cdot p + 1$.
    For all $i \in [1, m/p]$, $P[c_i]$ is the leftmost minimum of $P[c_i \dd m]$.
\end{lemma}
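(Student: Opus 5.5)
Let $q = m/p$ denote the number of blocks. I will argue by induction on $i$, transporting the ``leftmost minimum at the start'' property one block to the right at each step via the CT-border-period. The tool is the following observation, which follows from \cref{fact:subeq} and the recursive definition of $\CT$. If $U \approx V$, then for every range $[a,b]$ the position of the leftmost minimum of $U[a \dd b]$ equals that of $V[a \dd b]$, because both are the root of the same Cartesian tree $\CT(U[a\dd b]) = \CT(V[a \dd b])$.

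The base case $i=1$ asks that $P[1]$ be the leftmost minimum of all of $P$. The hypothesis gives that $P[1]$ is the leftmost minimum of $P[1 \dd m-p]$. Since $p$ is a CT-block-period, the leftmost minimum of $P$ is $P[1]$ or $P[m]$. Suppose it were $P[m]$. Then $P[m] < P[j]$ for all $j<m$, so $P[m]$ is the unique minimum of $P(p \dd m]$, at its last position. By the observation applied to $P[1\dd m-p] \approx P(p \dd m]$, the leftmost minimum of $P[1 \dd m-p]$ would then be at its last position $m-p$. This position differs from $1$ because $p < \floor{m/2}$ gives $m-p>1$, a contradiction. Hence $P[1]$ is the leftmost minimum of $P$.

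For the inductive step, assume $P[c_i]$ is the leftmost minimum of $P[c_i \dd m]$ for some $i < q$. Apply the observation with $U = P[1 \dd m-p]$, $V = P(p \dd m]$, and the range $[c_i, m-p]$ in local coordinates. This range is nonempty since $c_i \le m-p$ for $i<q$.
\begin{itemize}
\item In $U$, this range is the fragment $P[c_i \dd m-p]$. Its leftmost minimum is at its first position, because it is a prefix of $P[c_i \dd m]$ that contains $P[c_i]$.
\item In $V$, the same range corresponds to $P[c_i+p \dd m] = P[c_{i+1} \dd m]$.
\end{itemize}
The observation therefore says that the leftmost minimum of $P[c_{i+1} \dd m]$ is at its first position, namely $P[c_{i+1}]$. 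This completes the induction.

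The only delicate step is the base case, because the hypothesis only concerns $P[1\dd m-p]$ and we must exclude a strict global minimum at $P[m]$. This is exactly where the third condition of the CT-block-period definition and the bound $p<\floor{m/2}$ are used. The inductive step itself is a routine application of substring consistency.
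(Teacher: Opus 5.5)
Your proof is correct. Its inductive step is essentially the paper's, but your base case uses a different argument.

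\textbf{Inductive step.} The paper's step ``$P_{i-1}[1\dd|P_i|]\approx P_i$, hence the leftmost minimum of $P_i$ is $P_i[1]$'' is exactly your observation applied to the range $[c_i,m-p]$ of $P[1\dd m-p]\approx P(p\dd m]$. The paper carries a stronger hypothesis: it re-verifies at every step that each suffix $P[c_i\dd m]$ itself has CT-block-period $p$, checking the border relation and divisibility. You only use the single global border relation, which is leaner and suffices for the stated conclusion.

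\textbf{Base case.} You invoke the third condition of the CT-block-period definition and exclude $P[m]$ by contradiction. The paper argues directly instead:
\begin{itemize}
\item by the border relation, $P[p+1]$ is the leftmost minimum of $P(p\dd m]$;
\item $p+1\le m-p$ because $p<m/2$, so $P[1]\le P[p+1]$;
\item hence $P[1]\le P[j]$ for all $j\in[1,m]$.
\end{itemize}
The paper's route shows that the third condition, and indeed divisibility, are not needed for this lemma; the border period and $p<m/2$ suffice. So your closing remark that the base case is \emph{exactly where} the third condition is used is true of your argument, but that condition is not essential to the statement.
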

\begin{proof}
    For $i \in [1,m/p]$ denote $P_i := P[c_i \dd m]$.
    We prove by induction the more general statement that $p$ is a CT-block-period of $P_{i}$ for all $i$ with $P_{i}[1]$ being the leftmost minimum of~$P_{i}$.
    The induction hypothesis holds for $i=1$ since $P[1]$ is the leftmost minimum of $P[1 \dd m-p]$ and is therefore smaller than $P[p+1]$ (since $p<m/2$), where  $P[p+1]$ is the leftmost minimum of $P[p+1 \dd m]$.
    Now, consider any $i \in [2,m/p]$ and suppose that $p$ is a CT-block-period of $P_{i-1}$ with $P_{i-1}[1]$ being the leftmost minimum of~$P_{i-1}$.
    First, since $p$ is a CT-block-period of $P_{i-1}$, we have $P_{i-1}[1 \dd \absolute{P_i}] \approx P_i$ and hence the leftmost minimum of~$P_i$ is $P_i[1]$.
    Further, \cref{fact:subeq} implies that
    \[P[c_i \dd m - p] = P_{i-1}[1 + p\dd \absolute{P_{i-1}} - p] \approx 
    P_{i-1}[1+2p\dd \absolute{P_{i-1}}] = P[c_i + p \dd m]\]
    and hence $p$ is a CT-border-period of $P_{i}$.
    Finally, since $p$ divides $\absolute{P}$, it also divides $\absolute{P_i} = \absolute{P} - (i-1)p$.
    This completes the induction and the statement follows.
\end{proof}

An analogue of \cref{lem:blockper_chain} for the case when $P[m]$ is the leftmost minimum of~$P$ is given
as \cref{lem:blockper_chain_v2} in \cref{app:symmetry}.

\begin{fact}\label{fact:pd_fragment}
Consider a string $X$.
The parent-distance representation 
of a fragment $X[i \dd j]$ of $X$ is given by, for $t \in [1, j - i + 1]$,
\[
\PD(X[i \dd j])[t] =
\begin{cases}
  0, & \text{if } \PD(X)[i+t-1] \geq t, \\[6pt]
  \PD(X)[i+t-1], & \text{otherwise}.
\end{cases}
\]
If $X[i]$ is a leftmost minimum of $X[i \dd j]$, then $\PD(X)(i\dd j]$ is a suffix of $\PD(X[i \dd j])$.
\end{fact}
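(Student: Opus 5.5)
The plan is to compute $\psv_{X[i\dd j]}$ directly from $\psv_X$ and then translate the result into parent distances. Let $Y = X[i \dd j]$ and fix $t \in [1, j-i+1]$; write $s = i+t-1$ for the corresponding position in $X$. By definition,
\[
\psv_Y(t) = \max(\{t' \in [1,t) \mid Y[t'] \le Y[t]\} \cup \{0\}).
\]
Under the index shift $t' \mapsto i+t'-1$, the set $\{t' \in [1,t) \mid Y[t'] \le Y[t]\}$ is exactly $\{j' \in [i, s) \mid X[j'] \le X[s]\}$, shifted by $i-1$. This is the set $\{j' \in [1,s) \mid X[j'] \le X[s]\}$ intersected with $[i,s)$. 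Its maximum is therefore $\psv_X(s)$ when $\psv_X(s) \ge i$. When $\psv_X(s) < i$ (including the value $0$), no element of that set lies in $[i,s)$ and the set is empty.

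The second step converts this into $\PD$ values. If $\psv_X(s) \ge i$, then $\psv_Y(t) = \psv_X(s) - i + 1 \ge 1$. Hence $\PD(Y)[t] = t - \psv_Y(t) = s - \psv_X(s) = \PD(X)[s]$, and this value is less than $s - i + 1 = t$. If $\psv_X(s) = 0$, then $\PD(X)[s] = 0 < t$, but $\psv_Y(t) = 0$, so $\PD(Y)[t] = 0 = \PD(X)[s]$ and the ``otherwise'' branch is still correct. If $0 < \psv_X(s) < i$, then $\PD(X)[s] = s - \psv_X(s) \ge s - i + 1 = t$, and $\PD(Y)[t] = 0$. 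Conversely, $\PD(X)[s] \ge t \ge 1$ forces $\psv_X(s) \neq 0$ and $\psv_X(s) \le s - t = i-1$. So the condition $\PD(X)[s] \ge t$ is equivalent to $0 < \psv_X(s) < i$, which gives exactly the case split in the statement.

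For the final claim, assume $X[i]$ is a leftmost minimum of $X[i\dd j]$ and take $t \ge 2$. Then $X[i] \le X[s]$ with $i < s$, so $\psv_X(s) \ge i$. We are in the first case, and $\PD(Y)[t] = \PD(X)[s]$. Thus $\PD(Y)[2 \dd j-i+1] = \PD(X)(i \dd j]$, which is a suffix of $\PD(Y)$.

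There is no real obstacle here. The only care needed is with the boundary case $\psv_X(s) = 0$, which the convention $\PD = 0$ absorbs.
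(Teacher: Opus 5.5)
Your argument is correct and complete. The identification of $\psv_{X[i\dd j]}(t)$ with either $\psv_X(s)-i+1$ or $0$, the exhaustive three-way case split (including the boundary case $\psv_X(s)=0$), and the derivation of the suffix claim all check out.

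The paper states this as a Fact without proof, so there is no argument there to compare against; your direct translation through $\psv$ is the natural routine verification. As your argument shows, the suffix claim only needs $X[i]$ to be \emph{a} minimum of $X[i\dd j]$, not necessarily the leftmost one.
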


\begin{lemma}\label{lem:no_overlaps}
Consider a string $X$ and a fragment $X[i \dd j]$ with leftmost minimum $X[i]$.
Then, $p < j - i + 1$ is a CT-block-period of $X[i\dd j]$ if and only if $p$ is a period of $\PD(X)(i\dd j]$ and $p$ divides $j-i+1$.
\end{lemma}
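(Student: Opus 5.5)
Let $Y = X[i \dd j]$ and $L = j-i+1$. Since $Y[1] = X[i]$ is the leftmost minimum of $Y$, the third condition of a CT-block-period holds automatically. Hence $p$ is a CT-block-period of $Y$ if and only if $p$ divides $L$ and $p$ is a CT-border-period of $Y$, that is, $Y[1 \dd L-p] \approx Y(p \dd L]$. The divisibility condition appears on both sides of the claimed equivalence. It therefore suffices to show, assuming $p \mid L$ (in particular $p \le L/2$ since $p<L$), that $Y[1\dd L-p] \approx Y(p \dd L]$ holds if and only if $p$ is a period of $\PD(X)(i \dd j]$. By \cref{fact:pd_fragment}, $\PD(X)(i\dd j] = \PD(Y)[2 \dd L]$. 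So we must show that $Y[1\dd L-p]\approx Y(p\dd L]$ if and only if $\PD(Y)[t] = \PD(Y)[t+p]$ for all $t \in [2, L-p]$.

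Using \cref{lem:psv_equiv}, the CT-match is equivalent to $\PD(Y[1\dd L-p]) = \PD(Y(p\dd L])$. Apply \cref{fact:pd_fragment} to both fragments of $Y$. Since $Y[1]$ is the global minimum, every position $t\ge 2$ has $\psv_Y(t)\ge 1$. Hence $\PD(Y[1 \dd L-p])[t] = \PD(Y)[t]$ for $t\ge2$, and the first entry is $0$. For the second fragment, $\PD(Y(p\dd L])[t] = \PD(Y)[t+p]$ if $\PD(Y)[t+p] < t$, and $0$ otherwise. At $t=1$ both sides are $0$, so the CT-match is equivalent to requiring, for every $t\in[2,L-p]$, that $\PD(Y)[t]$ equals $\PD(Y)[t+p]$ when $\PD(Y)[t+p]<t$, and equals $0$ otherwise.

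($\Leftarrow$) Suppose $\PD(Y)[t]=\PD(Y)[t+p]$ for all $t \in [2,L-p]$. Since $\psv_Y(t)\ge 1$, we have $\PD(Y)[t]\le t-1<t$. Thus $\PD(Y)[t+p] = \PD(Y)[t] < t$ and is nonzero, so the first case applies and the fragments' $\PD$s agree.

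($\Rightarrow$) This is the main obstacle. We must rule out the second case, where $\PD(Y)[t+p]\ge t$ and the truncated value $0$ is compared with $\PD(Y)[t]\neq 0$. In that case the equality forces $\PD(Y)[t]=0$, which is impossible for $t\ge 2$ because $\psv_Y(t)\ge1$. Hence the first case always holds and $\PD(Y)[t] = \PD(Y)[t+p]$, which gives periodicity. The only real care needed is the index bookkeeping in \cref{fact:pd_fragment}, including the shift by $i-1$ between $X$ and $Y$. Divisibility of $L$ by $p$ is carried unchanged through both directions.
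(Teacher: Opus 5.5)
Your proof is correct and follows essentially the same route as the paper: you reduce the CT-border-period condition via \cref{lem:psv_equiv} to $\PD(Y[1\dd L-p]) = \PD(Y(p\dd L])$, and you relate both sides to $\PD(X)$ through \cref{fact:pd_fragment}. The only difference is that you combine the explicit truncation formula of \cref{fact:pd_fragment} with $\PD(Y)[t]\in[1,t-1]$ for $t\ge 2$ to see directly that no entry of $\PD(Y(p\dd L])$ is truncated to $0$; this replaces the paper's separate argument (by contradiction, in the ($\Leftarrow$) direction) that $X[i+p]$ is the leftmost minimum of $X[i+p\dd j]$.
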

\begin{proof}
($\Rightarrow$:) By the definition of a CT-block-period, we have that $p$ divides $j-i+1$ and that
$X[i \dd j-p] \approx X[i+p \dd j]$.
By \cref{lem:psv_equiv}, this equality implies that
\begin{equation}\label{eq:simple}
\PD(X[i \dd j-p]) = \PD(X[i+p \dd j]).
\end{equation}
Further, by \cref{fact:pd_fragment}, since $X[i]$ and $X[i+p]$ are the leftmost minima of $X[i \dd j-p]$ and $X[i+p \dd j]$, respectively, we have that
$\PD(X)(i \dd j-p]$ is a suffix of $\PD(X[i \dd j-p])$ and that
$\PD(X)(i+p \dd j]$ is a suffix of $\PD(X[i+p \dd j])$.
By \eqref{eq:simple}, we then have $\PD(X)(i \dd j-p] = \PD(X)(i+p \dd j]$
and hence $p$ is a period of $\PD(X)(i \dd j]$.

($\Leftarrow$:)
We first argue that it suffices to show that $X[i+p]$ is the leftmost minimum of $X[i+p \dd j]$.
If this is indeed the case, we have that $\PD(X)(i+p\dd j]=\PD(X)(i \dd j-p]$ is a suffix of $\PD(X[i+p \dd j])$ by \cref{fact:pd_fragment}. This, together with the fact that $\PD(X[i \dd j-p])[1] = 0 = \PD(X[i+p \dd j])[1]$, implies that $\PD(X[i \dd j-p]) = \PD(X[i+p \dd j])$.
We hence have that $X[i \dd j-p] \approx X[i+p \dd j]$ by \cref{lem:psv_equiv}.
Since $p$ divides $j-i+1$ and $X[i]$ is the leftmost minimum of $X[i\dd j]$, $p$ is a CT-block-period of $X[i\dd j]$.

Now, suppose for the sake of contradiction, that the leftmost minimum of $X[i+p \dd j]$ is at some position $z \in (i+p, j]$.  
Then, $\psv_X(z) = z - \PD(X)[z] < i+p$.
Since $X[i]$ is the leftmost minimum of $X[i \dd j]$, $\psv_X(z) \in [i,i+p)$.
Since $\PD(X(i \dd j])$ is periodic with period $p$, we have 
$\PD(X)[z] = \PD(X)[z-p]$.
Therefore, our assumption implies that $\psv_X(z-p) = (z-p) - \PD(X)[z-p] = \psv_X(z) - p < i$,
which contradicts the assumption that $X[i]$ is the leftmost minimum of $X[i \dd j]$.
\end{proof}

An analogue of the above lemma for the case when the leftmost minimum of $X[i \dd j]$ is $X[j]$ is given
as \cref{lem:no_overlaps_v2} in \cref{app:symmetry}.

\begin{corollary}\label{cor:focusonp}
For equal-length strings $X$ and $Y$ that have a common CT-block-period $p \in [1, \floor{\absolute{X}/2}]$, we have $X \approx Y$ if and only if $X[1 \dd 2p] \approx Y[1 \dd 2p]$.
\end{corollary}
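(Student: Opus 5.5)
We prove the two directions separately. The forward direction is immediate: if $X \approx Y$, then \cref{fact:subeq} gives $X[1 \dd 2p] \approx Y[1 \dd 2p]$. For the converse, write $m = \absolute{X}$ and assume $X[1 \dd 2p] \approx Y[1 \dd 2p]$. If $2p = m$ there is nothing to prove, so assume $p < m/2$. Since $p$ divides $m$, this gives $m \ge 3p$.

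By the definition of a CT-block-period, the leftmost minimum of $X$ is $X[1]$ or $X[m]$. First treat the case where it is $X[1]$. The first step is to show that $Y[1]$ is then the leftmost minimum of $Y$.
\begin{itemize}
\item Since $X[1]$ is the leftmost minimum of $X[1 \dd 2p]$ and $Y[1 \dd 2p] \approx X[1 \dd 2p]$ (equal roots), $Y[1]$ is the leftmost minimum of $Y[1 \dd 2p]$.
\item Apply \cref{fact:subeq} to the border relation $Y[1 \dd m-p] \approx Y(p \dd m]$, restricted to the windows of length $2p$ starting at $1 + (t-1)p$. This gives $Y[1+(t-1)p \dd (t+1)p] \approx Y[1+tp \dd (t+2)p]$ whenever $(t+2)p \le m$.
\item By induction on $t$, $Y[1+tp]$ is the leftmost minimum of the window $Y[1+tp \dd (t+2)p]$. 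Since consecutive windows overlap in position $1+tp$, we get $Y[1] \le Y[1+p] \le Y[1+2p] \le \cdots$, and each of these is at most every entry of its window.
\item Because $p \mid m$, these windows cover $[1,m]$. Hence $Y[1] \le Y[j]$ for all $j$, so $Y[1]$, being leftmost, is the leftmost minimum of $Y$.
\end{itemize}

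Now apply \cref{lem:no_overlaps} to $X$ and to $Y$, each with $i=1$ and $j=m$. Both $\PD(X)(1 \dd m]$ and $\PD(Y)(1 \dd m]$ have period $p$. By \cref{fact:pd_fragment}, $\PD(X)[1 \dd 2p] = \PD(X[1 \dd 2p])$: the first entries are both $0$, and the remaining entries agree because $X[1]$ is the leftmost minimum of $X[1 \dd 2p]$. The same holds for $Y$. By \cref{lem:psv_equiv}, the hypothesis $X[1 \dd 2p] \approx Y[1 \dd 2p]$ then gives $\PD(X)[1 \dd 2p] = \PD(Y)[1 \dd 2p]$. This common prefix contains a full period $(1 \dd p+1]$, since $2p \ge p+1$, so $p$-periodicity extends the equality to all of $\PD(X) = \PD(Y)$. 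Therefore $X \approx Y$ by \cref{lem:psv_equiv}.

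The case where the leftmost minimum of $X$ is $X[m]$ is symmetric, and by the same argument the case where the leftmost minimum of $Y$ is $Y[m]$ is covered too. We run the window-chaining argument from the right, using $\nsv$ and $\ND$ together with the appendix analogues \cref{lem:no_overlaps_v2} and \cref{lem:nsv_equiv}. The only twist is that the comparison $X[1 \dd 2p] \approx Y[1 \dd 2p]$ is made on a prefix, so we must transfer it to a suffix. The border relation, restricted via \cref{fact:subeq}, gives $X[1 \dd 2p] \approx X(m-2p \dd m]$, because $p \mid m$ makes the shift $m-2p$ a multiple of $p$. The same holds for $Y$, so transitivity (\cref{fact:ct_transitivity}) yields $X(m-2p \dd m] \approx Y(m-2p \dd m]$. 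The ND-based argument then runs from that suffix.

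The main obstacle is the step showing that $X$ and $Y$ place their global leftmost minima at the same end, since the hypothesis only constrains length-$2p$ windows. The window chaining above handles this, but care is needed with ties in the leftmost-minimum convention, namely non-strict versus strict inequalities in the $\nsv$ direction.
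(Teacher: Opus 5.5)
Your proof is correct and follows essentially the same route as the paper. Both arguments first place the leftmost minima of $X$ and $Y$ at the same end, which your window chaining re-derives inline and the paper takes from \cref{lem:blockper_chain} and \cref{lem:blockper_chain_v2}; both then extend the equality of $\PD$ (resp.\ $\ND$) on the length-$2p$ window to the whole string via the $p$-periodicity from \cref{lem:no_overlaps} (resp.\ \cref{lem:no_overlaps_v2}). The tie concern you flag is harmless: in the symmetric case, each window has its leftmost minimum at its last position, which gives strict inequalities, so $Y[m]$ is the unique minimum of $Y$.
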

\begin{proof}
($\Rightarrow$:) This implication follows directly from \cref{fact:subeq}.

($\Leftarrow$:)
Due to \cref{lem:blockper_chain} and its analogue (\cref{lem:blockper_chain_v2}), both $X$ and $Y$ have their leftmost minima at the same position (since $X[1 \dd 2p]$ and $Y[1 \dd 2p]$ do).
For the remainder of the proof, we assume that these minima occur at the first positions, noting that the remaining case can be handled symmetrically.
Since $X[1 \dd 2p] \approx Y[1 \dd 2p]$, \cref{lem:psv_equiv} yields that $\PD(X[1 \dd 2p]) = \PD(Y[1 \dd 2p])$.
Further, we have $\PD(X[1 \dd 2p])=\PD(X)[1 \dd 2p]$ and $\PD(Y[1 \dd 2p])=\PD(Y)[1 \dd 2p]$ by \cref{fact:pd_fragment}.
Hence, $\PD(Y)[1 \dd 2p]=\PD(X)[1\dd 2p]$.
Since $p$ is a CT-block-period of each of $X$ and $Y$, by applying \cref{lem:no_overlaps} to each of $X$ and $Y$, we conclude that $p$ is a period of both $\PD(X)(1 \dd \absolute{X}]$ and $\PD(Y)(1 \dd \absolute{Y}]$.
This directly implies that $\PD(X)=\PD(Y)$, and we hence have $X \approx Y$ by \cref{lem:psv_equiv}. 
\end{proof}

\begin{lemma}\label{lem:per_frag}
Suppose that a fragment $X[i \dd j)$ has minimum CT-block-period $p$. Then, for any $a,b \in [i,j]$ such that $a \equiv i \pmod p$, $b \equiv j \pmod p$, and $b-a \geq 2p$, we have that $X[a \dd b) \approx X[i \dd i + (b-a))$ and that $X[a \dd b)$ has minimum CT-block-period $p$.
\end{lemma}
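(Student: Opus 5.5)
The plan is to reduce everything to statements about periods of the parent-distance string, using \cref{lem:no_overlaps}, and to handle minimality with the Periodicity Lemma (\cref{lem:perlemma}). Let $L = j-i$. Since $p$ is a CT-block-period of $X[i \dd j)$, we have $p \mid L$. The congruences give $a-i \equiv 0$ and $b-a \equiv j-i \equiv 0 \pmod p$, so both are multiples of $p$.

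By the definition of a CT-block-period, the leftmost minimum of $X[i \dd j)$ is either $X[i]$ or $X[j-1]$. I will treat the case $X[i]$; the other case is symmetric, using \cref{lem:blockper_chain_v2}, \cref{lem:no_overlaps_v2} and $\ND$ in place of $\PD$. If $L = 2p$, then $a=i$ and $b=j$ and there is nothing to prove, so assume $L \ge 3p$, i.e.\ $p < \floor{L/2}$. Then $X[i]$ is also the leftmost minimum of $X[i \dd j-p)$, so \cref{lem:blockper_chain} applies. It gives that $X[a]$ is the leftmost minimum of $X[a \dd j)$, and hence of $X[a \dd b)$; clearly $X[i]$ is the leftmost minimum of $X[i \dd i+(b-a))$.

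For the CT-match: by \cref{lem:no_overlaps}, $\PD(X)(i \dd j)$ has period $p$. Since $a-i$ is a multiple of $p$, this gives $\PD(X)(a \dd b) = \PD(X)(i \dd i+(b-a))$. By \cref{fact:pd_fragment}, $\PD(X[a \dd b))$ is a $0$ followed by $\PD(X)(a \dd b)$, and $\PD(X[i \dd i+(b-a)))$ is a $0$ followed by $\PD(X)(i \dd i+(b-a))$. So the two $\PD$ strings coincide, and \cref{lem:psv_equiv} gives $X[a \dd b) \approx X[i \dd i+(b-a))$.

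That $p$ is a CT-block-period of $X[a \dd b)$ follows from the reverse direction of \cref{lem:no_overlaps}: $\PD(X)(a \dd b)$ has period $p$, $p \mid b-a$, and $X[a]$ is the leftmost minimum of $X[a \dd b)$.

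The main step is minimality, which needs a transfer back to $X[i \dd j)$. Suppose, for contradiction, that $X[a \dd b)$ has a CT-block-period $q<p$.
\begin{itemize}
\item $X[a]$ is still its leftmost minimum, so by \cref{lem:no_overlaps} the string $\PD(X)(a \dd b) = \PD(X)(i \dd i+(b-a))$ has periods $q$ and $p$.
\item Its length is $b-a-1 \ge 2p-1 \ge p+q$, so \cref{lem:perlemma} shows that $g = \gcd(p,q)$ is also a period.
\item $\PD(X)(i \dd j)$ is $p$-periodic and contains a window of length at least $p$ with period $g \mid p$. Hence $g$ is a period of all of $\PD(X)(i \dd j)$.
\item Since $g \mid p \mid L$, \cref{lem:no_overlaps} makes $g<p$ a CT-block-period of $X[i \dd j)$, contradicting the minimality of $p$.
\end{itemize}
The delicate points are the small-length boundary cases (the $L=2p$ exclusion and the length bound needed for \cref{lem:perlemma}) and checking that the symmetric case genuinely mirrors this argument.
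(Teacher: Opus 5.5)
Your proposal is correct and is essentially the paper's proof. It uses the same reduction to the case where $X[i]$ is the leftmost minimum and the same appeal to \cref{lem:blockper_chain} for $X[a]$. The minimality argument is also the same: periods $p$ and $q$ of $\PD(X)(a\dd b)$ via \cref{lem:no_overlaps}, then \cref{lem:perlemma}, then lifting $\gcd(p,q)$ back to $\PD(X)(i\dd j)$. The only difference is how you obtain $X[a\dd b)\approx X[i\dd i+(b-a))$ and the CT-block-periodicity of $X[a\dd b)$: you derive them from the $p$-periodicity of $\PD(X)(i\dd j)$, whereas the paper chains the CT-border-period ($X[a\dd b)\approx X[a-p\dd b-p)\approx\cdots$) and checks the definition directly.

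Your handling of the length $b-a-1\ge p+q$ and of why $\gcd(p,q)$ extends to all of $\PD(X)(i\dd j)$ is a bit more careful than the paper's. The only nit is that $L\ge 3p$ does not literally give $p<\floor{L/2}$ when $p=1$ and $L=3$. This is harmless, since the proof of \cref{lem:blockper_chain} only uses $p<L/2$.
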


\begin{proof}
    Without loss of generality, assume that $X[i]$ is the leftmost minimum of $X[i\dd j)$.
    The complementary case when $X[j-1]$ is the leftmost minimum can be handled in a symmetric manner using the analogues of the lemmas used below.
    
    Fix $a,b$ satisfying the conditions of the claim. If $a \neq i$, then the CT-border-period $p$ of $X[i\dd j)$ implies that
    $X[a \dd b) \approx X[a-p \dd b-p) \approx \cdots \approx X[i \dd i+(b-a)).$

    Now, observe that $X[a]$ is the leftmost minimum of $X[a\dd j)$ by \cref{lem:blockper_chain}.
    Additionally, we have $X[a \dd b-p)\approx X[a+p\dd b)$ by \cref{fact:subeq} because $p$ is a CT-border-period of $X[i\dd j)$.
    Finally, $p$ divides $b-a$ by our assumptions and hence $p$ is a CT-block-period of $X[a \dd b)$.
    
    It remains to show that $p$ is the minimum CT-block-period of $X[a\dd b)$.
    Towards a contradiction suppose that $X[a\dd b)$ has minimum CT-block-period $q<p$.
    Then, by \cref{lem:no_overlaps}, $\PD(X)(a\dd b)$ has a period $q$ and $\PD(X)(i \dd j)$ has a period $p$.
    Since the length of the fragment is $b-a \geq 2p > p+q$, an application of the Periodicity Lemma (\cref{lem:perlemma}) yields that $\PD(X)(a\dd b)$ has period $\gcd(p,q)$. Consequently, $\PD(X)(i \dd j)$ also has period $\gcd(p,q)$.
    Another application of \cref{lem:no_overlaps} implies that $X[i \dd j)$ has CT-block-period $\gcd(p,q)<p$ (since $\gcd(p,q)$ divides the length $j-i$), a contradiction.
\end{proof}

\begin{definition}\label{def:CTrun}
    Let $X$ be a string.
    A fragment $X[i \dd j]$  with minimum CT-block-period $p \leq (j-i+1)/2$
    is called a \emph{CT-run}
    if and only if
    $p$ is not a CT-block-period of
    $X[i-p \dd j]$ (or $i-p < 1$)
    and $p$ is not a CT-block-period of $X[i \dd j+p]$ (or $j+p > \absolute{X}$)\color{black}.
\end{definition}

    The occurrences of a periodic pattern $P$ in a text $T$ form arithmetic progressions with each such progression corresponding to a run with period $\per(P)$.
    We show an analogous result for CT-occurrences.

\begin{example}\label{ex:ct-match}
Consider a pattern $P = [10, 40, 30, 20, 60, 50]$.
$P$ is of length $6$ and has a CT-block-period $p:=3$. 
The leftmost minimum of $P$ is at its first position.
The parent-distance array for $P$ is
$\PD(P) = [0, 1, 2, 3, 1, 2]$.

Consider the text $T = [100, 400, 300, 200, 600, 500, 300, 800, 700, 900]$ of length $n=10$. 
The parent-distance array for $T$ is: $\PD(T) = [0, 1, 2, 3, 1, 2, 3, 1, 2,1]$. 
\smallskip

There are two CT-occurrences of $P$ in $T$:
\begin{itemize}
    \item at position $1$, $T[1 \dd 6] = [100, 400, 300, 200, 600, 500]$, where $\PD(T[1 \dd 6]) = [0, 1, 2, 3, 1, 2]$;
    \item at position $4$, $T[4 \dd 9] = [200, 600, 500, 300, 800, 700]$, where $\PD(T[4 \dd 9]) = [0, 1, 2, 3, 1, 2]$.
\end{itemize}

\noindent Note that the fragment $T[1\dd 9]$ forms a CT-run in $T$ with minimum CT-block-period $p=3$.
\end{example}

\begin{lemma}\label{lem:not_folklore}
    Let $P$ be a pattern of length $m$ with minimum CT-block-period $p\leq m/4$, and let $T$ be a text of length $n$.
    The exact CT-occurrences of $P$ in $T$ can be partitioned into $\cO(n/m)$ maximal arithmetic progressions with common difference $p$.
    For each such  progression $\{u + vp : v \in [0,z]\}$, the fragment $T[u \dd u + zp +m)$ is a CT-run with minimum CT-block-period $p$.
    For any two such arithmetic progressions $\{u_1 + vp : v \in [0,z_1]\}$ and $\{u_2 + vp : v \in [0,z_2]\}$, the two runs $T[u_1 \dd u_1 + z_1p +m)$ and $T[u_2 \dd u_2 + z_2p +m)$ overlap by fewer than $p$ positions.
\end{lemma}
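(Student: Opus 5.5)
Without loss of generality the leftmost minimum of $P$ is $P[1]$; the case of $P[m]$ is symmetric via the $\nsv$/$\ND$ analogues in \cref{app:symmetry}. The argument has three steps. First, CT-occurrences whose distance is less than $m/2$ must differ by a multiple of $p$. Second, each maximal chain of occurrences spaced by exactly $p$ sits inside a CT-run of $T$. Third, two distinct such runs overlap in fewer than $p$ positions, so there are only $\cO(n/m)$ chains.

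\textbf{Step 1: close occurrences differ by a multiple of $p$.} Let $i<i'$ be occurrences with $d:=i'-i<m/2$. Then $Z:=T[i\dd i'+m)$ has length $m+d<3m/2$. Its prefix and suffix of length $m$ both CT-match $P$. By \cref{lem:per_frag} applied to $P$, every window of $P$ of length $m-d$ starting at a position $\equiv 1 \pmod p$ CT-matches $P[1\dd m-d]$. So $T[i+d\dd i+m)$ CT-matches both $P[1\dd m-d]$ and $P(d\dd m]$.

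I would pass to $\PD$ via \cref{fact:pd_fragment} and \cref{lem:no_overlaps}. The string $\PD(T)(i\dd i+m)$ has period $p$, as does $\PD(T)(i'\dd i'+m)$. Their overlap has length $m-d>m/2\ge 2p$, so $\PD(T)$ restricted to $(i\dd i'+m)$ has period $p$.

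Now $T[i]$ and $T[i']$ are the leftmost minima of their windows, so $T[i']<T[i'+p]$ in a chain whose $\PD$ entries at positions $\equiv i \pmod p$ point back by $p$. If $d\not\equiv 0\pmod p$, then position $i'$ is at offset $r=d \bmod p\neq 0$ within a period. There $\PD$ equals $\PD(P)[r+1]\le r<d$, so $\psv_T(i')\ge i'-r>i$, lying in $[i,i')$. But by \cref{fact:pd_fragment}, for $T[i']$ to be the minimum of its window the value at $i'+p$ must point back to $i'$, while periodicity forces it to point back by $r$-dependent amounts consistent with $P$'s pattern starting at offset $r+1$. This contradicts \cref{lem:blockper_chain}, which says the block starts $c_j$ are exactly the points with $\PD=p$ after the first.

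Making this contradiction airtight is where I expect the main obstacle. The cleanest route I would try first is the following. Set $Q:=P[1+r\dd 1+r+2p)$. Then $T[i'\dd i'+2p)$ CT-matches both $Q$, via the first occurrence, and $P[1\dd 2p]$, via the second. So $P[1+r]$ is the leftmost minimum of $Q$, hence smaller than $P[1+r+p]$. Consequently $\PD(P)[1+r+p]\ge p$ forces equality with $\PD(P)[1+p]=p$. Periodicity of $\PD(P)(1\dd m]$ then yields $\PD(P)[1+r]=p$, which is impossible since $\PD(P)[1+r]\le r<p$.

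\textbf{Step 2: chains live in CT-runs.} Group the occurrences into maximal arithmetic progressions $\{u+vp\}$ with consecutive occurrences exactly $p$ apart. The fragment $T[u\dd u+zp+m)$ has $\PD$-period $p$ on its interior, since consecutive windows overlap by $m-p\ge 2p$. It has leftmost minimum $T[u]$ by chaining \cref{lem:blockper_chain}, and its length is divisible by $p$. So by \cref{lem:no_overlaps} it has CT-block-period $p$. This period is minimum by the periodicity-lemma argument of \cref{lem:per_frag}.

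Maximality of the fragment as a CT-run follows from maximality of the progression. If $p$ were a CT-block-period of the fragment extended by $p$ on either side, then \cref{lem:per_frag} and \cref{cor:focusonp} would give another occurrence at $u-p$ or at $u+(z+1)p$.

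\textbf{Step 3: runs of different chains overlap by fewer than $p$.} Suppose two such runs overlap in at least $p$ positions. Their $\PD$ periodicities then glue together, and by Step 1 their starting positions are congruent mod $p$. The union is then $p$-CT-block-periodic, and an occurrence would extend one progression into the other, contradicting maximality.

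Hence the runs are almost disjoint and each has length at least $m$. Each run can share fewer than $p\le m/4$ positions with its neighbours, so it contributes at least $m/2$ positions of its own. This gives at most $\cO(n/m)$ progressions.
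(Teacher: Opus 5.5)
\textbf{There is a genuine gap in Step~1, and Steps~2 and~3 depend on it.}

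Your decisive inference is that $\PD(P)[1+p]=p$, i.e., that block starts are exactly the positions with $\PD$-value $p$. \cref{lem:blockper_chain} does not say this. It only says that $P[c_j]$ is the leftmost minimum of the \emph{suffix} $P[c_j\dd m]$; an element of the preceding block may still be $\le P[c_j]$. For instance, $P=[0,1,3,2,4,6,5,7,9,8,10,12]$ has $m=12$, $\PD(P)=[0,1,1,2,1,1,2,1,1,2,1,1]$ and minimum CT-block-period $p=3=m/4$, yet $\PD(P)[4]=2$.

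The inequality you derive also points the wrong way. $P[1+r]\le P[1+r+p]$ gives $\psv_P(1+r+p)\ge 1+r$, i.e., $\PD(P)[1+r+p]\le p$. That is perfectly consistent with $\PD(P)[1+r+p]=\PD(P)[1+r]\le r$, so no contradiction is reached. Your earlier heuristic paragraph rests on the same false claim about \cref{lem:blockper_chain}.

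\textbf{The missing idea} is the reduction the paper's proof is built on. Since $T[x]$ must be the leftmost minimum of $T[x\dd x+m)$, \cref{fact:pd_fragment,lem:psv_equiv} show that $x$ is a CT-occurrence if and only if $\PD(T)(x\dd x+m)=P':=\PD(P)(1\dd m]$ and $T[x]$ is the leftmost minimum of its window. Minimality of $p$ together with \cref{lem:no_overlaps,lem:perlemma} gives $\per(P')=p$. Two CT-occurrences at distance $d$ are then two exact occurrences of the \emph{same} string $P'$ in $\PD(T)$. Hence $d$ is a period of $P'$, and $p\mid d$ as soon as $d+p\le m-1$.

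Your string $Q$ can be salvaged the same way. $Q\approx P[1\dd 2p]$ yields $\PD(P)[2+r\dd 2p+r]=\PD(P)[2\dd 2p]$. So a prefix of $P'$ of length $2p-1+r\ge p+r$ has periods $r$ and $p$, hence period $\gcd(r,p)<p$. This propagates to all of $P'$ and contradicts $\per(P')=p$.

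\textbf{Step~3 needs the same string-level view.} There you invoke Step~1 for the last occurrence of one progression and the first occurrence of the next. Their distance can be as large as $m-p$, far outside the range $d<m/2$ that Step~1 covers. Moreover, when two runs overlap in exactly $p$ positions, the two $\PD$-windows share only $p-1$ entries. Your claim that ``the periodicities glue'' then needs the extra observation that $p\mid m$ forces $P'[m-p]=P'[p]$.

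The paper sidesteps all of this by importing the classical fact that occurrences of $P'$ split into maximal progressions with difference $p$, with elements of different progressions more than $m-p$ apart. It then uses \cref{lem:blockper_chain} to show that, within each progression, the left-minimum condition holds exactly on a suffix of it. Your Step~2 and the final counting are fine once Step~1 is repaired.
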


\begin{proof}
    We henceforth assume that the leftmost minimum of $P$ is at its first position. The proof for the complementary case, that is, when the leftmost minimum of $P$ is $P[m]$, is symmetric (using $\ND$ instead of $\PD$ and the analogous versions of the used lemmas).

    Let $P' := \PD(P)(1\dd m]$. By the CT-block-periodicity of $P$ and 
    \cref{lem:no_overlaps} we have $\per(P') = p$.
    Since $P[1]$ is the leftmost minimum of $P$, \cref{fact:pd_fragment,lem:psv_equiv} imply that $P$ CT-occurs at some position $x$ in $T$ if and only if the following two conditions hold:
    \smallskip
    \begin{itemize}
        \item \emph{$\PD$-match condition}: $\PD(T)(x\dd x+m) = P'$;\label{it:condone}
        \item \emph{left-min condition}: $T[x]$ is the leftmost minimum of $T[x\dd x+m)$.\label{it:condtwo}
    \end{itemize}

\begin{claim}\label{claim:experiment}
    Consider a maximal arithmetic progression $D = \{j + i p : i\in [0,s]\}$ of occurrences of $P'$ in $\PD(T)$. For $i\in [0,s]$, denote $d_i := j + i p$. Then, either $P$ does not have any CT-occurrence at a position of $T$ in $D-1$, or there exists an integer $\mu \in [0,s]$ such that both of the following hold:
    \smallskip
    \begin{itemize}
        \item the intersection of the set of CT-occurrences of $P$ in $T$ and $D-1$ is $\{d_i - 1 : i\in [\mu,s]\}$;
        \item $T[d_\mu - 1 \dd d_s-1+m)$ is a CT-run with minimum CT-block-period $p$.
    \end{itemize}
    \end{claim}   
    \begin{claimproof}
        By the definition of $D$, we have $\PD(T)(x\dd x+m)=P'$ for all $x \in D$. Hence, the
        $\PD$-match condition holds for all $x \in D-1$. If
        the left-min condition fails for every element of $D-1$, then $P$ does not have any CT-occurrence at any position of~$T$ in~$D-1$, and the claim follows.
        Otherwise, let $\mu = \min \{i \in [0,s] : T[d_i - 1] \mbox{ is a minimum in } T[{d_i-1} \dd {d_i-1+m})\}$.
        We show  by induction that
        the left-min condition holds for all $x \in \{d_i - 1 : i\in [\mu,s]\}$. The base case $y=\mu$ holds by definition. Assume that it holds for some $y \in [\mu,s)$. An application of  \cref{lem:blockper_chain} to $T[d_y - 1\dd d_{y+1}-1+m)$  implies that
        $T[d_{y+1} - 1]$ is the leftmost minimum of
        $T[d_{y+1}-1 \dd d_{y+1}-1+m)$.
        This completes the proof of the inductive step and hence concludes the proof of the first item.
        Let $F := T[d_{\mu} - 1 \dd d_s-1+m)$.
        By repeatedly applying \cref{lem:blockper_chain}, we conclude that the leftmost minimum of $F$ is $F[1]$, and hence that~$p$ is a CT-block-period of $F$.
        If $F$ had a smaller CT-block-period, then, by \cref{lem:per_frag}, the same would hold for $P$, contradicting the assumption that $P$ has minimum CT-block-period~$p$.
        Finally, we cannot prepend or append $p$ characters to $F$ while maintaining that it has CT-block-period $p$: either the obtained string does not have its leftmost minimum at its first position or we contradict the maximality of~$D$. This concludes the proof of the claim.
    \end{claimproof}
        
    To complete the proof, it suffices to combine \cref{claim:experiment} with the fact that a CT-occurrence of $P$ in $T$ at position $j$ implies an occurrence of $P'$ in $\PD(T)$ at position $j+1$, noting that the occurrences of $P'$ in $\PD(T)$ can be partitioned into $\cO(n/m)$ maximal arithmetic progressions with period $\per(P') = p$ such that any two elements from distinct arithmetic progressions differ by more than $m-p$ (see \cite{BG95,fine1965uniqueness}).
\end{proof}

\section{Approximate CT-Matching with Substitutions}

Kim and Han \cite{DBLP:journals/tcs/KimH25} recently presented a dynamic programming algorithm that, given two strings $X$ and $Y$, computes the minimum cost of a sequence of edit operations (insertions, deletions, and substitutions) required to transform $X$ into a string $X'$ such that $X'\approx Y$.
As noted in \cite[Section 6]{DBLP:journals/tcs/KimH25}, a minor modification to their algorithm results in the following lemma for the special case when only substitutions are allowed.

\begin{restatable}[see \cite{DBLP:journals/tcs/KimH25}]{lemma}{compare}\label{lem:compare}
    Given two strings $T[1\dd m]$ and $P[1\dd m]$, and an integer threshold $k>0$, we can compute $\hdist_k(T \leadsto P)$ in $\cO(mk)$ time.
\end{restatable}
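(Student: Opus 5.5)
The plan is to reduce $\hdist(T \leadsto P)$ to a banded dynamic program over positions, in the spirit of the $\cO(mk)$ algorithm for edit distance. First, by \cref{lem:psv_equiv}, a string $T'$ CT-matches $P$ if and only if $\psv_{T'}(i)=\psv_P(i)$ for all $i$. Equivalently, for every $i$ the following two conditions hold. For each $j$ on the ``left spine'' chain $i-1, \psv_P(i-1), \psv_P(\psv_P(i-1)),\dots$ down to $\psv_P(i)$ (exclusive), we need $T'[j] > T'[i]$. If $\psv_P(i)\neq 0$, we also need $T'[\psv_P(i)] \le T'[i]$. So the problem becomes: choose a set $S$ of at most $k$ positions to substitute, with free new values, such that the remaining positions $[1,m]\setminus S$ keep their $T$-values and the order constraints of $\CT(P)$ are realisable. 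Since substituted values are free reals, one can always realise constraints among them. The feasibility question is therefore whether the restriction of $T$ to the unsubstituted positions is consistent with the partial order induced by $\CT(P)$ on those positions. This is the heap order (parent $\le$ child, with strictness determined by the left/right tie-breaking), taken transitively through substituted nodes.

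Second, I will set up a DP over the Cartesian tree $\CT(P)$, processing positions left to right along the $\psv_P$ stack, exactly as Kim and Han do. The state at position $i$ records the current rightmost path (the stack of $\psv$-ancestors of $i$) together with, for the unsubstituted nodes on it, the largest $T$-value constraint that must be respected. The key observation for a cost-threshold version is this: an optimal solution keeps a node unsubstituted unless forced otherwise. For each prefix, the DP only needs to track how many substitutions $d\in[0,k]$ have been made, plus the value of the ``binding'' unsubstituted ancestor bound. The binding bound is the maximum $T$-value among unsubstituted nodes on the current stack, and only $\cO(k)$ distinct candidates matter, because any ancestor whose value exceeds the current node's value must be among the substituted ones or the solution fails. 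Hence I will define $D[i][d]$ as the best (minimum) possible binding value after handling $T[1\dd i]$ with exactly $d$ substitutions. I will argue monotonicity: a smaller binding value is always at least as good for the future, which makes this greedy-in-value DP correct.

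Third, the transitions. At position $i$, popping the stack to $\psv_P(i)$ corresponds to discarding ancestor constraints that no longer apply. The tricky part is that popping removes constraints, so a single scalar ``binding value'' is not enough; the relevant bound depends on which ancestor remains. To fix this, I would store for each $d$ the sorted list of unsubstituted stack values but argue that only the top $\cO(k)$ distinct levels are needed. Alternatively, following Kim and Han, I would index the DP by (position, number of substitutions) and compute, for each $i$, the maximal value among unsubstituted ancestors via the nesting structure, using that $\psv$ intervals nest. Each transition either keeps $T[i]$, which requires $T[i]$ to be at least the bound (with strictness per tie rule), or substitutes $T[i]$ at cost $1$, choosing its value to be as small as permitted, which equals the current bound. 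Capping $d$ at $k+1$ gives $\cO(k)$ states per position and $\cO(mk)$ total, provided that each transition is amortised $\cO(1)$ over stack pops.

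The main obstacle I expect is exactly this third step: proving that the per-$d$ information can be compressed to $\cO(1)$ amortised data so that stack pops do not blow up the running time, while keeping the exchange argument (minimal binding value dominates) valid across pops. I would first try to show that the optimal choice of substituted values can always be made ``just above the parent'', so that substituted nodes never tighten constraints beyond unsubstituted ancestors. This would let the per-$d$ state be the stack of unsubstituted values restricted by the original $T$, which is shared across $d$ up to $\cO(k)$ differences.
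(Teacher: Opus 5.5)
\textbf{There is a genuine gap.} Your first step is fine over a dense value domain: the question reduces to whether the unsubstituted positions are consistent with the heap order of $\CT(P)$. However, the proposal stops short of an algorithm. The step you flag as the main obstacle is where the content of the lemma lies, and the left-to-right formulation makes it hard.

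After scanning $T[1\dd i]$, the future depends on the value of \emph{every} entry of the $\psv_P$-stack. A later position $q$ must be at least the value at $\psv_P(q)$ and strictly smaller than the lowest entry it pops, and this stack can have $\Theta(m)$ entries. You concede that one binding value per $d$ is not enough, and the claim that $\cO(k)$ levels suffice is not argued.

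Worse, the transition you do specify is wrong. A substituted position cannot be given the smallest permitted value, because a later $q$ with $\psv_P(q)=\psv_P(i)$ needs room in $[T'[\psv_P(i)],T'[i])$. For example, take $P=[1,3,2]$ and $T=[1,1,2]$. Then $\hdist(T\leadsto P)=1$ (replace $T[2]$ by $3$). Your rule sets a substituted $T'[2]$ to the bound $T[1]=1$, after which $T'[3]$ would have to lie in $[1,1)$. The right value for a substituted node depends on its whole subtree in $\CT(P)$, part of which lies in the future, so it cannot be fixed greedily at time $i$.

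\textbf{The missing idea} is to run the dynamic program bottom-up over $\CT(P)$ rather than over positions. The subtree of a node $v$ spans an interval, and its root must be the leftmost minimum of the corresponding fragment of the modified text. Hence every constraint between this fragment and the outside involves only the root: an ancestor must be $<$, resp.\ $\le$, the root of its left, resp.\ right, subtree.

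So one number per budget suffices. Let $\dphd(v,x)$ be the maximum achievable value of that minimum using at most $x\in[0,k]$ substitutions in the fragment, or $-\infty$ if this is infeasible. Larger is always better for the ancestors, and $\dphd(v,x)$ is non-decreasing in $x$. Let $\ell,r$ be the children of $v$.
If the root is substituted, its best value is $\max_{y+z=x-1}\min\{\dphd(\ell,y)-1,\dphd(r,z)\}$.
If the root keeps its value $\mu$, one needs $\mu\le\max_{y+z=x}\min\{\dphd(\ell,y)-1,\dphd(r,z)\}$.

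Both are max-min convolutions of non-decreasing sequences of length $k+1$. They take $\cO(k)$ time because the crossover index moves monotonically with $x$, giving $\cO(mk)$ overall. The answer $\hdist_k(T\leadsto P)$ is the least $x$ with $\dphd(\mathrm{root},x)\neq-\infty$, or $k+1$ if there is none.
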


\noindent%
In \cref{app:compare}, we provide a self-contained proof of \cref{lem:compare} for the sake of completeness.

\subsection{Analysing the Pattern}

\begin{definition}
A set $\mathcal{S}$ of equal-length strings is a
CT-rainbow if and only if there do not exist distinct $S_1, S_2 \in \mathcal{S}$ such that $S_1 \approx S_2$.
\end{definition}

\begin{lemma}\label{lem:filtering_preprocess}
Let $m, \Delta \in \mathbb Z_+$ with $\Delta < \sqrt{m} / 2$.
Given $\Delta$ and a string $P[1 \dd m]$, in $\cO(m\Delta)$ time, we can either compute a set of $\Delta$ disjoint fragments of $P$, each of length $\ceil{m/{(2\Delta)}}$, that form a CT-rainbow or report a fragment of length at least $m/{(4\Delta)}$ that has a CT-block-period less than~$\Delta$, together with its minimal CT-block-period.
\end{lemma}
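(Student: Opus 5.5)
We mimic the classical greedy argument from approximate string matching (as in Bringmann--Wellnitz--K\"unnemann and Charalampopoulos et al.), with CT-matching in place of exact equality. Set $\ell := \ceil{m/(2\Delta)}$. We process $P$ left to right and maintain a position $j$ (initially $j=1$) and a collection $\mathcal{S}$ of disjoint length-$\ell$ fragments that form a CT-rainbow. In each step we consider the fragment $F = P[j \dd j+\ell)$ together with the next $\ell$ candidate fragments $P[j+\delta \dd j+\delta+\ell)$, $\delta \in [0,\Delta)$. Since $\Delta < \sqrt m/2$, we have $\Delta \le \ell$, so these candidates fit inside a window of length less than $2\ell$.

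In each step, we test in $\cO(\ell + \Delta)$ time whether some candidate $P[j+\delta \dd j+\delta+\ell)$ with $\delta\in[0,\Delta)$ CT-matches one of the at most $\Delta$ fragments already in $\mathcal{S}$. This test would naively cost too much. The cheaper alternative is the standard trick: test only whether $F$ CT-matches a fragment in $\mathcal{S}$, using the pattern-matching viewpoint. Concretely, we run \cref{lem:exact} with the pattern $F$ on the text $P[j \dd j+\ell+\Delta)$, and also on each previously chosen fragment (via \cref{lem:multiple}, or by comparing $\PD$ encodings, which are canonical by \cref{lem:psv_equiv}, in $\cO(\ell)$ time each).

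The cleaner, intended variant is as follows. If $F$ CT-occurs in $P[j \dd j+\ell+\Delta)$ at some shift $\delta\in[1,\Delta)$, then $F' = P[j \dd j+\ell+\delta)$ has CT-border-period $\delta$. In that case we apply \cref{lem:period:border_to_block}: since $\delta < \Delta \le \ell/3$ once constants are tuned, it yields in $\cO(\delta)$ extra time a fragment of length at least $\ell+\delta-2\delta \ge \ell-\Delta \ge m/(4\Delta)$ with CT-block-period $\delta<\Delta$. Its minimum CT-block-period is obtained via \cref{lem:no_overlaps} (or its analogue) as the period of the corresponding $\PD$ substring; by \cref{lem:per_frag} and the Periodicity Lemma, that period is computable in linear time, for instance with KMP. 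We then report this fragment.

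Otherwise, $F$ has no CT-occurrence at any shift in $[1,\Delta)$. We then check whether $F$ CT-matches some member of $\mathcal{S}$. If not, we add $F$ to $\mathcal{S}$ and set $j \gets j+\ell$. If it does, we set $j\gets j+1$ and repeat.

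\textbf{The main obstacle} is bounding how often we skip. The fix is to compare not $F$ alone but all shifts at once. For each $S\in\mathcal{S}$, we find all CT-occurrences of $S$ in the window $P[j\dd j+\ell+\Delta)$ using \cref{lem:multiple}, and pick the smallest shift $\delta$ that is not an occurrence. Suppose more than $\Delta$ shifts are blocked by at most $|\mathcal S|<\Delta$ fragments. Then by pigeonhole some $S$ occurs at two shifts differing by less than $\Delta$, and hence \cref{fact:ct_transitivity} gives a CT-border-period smaller than $\Delta$ in a fragment of length at least $\ell$, which sends us to the periodic case above. Otherwise, a free shift $\delta\le\Delta$ exists; we add $P[j+\delta\dd j+\delta+\ell)$ to $\mathcal{S}$ and advance by $\delta+\ell$.

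After $\Delta$ steps we have consumed at most $\Delta(\ell+\Delta) \le m/2+\Delta+\Delta^2 \le m$ positions, using $\Delta^2<m/4$. So we finish with $\Delta$ disjoint rainbow fragments. Each step costs $\cO((\ell+\Delta\ell)\log\Delta)$ via \cref{lem:multiple}. To reach the claimed $\cO(m\Delta)$ total without the $\log$ factor, we would replace \cref{lem:multiple} by grouping the $\PD$ encodings of the members of $\mathcal{S}$ in a trie and matching window positions against it. We expect this accounting to be the remaining technical point.
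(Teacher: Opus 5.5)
\textbf{Gap: the $\cO(m\Delta)$ running time is not established.} Your final all-shifts algorithm is the paper's argument. You find the CT-occurrences of every $S\in\mathcal{S}$ among the next shifts and take the first free shift. If there is none, you apply pigeonhole, transitivity and \cref{lem:period:border_to_block}. Its combinatorics are sound. The problem is that the time bound is part of the statement, and you leave it open. With \cref{lem:multiple} each step costs $\cO(m\log\Delta)$, so you only get $\cO(m\Delta\log\Delta)$, and the trie remedy is never worked out. No new idea is needed, and either of the following closes it.
\begin{itemize}
\item Call \cref{lem:exact} separately for each of the fewer than $\Delta$ stored fragments against the window of length at most $\ell+\Delta\le 2\ell$. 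Each call costs $\cO(\ell)$, so a step costs $\cO(\Delta\ell)=\cO(m)$ and the $\Delta$ steps cost $\cO(m\Delta)$.
\item Do as the paper does. Whenever a fragment enters $\mathcal{S}$, mark all its CT-occurrences in the whole of $P$ with a fresh symbol via \cref{lem:exact} in $\cO(m)$ time. Each step is then an $\cO(\Delta)$ scan for the first unmarked position among the next $\Delta$ positions, and pigeonhole is applied to the symbols.
\end{itemize}
In both cases, restrict to the $\Delta$ shifts $\delta\in[0,\Delta)$. Then two colliding shifts automatically differ by some $q<\Delta$, which your window $P[j\dd j+\ell+\Delta)$ with $\Delta+1$ shifts does not immediately guarantee.

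\textbf{Smaller points.} The earlier variant that tests only $F$ and advances $j$ by one has no bound on the number of skips and should simply be dropped. Your justification of the hypothesis of \cref{lem:period:border_to_block} is also off. The constants are fixed by the statement, so nothing can be tuned, and $\Delta\le\ell/3$ actually fails when $\Delta$ is close to $\sqrt m/2$. What you need is $q<(\ell+q)/3$, that is, $q<\ell/2$. This holds because $\ell\ge m/(2\Delta)>\sqrt m>2\Delta>2q$. The resulting fragment then has length at least $\ell-q>m/(2\Delta)-\Delta\ge m/(4\Delta)$. Your KMP-based computation of the minimal CT-block-period is correct. The output length is a multiple of $q$ and at least $2q$. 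So if the smallest period $\pi$ of the relevant $\PD$ substring is below $q$, then $\pi+q$ does not exceed that substring's length, and \cref{lem:perlemma} forces $\pi\mid q$. This step is even cheaper than the paper's naive check of every candidate below $q$.
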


\begin{proof}
    Let $m' = \ceil{m / (2\Delta)}$.
    We process $P$ in a left-to-right manner, attempting to identify $m'$-length fragments to insert into an initially empty set $\mathcal{S}$,
    which will remain a CT-rainbow throughout.
    Whenever a fragment is inserted into $\mathcal{S}$, we find and mark all of its exact CT-occurrences in $P$ in $\cO(m)$ time using \cref{lem:exact}.
    For each inserted fragment, we use a~unique symbol for marking.

    We first insert fragment $P[1 \dd m']$ into $\mathcal{S}$.
    Let $P[i \dd i+m')$ be the fragment that was most recently inserted into $\mathcal{S}$ and assume that $\absolute{\mathcal{S}} < \Delta$ (otherwise, we return $\mathcal{S}$ and terminate the algorithm).
    Our goal is to then find some $j \in [i + m', i+m' + \Delta)$ such that $P[j \dd j+m')$ does not CT-match any element of $\mathcal{S}$.
    To do this, it suffices to find, in $\cO(\Delta)$ time, the first unmarked position in $[i + m', i+m' + \Delta)$, if one exists.
    If we can always find such a position~$j$ while $\absolute{\mathcal{S}} < \Delta$, then we eventually insert $\Delta$ fragments into $\mathcal{S}$.
    This holds because the total length of the fragments inserted into $\mathcal{S}$ and the lengths of the gaps between them add up to at most $\Delta \cdot (m' + \Delta - 1) \leq {m/2} + \Delta^2 \leq m$ since the length of the gap between any two consecutive fragments is less than $\Delta$ and $\Delta < \sqrt{m}/2$.
    Inserting a fragment into $\mathcal{S}$ and marking all of its exact CT-occurrences takes $\cO(m)$ time.
    In the case when we can always find a fragment to insert into $\mathcal{S}$, the algorithm thus runs in $\cO(m\Delta)$ time.

    Now consider the complementary case when, after inserting some fragment $P[i \dd i+m')$ into $\mathcal{S}$, we have $\absolute{\mathcal{S}}< \Delta$ and all positions in $[i + m', i+m' + \Delta)$ are marked.
    Then, by the pigeonhole principle, some fragment in $\mathcal{S}$ has two CT-occurrences at positions $i_1 < i_2$ in $[i + m', i+m' + \Delta)$ with $q := i_2 - i_1 < \Delta$.
    We find two such positions (marked with the same symbol) naively in $\cO(\Delta^2) \subseteq \cO(m)$ time. Let $G= P[i_1 \dd i_2 + m')$.
    $G$ has length $L = m' + q$ and CT-border-period $q$.
    We apply \cref{lem:period:border_to_block} to trim $G$. Both conditions of the lemma are satisfied since $\Delta > q$, $m' \geq m / (2\Delta)$, and $\Delta < \sqrt{m}/2$, and hence
    \[L = m' + q \geq m / (2\Delta) + q > \sqrt{m} + q \geq 2\Delta + q > 3q.\]
    Trimming $G$ yields a fragment $F$ with CT-block-period $q$ and length \[\absolute{F} \geq L-2q = m'-q \geq {m / (2\Delta)} - \Delta > {m / (2\Delta)} - \sqrt{m} / 2 \geq m / (4\Delta).\]
    Finally, we naively compute the smallest CT-block-period $p$ of $F$ by checking each integer in $[1,q-1]$ in $\cO(m)$ time, for a total running time of $\cO(m\Delta)$, and return the pair $(F,p)$.
\end{proof}

\subsection{The Aperiodic Case}

In this section, we deal with the (easy) case when the analysis of the pattern according to \cref{lem:filtering_preprocess} returns a large CT-rainbow of disjoint fragments.
We apply a standard marking trick: an exact CT-occurrence of a fragment $P[a \dd b] \in \mathcal{S}$ at a position $j$ in $T$ contributes a mark to position $j-a+1$, indicating a potential starting position of a $k$-approximate CT-occurrence of $P$ in~$T$.
The following simple fact captures the key property we use: if ${\hdist_k(T[i \dd i+m) \leadsto P) \leq k}$, then position $i$ must receive at least $\Delta - k$ marks.

\begin{lemma}\label{lem:comb_mark}
   Let $k, \Delta \in \mathbb Z_{+}$ with $\Delta \geq k$.
   Consider strings $V,U$ satisfying ${\hdist_k(V \leadsto U)\leq k}$.
   Let $\mathcal{U} = \{U[a_i \dd b_i] : i \in [1, \Delta]\}$ be a set of disjoint fragments of $U$. Then there exists a set $I \subseteq [1,\Delta]$ of size at least $\Delta - k$ such that $V[a_i \dd b_i] \approx  U[a_i \dd b_i]$ for all $i \in I$.
\end{lemma}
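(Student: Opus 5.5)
Since $\hdist_k(V \leadsto U) \leq k$, we have $\hdist(V \leadsto U) \leq k$. We therefore fix a string $V'$ with $V' \approx U$ that differs from $V$ in a set $M$ of at most $k$ positions. Let $I$ be the set of indices $i \in [1,\Delta]$ such that $[a_i, b_i] \cap M = \emptyset$.

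The fragments $U[a_i \dd b_i]$ are pairwise disjoint, so their position intervals $[a_i,b_i]$ are pairwise disjoint as sets of positions. Each position of $M$ therefore lies in at most one interval. Consequently at most $\absolute{M} \leq k$ indices are excluded from $I$, which gives $\absolute{I} \geq \Delta - k$. The assumption $\Delta \geq k$ is only needed so that this bound is nonnegative.

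Now consider any $i \in I$. Since $[a_i,b_i]$ avoids $M$, we have $V[a_i \dd b_i] = V'[a_i \dd b_i]$ as strings, and hence trivially $V[a_i \dd b_i] \approx V'[a_i \dd b_i]$. Since $V' \approx U$, \cref{fact:subeq} yields $V'[a_i \dd b_i] \approx U[a_i \dd b_i]$. Combining the two relations via \cref{fact:ct_transitivity} gives $V[a_i \dd b_i] \approx U[a_i \dd b_i]$, as required.

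There is no real obstacle here. The only points to handle carefully are translating the $\hdist_k$ bound into the existence of $V'$, and the counting argument, which relies solely on the disjointness of the fragments.
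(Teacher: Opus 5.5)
Your proposal is correct and follows essentially the same argument as the paper: fix $V'\approx U$ obtained with at most $k$ substitutions, and count the fragments untouched by substitutions using disjointness. Then conclude via \cref{fact:subeq} that $V[a_i \dd b_i] = V'[a_i \dd b_i] \approx U[a_i \dd b_i]$ for those fragments. Your set $I$ (fragments avoiding the set of differing positions $M$) is exactly the paper's set $\{i : V[a_i \dd b_i] = V'[a_i \dd b_i]\}$.
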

\begin{proof}
    Let $V'$ be a string obtained from $V$ by substituting at most $k$ characters such that $V' \approx U$.
    Let $I := \{i \in [1,\Delta] : V[a_i \dd b_i] = V'[a_i \dd b_i]\}$.
    Set $I$ is of size at least $\Delta-k$ as the substitutions transforming $V$ into $V'$ affect at most $k$ fragments $V[a_i \dd b_i]$.
    For each $i \in I$, \cref{fact:subeq} implies that $V'[a_i \dd b_i] \approx U[a_i \dd b_i]$.
    Moreover, for each $i\in I$, we have $V[a_i \dd b_i] = V'[a_i \dd b_i]$ and hence the claim follows.
\end{proof}

In the following lemma, we use the \emph{marking trick} to efficiently compute a small number of candidate positions that need to be verified, and verify them using \cref{lem:compare}.

\begin{lemma}\label{lem:aper_case}
    Let $n,m,k, \Delta \in \mathbb Z_+$ with $n \leq 3m/2$ and $2k \leq \Delta \leq \sqrt{m} / 2$.
    Given $k$, a text $T[1 \dd n]$, and a pattern $P[1 \dd m]$, together with a CT-rainbow $\mathcal{P}$ consisting of $\Delta$ disjoint fragments of $P$, each of length $\ceil{m/{2\Delta}}$, we can compute $\hdist_k(T[i \dd i+m) \leadsto P)$ for all $i\in [1,n-m+1]$ in $\cO(m^2k / \Delta)$ time.
\end{lemma}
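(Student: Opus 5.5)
We use the marking trick and then verify only the heavily marked positions. For each fragment $P[a \dd b] \in \mathcal{P}$, we compute all exact CT-occurrences of $P[a \dd b]$ in $T$. For each occurrence at position $j$ with $j-a+1 \in [1,n-m+1]$, we add one mark to the counter of position $j-a+1$. We may use \cref{lem:exact} once per fragment, costing $\cO(n+m)=\cO(m)$ each, for $\cO(m\Delta)$ total. Alternatively, \cref{lem:multiple} handles all fragments at once in $\cO(m\log\Delta)$ time. Either cost is within $\cO(m^2k/\Delta)$, because $\Delta \le \sqrt m/2$ gives $m\Delta \le m^2/\Delta$.

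By \cref{lem:comb_mark}, applied with $V = T[i \dd i+m)$ and $U=P$, every position $i$ with $\hdist(T[i \dd i+m)\leadsto P)\le k$ receives at least $\Delta-k \ge \Delta/2$ marks. For every position with fewer than $\Delta-k$ marks, we output $k+1$. For each remaining candidate position we run \cref{lem:compare} in $\cO(mk)$ time. It therefore suffices to show that there are $\cO(m/\Delta)$ candidates.

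The counting step is the crux, and it is where the CT-rainbow property enters. Fix one fragment $S \in \mathcal{P}$ of length $\ell=\ceil{m/(2\Delta)}$. We bound the number of CT-occurrences of $S$ in $T$ by $\cO(n/\ell)=\cO(\Delta)$ using a dichotomy. If $S$ has no two CT-occurrences in $T$ at distance less than $\ell/4$, say, the bound is immediate. Otherwise, two CT-occurrences at distance $q<\ell/4$ give $S$ a CT-border-period $q$, since both windows CT-match $S$ and overlap. We would then need a contradiction with the rainbow property, or at least a bound showing that such periodic fragments contribute few marks.

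This is the main obstacle. I am not confident that periodicity of a single fragment contradicts the rainbow property, since the rainbow is a set of distinct fragments. The safer route is a global count instead. The total number of marks is at most $\sum_{S\in\mathcal{P}} \#\mathrm{occ}_T(S)$. Because $\mathcal{P}$ is a CT-rainbow, each position $j$ of $T$ is an exact CT-occurrence of at most one fragment of $\mathcal{P}$: CT-matching is transitive by \cref{fact:ct_transitivity}, and all fragments have equal length. Hence the total number of marks is at most $n \le 3m/2$. Since every candidate carries at least $\Delta-k\ge \Delta/2$ marks, there are at most $3m/\Delta$ candidates.

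The verification then costs $\cO(m/\Delta \cdot mk)=\cO(m^2k/\Delta)$, which dominates the marking cost. Storing the counters in an array of size $n-m+1$ keeps the bookkeeping linear.
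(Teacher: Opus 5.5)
Your proposal is correct and matches the paper's proof. The paper also marks via exact CT-matching of the rainbow fragments, and it also bounds the total number of marks by observing that the rainbow property plus transitivity allows at most one fragment to CT-match at each text position. It then uses \cref{lem:comb_mark} to restrict attention to the $\cO(m/\Delta)$ positions with at least $\Delta-k\ge\Delta/2$ marks and verifies each with \cref{lem:compare}. The per-fragment dichotomy you explore first is unnecessary and can simply be dropped.
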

\begin{proof}
    We perform multiple-pattern CT-matching in $T$ for all fragments of $\mathcal{P}$ in $\cO(m \log m) \subset \cO(m^2k / \Delta)$ time using \cref{lem:multiple}.
    If a fragment $P[a \dd b] \in \mathcal{P}$ CT-matches a fragment $T[i \dd j]$, we add a mark at position $i - a + 1$ of $T$.
    By the fact that $\mathcal{P}$ is a CT-rainbow and transitivity (\cref{fact:ct_transitivity}), at most one fragment of $\mathcal{P}$ can CT-match at any position of~$T$.
    Hence, the total number of  marks is  $\cO(m)$.
    Furthermore, \cref{lem:comb_mark} implies that if $\hdist_k(T[i \dd i+m) \leadsto P) \leq k$, then position $i$ must receive at least $\Delta - k \geq \Delta /2$ marks.
    There are $\cO(m/\Delta)$ such positions, which can be identified at no additional cost during the marking process. 
    Each of these  positions is then verified  in $\cO(mk)$ time using \cref{lem:compare}, yielding a total 
    verification time of $\cO(mk \cdot m/\Delta) = \cO(m^2k / \Delta)$.
\end{proof}

\subsection{The Periodic Case}

In this section, we address the (much more complicated) case when the analysis of the pattern according to \cref{lem:filtering_preprocess} returns a long fragment with a small CT-block-period.

We first establish combinatorial lemmas
that allow us to trim both~$P$ and $T[i \dd i+m)$ under certain conditions before applying \cref{lem:compare}.
Intuitively, whenever a long CT-block-periodic fragment of $T$ is aligned with a corresponding periodic fragment of $P$,  substitutions are not required in (a long part of) this fragment.
This trimming improves computational efficiency, as the complexity of \cref{lem:compare} depends on the lengths of the input strings.

\begin{figure}
    \centering
    \def\xl{.58em}
    \def\yl{1.1em}
    \def\nc{64}
    \newcommand{\pedge}[3][black!20!white,thin]{
        \node (src) at (n#2.260) {};
        \node (dst) at (n#3.280) {};
        \pgfmathparse{.6 + 1*((#2-#3-1)/63)^(5)}
        \edef\vdist{\pgfmathresult}
        \node[below=\vdist of src] (v) {};
        \pgfmathparse{.5 - .3*(floor((#2-#3)/40))}
        \edef\loosn{\pgfmathresult}
        \draw[-{Latex[right]},#1] (src.center) to (src |- v) to[in=270, out=270, looseness=\loosn] (dst |- v) to (dst.center);
    }
    \begin{tikzpicture}[x=\xl, y=\yl, 
    every node/.style={inner sep=0}, 
    unitnode/.style={minimum width=\xl, minimum height=\yl},
    xhalfsnug/.style={inner xsep=-.2pt},
    yhalfsnug/.style={inner ysep=-.2pt},
    every fit/.style={xhalfsnug,yhalfsnug},   
    cedge/.style={black},
    ]
        \foreach \i in {0,...,\nc} {
            \node[unitnode] (nn\i) at (\i, 2.5) {};
            \node[unitnode] (n\i) at (\i, 0) {};
        }

        \node[fit=(nn18)(nn38), pattern=north east lines, pattern color=red, draw] {};

        \foreach[count=\i from 1] \x in {18,34} {
        \draw[line width=3pt, white] (n\x.south east) to (nn\x.north east) to node[pos=1, inner sep=3pt] (dl\i) {} ++(0, 1);
        \draw[densely dashed] (n\x.south east) to (nn\x.north east) to ++(0, 1);
        }

        \draw[thick, decorate, decoration={brace, amplitude=5pt}] (dl1.south east) -- node[midway, above=1, unitnode] (brr) {} (dl2.south west);
        
        \node[fit=(nn1)(nn\nc), draw] (txtX) {};
        \node[fit=(n1)(n\nc), draw] (txt) {};
        
        \node[left=0 of txt]  {\small$Y =\enskip$};
        \node[left=0 of txtX] (txtXX) {\small$X =\enskip$};

        \node[above left=2 and 0 of txtX.north east, align=center] (txtfirstlemma) {\small \cref{lem:no_sub_in_run}: No substitutions here\\\small when achieving $\hdist(X \leadsto Y)$.};

        \node[below right=0 and -.5em of txtXX.north west |- txtfirstlemma.north, align=center, outer xsep=.5em] (txtsecondlemma) {\small \cref{lem:trim}: Trimming this fragment\\\small does not affect $\hdist(X \leadsto Y)$.};

        \draw[line width=4pt, white] (nn37.west) to ++(1,0) to[out=0, in=225] (txtfirstlemma.south west);
        \draw[Latex-] (nn37.west) to ++(1,0) to[out=0, in=225] (txtfirstlemma.south west);

        \draw[-Latex] (txtsecondlemma.east) to[out=0, in=90] (brr.south);


        \node[fit=(n1)(n5), draw] (pref) {};
        \node at (pref) {\small$Y_1$};

        \node[fit=(n54)(n64), draw] (suff) {};
        \node at (suff) {\small$Y_2$};

        \node[fit=(nn1)(nn5), draw] (npref) {};
        \node at (npref) {\small$X_1$};

        \node[fit=(nn54)(nn64), draw] (nsuff) {};
        \node at (nsuff) {\small$X_2$};

        \foreach[count=\i from -4,
        evaluate=\x as \xa using int(\x+1),
        evaluate=\x as \xb using int(\x+2),
        evaluate=\x as \xc using int(\x+3),
        ] \x in {34,38,42,46,50,6,10,14,18,22} {
            \node[fit=(n\x), draw, fill=black!20!white] (nc\i) {};
            \node[above=.2 of nc\i |- txt.north, fill=white, inner ysep=1pt, inner xsep=.2em] (ncl\i) {\small%
            \dagstuhlifnum\i=0$c_r$\else
            \dagstuhlifnum\i<0$c_{r\i}$\else
            $c_\i$\fi\fi};
            
            \pedge{\xa}{\x}            
            \pgfmathrandominteger{\cdst}{\x}{\xb}
            \dagstuhlifnum\cdst=\xb\pgfmathrandominteger{\bdst}{\x}{\xa}
            \else\pgfmathrandominteger{\bdst}{\cdst}{\xa}\fi
            \pedge{\xc}{\cdst}
            \pedge{\xb}{\bdst}
        }

        \node[fit=(n38), draw, fill=black] {};
        \node[fit=(n18), draw, fill=black] {};


        \pedge{1}{0};
        \pedge{2}{0};
        \pedge{3}{2};
        \pedge{4}{0};
        \pedge{5}{4};
        \pedge{6}{4};

        \pedge{10}{6};
        \pedge{14}{13};
        \pedge{18}{14};
        \pedge{22}{18};

        \pedge{38}{34};
        \pedge{42}{38};
        \pedge{46}{45};
        \pedge{50}{46};

        \pedge{54}{53};
        \pedge{55}{54};
        \pedge{56}{53};
        \pedge{57}{50};
        \pedge{58}{57};
        \pedge{59}{4};
        \pedge{60}{59};
        \pedge{61}{60};
        \pedge{62}{60};
        \pedge{63}{0};
        \pedge{64}{63};

        \foreach[count=\i from 1] \startnode in {n6.south west,n50.south west,n54.south west,n64.south east} {
            \draw[white, line width=3pt] (\startnode) to ++(0, -5);
            \draw[densely dashed] (\startnode) to
            node[pos=1, inner sep=3pt] (dl\i) {} ++(0, -4.5);
        }

        \draw[thick, decorate, decoration={brace, amplitude=5pt, mirror}] (dl1.north east) -- node[midway, below=.6, unitnode] (lbr) {} (dl2.north west);
        \draw[thick, decorate, decoration={brace, amplitude=5pt, mirror}] (dl3.north east) -- node[midway, below=.6, unitnode] (rbr) {}(dl4.north west);

        \draw[Latex-, red, thick] (lbr.north) to (lbr.south) to[out=270, in=270, looseness=.2] 
        node[midway, fill=white] {\ding{55}}
        node[midway, below=.75, black] {\small$\forall i \in [c_r + p, \absolute{Y}] :  \psv_Y(i) \notin [c_1, c_r)$} 
        (rbr.south) to (rbr.north);

        \node[fit=(nc5)(nc-4)] (ctr) {};
        \node at (ctr) {$\cdots$};
    \end{tikzpicture}
    \caption{An illustration of the settings in \cref{lem:no_sub_in_run,lem:trim} with $k=3$.
    The $\psv_Y$ array is shown with grey arrows below string $Y$.
    \cref{lem:no_sub_in_run} establishes that any sequence of $\hdist(X \leadsto Y)$ substitutions that transforms $X$ into $X'$ such that ${X' \approx Y}$ does not modify the fragment highlighted in red; the condition ``$\forall i \in [c_r + p, \absolute{Y}] :  \psv_Y(i) \notin [c_1, c_r)$'' ensures that none of these substitutions ``interact''  with this fragment.
    \cref{lem:trim} establishes that we can trim a pair of fragments (one in $X$ and one in $Y$) without altering the value of $\hdist(X \leadsto Y)$.}
    \label{fig:placeholder}
\end{figure}

\begin{lemma}\label{lem:no_sub_in_run}
    Consider $p, r, k \in \mathbb Z_{+}$ and strings $X = X_1UX_2$ and $Y = Y_1VY_2$ such that all of the following hold:
    \smallskip
    \begin{itemize}
        \item $\absolute{X_1} = \absolute{Y_1}$, $\absolute{X_2} = \absolute{Y_2}$, and $\absolute{U} = \absolute{V} = rp$, where $r \geq 2k + 1$;
        \item $U \approx V$;
        \item there is no position $i \in [\absolute{Y_1V} + 1, \absolute{Y}]$ such that $\psv_Y(i) \in [\absolute{Y_1} + 1, \absolute{Y_1V}-p]$;
        \item if we define $\forall i \in [1, r] : c_i := \absolute{X_1} + (i-1) p + 1$, then $X[c_i]$ is a minimum in $X[c_i\dd c_r + p)$, and $Y[c_i]$ is a minimum in $Y[c_i\dd c_r + p)$;
        \item $\hdist(X \leadsto Y) \leq k$.
    \end{itemize}
    \smallskip
    Any sequence of $\hdist(X \leadsto Y)$ substitutions that transforms $X$ into $X'$ with $X' \approx Y$ does not perform any substitution in fragment $X[c_{k + 1} \dd c_{r - k}]$.
\end{lemma}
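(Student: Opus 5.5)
The approach is an exchange argument by contradiction. Fix an optimal sequence of $d := \hdist(X \leadsto Y) \le k$ substitutions producing $X' \approx Y$. Let $S$ be the set of substituted positions, and suppose some $s \in S$ lies in $[c_{k+1}, c_{r-k}]$. Since $|S| \le k$, the $r$ blocks $B_i := [c_i, c_i + p)$ for $i \in [1,r]$ contain at least $r - k \ge k+1$ blocks untouched by $S$. By pigeonhole, there is an untouched block index $a \le k+1$ to the left of $s$ and an untouched block index $b \ge r-k$ to the right of $s$. More carefully, among the $k+1$ blocks $B_1,\dots,B_{k+1}$ at least one is untouched unless all of them contain substitutions, in which case $s$ would be a $(k+1)$-st substitution. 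The same holds symmetrically on the right. So $a < b$, block $B_a$ and block $B_b$ are untouched, and $s \in [c_a, c_b)$ (adjusting endpoints, since $s \le c_{r-k} \le c_b$ and $s\ge c_{k+1}\ge c_a$; if $s=c_b$ pick the untouched block differently or handle the boundary directly).

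The core step is to build a cheaper string $X''$. Keep $X'$ outside the window $W := [c_a + 1, c_b)$, and inside $W$ restore the original characters of $X$, i.e.\ set $X''[W] := X[W]$. This undoes all substitutions in $W$, including $s$, so $X''$ uses fewer than $d$ substitutions. It remains to show $X'' \approx Y$, which contradicts optimality. By \cref{lem:psv_equiv}, it suffices to check $\psv_{X''}(i) = \psv_Y(i)$ for all $i$. I split positions into three ranges.

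\emph{(i) Positions $i \in W$ and $i = c_b$.} Since $X[c_a]$ is a minimum in $X[c_a \dd c_r+p)$ and $X''[c_a] = X[c_a]$ (block $a$ is untouched), every previous-smaller-or-equal value of such an $i$ lands in $[c_a, i)$. These values are therefore determined by $X[c_a \dd c_b]$. The same holds for $Y$ by the fourth hypothesis. Because $U \approx V$, \cref{fact:subeq} gives $X[c_a \dd c_b] \approx Y[c_a \dd c_b]$, so the $\psv$ values agree there.

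\emph{(ii) Positions $i < c_a$ or $i \in [c_a, c_a]$.} Here nothing changed relative to $X'$, and $\psv$ only looks left, so these values equal $\psv_{X'}(i) = \psv_Y(i)$.

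\emph{(iii) Positions $i > c_b$.} This is the main obstacle. For $i$ inside $[c_b, c_r+p)$, the minimum-at-$c_b$ property again confines $\psv$ to the unchanged suffix region. Note that $X''[c_b] = X'[c_b] = X[c_b]$, and $X''[c_b]$ is a minimum of $X''[c_a\dd c_r+p)$ as well, since it is at most every value in the restored window and the $X'$-part after $c_b$ is handled by $Y$'s structure, whose $\psv$ points into $[c_b,\cdot)$. For $i$ in $X_2$, the third hypothesis says $\psv_Y(i) \notin [c_1, c_r)$, so in $X'$ the value $\psv_{X'}(i)$ skips the whole window. This means every $X'[j]$ with $j \in [c_1, c_r)$ is greater than $X'[i]$, or at least those after $\psv$; concretely, all $j \in (\psv(i), i)$ satisfy $X'[j] > X'[i]$. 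I need the same for $X''$ on $W$. Every value in $X''[W]$ is $\ge X[c_a] = X'[c_a]$, and $c_a \in (\psv_{X'}(i), i)$ when $\psv_{X'}(i) < c_1$, giving $X'[c_a] > X'[i]$ and hence strict domination. If $\psv_{X'}(i) \ge c_r$, the window lies left of $\psv$ and is irrelevant. The first inequality relies on the fourth hypothesis applied to $X$ with block $a$ untouched.

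Care is needed with the weak-versus-strict tie-breaking and with the boundary $s = c_{r-k}$. I expect these details, rather than the overall structure, to require the most attention.
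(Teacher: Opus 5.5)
Your argument is essentially the paper's proof. Your $X''$ is exactly the paper's hybrid string $Z = X'[1\dd c_a)\cdot X[c_a\dd c_b]\cdot X'(c_b\dd \absolute{X}]$ between two unmodified anchors, and your ranges (ii), (i), (iii) are the paper's Cases~1--4. The only difference is framing: you argue by contradiction from a substituted position $s$ and find untouched anchors by pigeonhole, whereas the paper takes the extremal unmodified anchors $c_i,c_j$ directly and then counts the modified $c_h$'s outside $[c_i,c_j]$.

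One claim in (iii) is false and should be dropped. $X''[c_b]$ need not be a minimum of $X''[c_a\dd c_r+p)$, since entries of $X(c_a\dd c_b)$ are only bounded below by $X[c_a]\le X[c_b]$. Also, the minimality of $X[c_b]$ in $X$ says nothing about $X''(c_b\dd c_r+p)=X'(c_b\dd c_r+p)$. What you need, and what suffices, is $\psv_{X'}(x)=\psv_Y(x)\ge c_b$ for $x\in(c_b,c_r+p)$, which follows from the minimality of $Y[c_b]$ and $X'\approx Y$, together with $X''[c_b\dd x]=X'[c_b\dd x]$.
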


\begin{proof}
    Consider any sequence of $\hdist(X \leadsto Y)$ substitutions that transforms $X$ into $X'$ such that ${X' \approx Y}$.
    We fix $i,j \in [1,r]$ to be the minimal and maximal values, respectively, such that $X[c_{i}]$ and $X[c_{j}]$ are not modified. (There are at least two distinct values because $r \geq \hdist(X \leadsto Y) + 2$.)
    We show that no substitution is performed in $X[c_i\dd c_j]$.

    Let $Z = X'[1 \dd c_i) \cdot X[c_i \dd c_j] \cdot X'(c_j \dd \absolute{X}] = X'[1 \dd c_i] \cdot X(c_i \dd c_j) \cdot X'[c_j \dd \absolute{X}]$. We claim that $Z \approx Y$.
    If not, there would exist $x \in [1, \absolute{X}]$ with $\psv_Z(x) \neq \psv_Y(x)$. We show that such $x$ does not exist by distinguishing between the following four cases:
    \begin{description}
        \item[Case 1:] $x \in [1, c_i]$. Since $Z$ has prefix $X'[1 \dd c_i]$, and due to $X' \approx Y$, \cref{fact:subeq}, and \cref{lem:psv_equiv}, we have $\forall x \in [1, c_i] : \psv_Z(x) = \psv_{X'}(x) = \psv_Y(x)$.
        \item[Case 2:] $x \in (c_i, c_j]$. $Y[c_i]$ is minimal in $Y[c_i \dd c_r]$ by assumptions.
        Hence, $\psv_Y(x) \geq c_i$. We obtain $\psv_Z(x) = \psv_Y(x)$ due to \cref{lem:psv_equiv} and the observation that:
        \[
        Z[c_i \dd c_j] = X[c_i \dd c_j] = {} U(ip - p \dd jp-p + 1]
        {}\approx {} V(ip - p \dd jp-p + 1] = Y[c_i \dd c_j].
        \]
        \item[Case 3:] $x \in (c_j,c_r + p)$. Since $Y[c_j]$ is minimal in $Y[c_j\dd c_r + p)$, we have $\psv_Y(x) \geq c_j$. Therefore, $\psv_Z(x) = \psv_{X'}(x) = \psv_Y(x)$ follows from
        $Z[c_j\dd c_r + p) = X'[c_j\dd c_r + p) \approx Y[c_j\dd c_r + p)$.
        \item[Case 4:] $x \in [c_r + p, \absolute{X}]$.
        By the assumptions, we have $\psv_Y(x) \notin [\absolute{Y_1} + 1, \absolute{Y_1V}-p] = [c_1, c_r)$. 
        First, consider the case when $\psv_Y(x) \geq c_r$.
        Note that $c_j \leq c_r$ implies $Z[c_r\dd \absolute{X}] = X'[c_r\dd \absolute{X}]$. Hence, $\psv_Z(x) = \psv_{X'}(x) = \psv_Y(x)$ follows from $X'\approx Y$, together with \cref{fact:subeq,lem:psv_equiv}.
        If, however, $\psv_{X'}(x) = \psv_Y(x) < c_1 \leq c_i$, then clearly $X'[x] < X'[c_i] = X[c_i]$.
        Since $X[c_i] = Z[c_i]$ is minimal in $X[c_i\dd c_j) = Z[c_i\dd c_j)$, it cannot be that $\psv_Z(x) \in [c_i, c_j)$, which implies $\psv_Z(x) = \psv_{X'}(x) = \psv_Y(x)$.
    \end{description}
    We have shown that $Z \approx Y$.
    Hence, the considered sequence of substitutions did not modify $X[c_i\dd c_j]$.
    Finally, by the definition of $i$ and $j$, the characters in positions $c_1, \dots, c_{i - 1}$ and $c_{j + 1}, \dots, c_r$ in $X$ must have been modified.
    We therefore have $(i - 1) + (r - j) \leq \hdist(X \leadsto Y) \leq k$, which implies $i \leq k+1$ and $j \geq r - k$.
    Consequently, there are no substitutions in $X[c_{k + 1} \dd c_{r - k}]$.
\end{proof}

\begingroup
\def\ell{a}
\def\err{b}
\begin{lemma}\label{lem:delete_retain_ct_eq}
    Consider strings $X$ and $Y$ with $X \approx Y$.
    Let $\ell, \err \in [1, \absolute{Y}]$ such that $\ell \leq \err$ and $Y[\ell]$ is minimal in $Y[\ell\dd \err]$.
    Then $X[1\dd \ell] \cdot X(\err\dd \absolute{X}] \approx Y[1\dd \ell] \cdot Y(\err\dd \absolute{X}]$.
\end{lemma}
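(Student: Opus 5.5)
We work with \cref{lem:psv_equiv}. Write $X^\ast = X[1\dd a]\cdot X(b\dd \absolute{X}]$ and $Y^\ast = Y[1\dd a]\cdot Y(b\dd \absolute{Y}]$; both have length $\absolute{X}-(b-a)$. We show $\psv_{X^\ast}(t) = \psv_{Y^\ast}(t)$ for every position $t$. Define the index map $\phi(t)=t$ for $t\le a$ and $\phi(t)=t+(b-a)$ for $t>a$, so that $X^\ast[t]=X[\phi(t)]$ and $Y^\ast[t]=Y[\phi(t)]$. Note that $\phi$ is order-preserving.

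First, since $X\approx Y$, \cref{fact:subeq} applied to $X[a\dd b]\approx Y[a\dd b]$ gives that $X[a]$ is also minimal in $X[a\dd b]$. This holds because leftmost minima correspond to roots, and the tie case is harmless: $Y[a]$ is minimal, hence the leftmost minimum of $Y[a\dd b]$ unless an equal value occurs earlier in the fragment, which is impossible since $a$ is the first position. So both strings have the property that the value at $a$ is $\le$ every value in $(a,b]$.

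For $t\le a$, $X^\ast[1\dd t]=X[1\dd t]$, so $\psv_{X^\ast}(t)=\psv_X(t)=\psv_Y(t)=\psv_{Y^\ast}(t)$. For $t>a$, set $x=\phi(t)>b$. We claim that $\psv_{X^\ast}(t)=\phi^{-1}(\psv_X(x))$ whenever $\psv_X(x)\notin(a,b]$, and $\psv_{X^\ast}(t)=a$ when $\psv_X(x)\in(a,b]$.

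In the first case, the positions of $X$ strictly between $\psv_X(x)$ and $x$ all have values $>X[x]$. The positions of $X^\ast$ between the images form a subset of these, and $\psv_X(x)$ survives the deletion, so the claim follows. Note that $\psv_X(x)=0$ corresponds to $0$.

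In the second case, $X[\psv_X(x)]\le X[x]$ with $\psv_X(x)\in(a,b]$ gives $X[a]\le X[\psv_X(x)]\le X[x]$. Every position in $(b,x)$ has value $>X[x]$, so $a$ is the nearest surviving position with value $\le X^\ast[t]$.

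The same analysis holds for $Y$. Since $\psv_X(x)=\psv_Y(x)$ by \cref{lem:psv_equiv}, both strings fall into the same case with the same value, so $\psv_{X^\ast}(t)=\psv_{Y^\ast}(t)$. Applying \cref{lem:psv_equiv} again concludes the proof.

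The main obstacle is the deleted-range case, where we need minimality of $a$ over $[a,b]$ in both $X$ and $Y$. This is exactly why we first transfer the minimality hypothesis from $Y$ to $X$, which is where the leftmost tie-breaking convention matters.
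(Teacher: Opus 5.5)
Your proof is correct and follows the paper's argument: you transfer the minimality of position $a$ from $Y$ to $X$ via \cref{fact:subeq}, split on whether $\psv_Z(x)$ lands in the deleted range $(a,b]$ to show that $\psv$ of the shortened string is determined by $\psv_Z$ alone, and conclude with \cref{lem:psv_equiv}. The only differences are cosmetic: you merge the paper's two non-deleted cases into one via the order-preserving map $\phi$, and you write out the justifications that the paper leaves as ``easy to see''.
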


\begin{proof}
    The assumption that $X \approx Y$, together with \cref{fact:subeq,lem:psv_equiv}, imply that $X[\ell]$ is minimal in $X[\ell\dd \err]$.
    For $Z \in \{X, Y\}$, let $\hat Z = Z[1\dd \ell] \cdot Z(\err\dd \absolute{Z}]$. It is easy to see that the following hold:
    \begin{itemize}
        \item For $i \in [1, \ell]$,  $\psv_{\hat Z}(i) = \psv_{Z}(i)$. (This is trivial.)
        \item For $i \in (\ell, \absolute{\hat Z}]$ with $\psv_Z(i + \err - \ell) \in (\err, \absolute{Z}]$,  we have $\psv_{\hat Z}(i) = \psv_Z(i + \err - \ell) - \err + \ell$. 
        \item For $i \in (\ell, \absolute{\hat Z}]$ with $\psv_Z(i + \err - \ell) \in (\ell, \err]$, we have $\psv_{\hat Z}(i) = \ell$. 
        \item For $i \in (\ell, \absolute{\hat Z}]$ with $\psv_Z(i + \err - \ell) \in [1, \ell]$, we have  $\psv_{\hat Z}(i) = \psv_Z(i + \err - \ell)$. 
    \end{itemize}
    Note that the definition of $\psv_{\hat Z}$ depends only on the $\psv_Z$ array. Since $X \approx Y$, we have that $\psv_X = \psv_Y$ by \cref{lem:psv_equiv}.
    Therefore, it follows that $\psv_{\hat X} = \psv_{\hat Y}$.
    By applying \cref{lem:psv_equiv} again, we conclude that $\hat X \approx \hat Y$.
\end{proof}
\endgroup

\begin{lemma}\label{lem:trim}
    Consider $p, r, k \in \mathbb Z_{+}$ and strings $X = X_1UX_2$ and $Y = Y_1VY_2$ such that all of the following hold:
    \smallskip
    \begin{itemize}
        \item $\absolute{X_1} = \absolute{Y_1}$, $\absolute{X_2} = \absolute{Y_2}$, and $\absolute{U} = \absolute{V} = rp$, where $r \geq 2k + 3$;
        \item $U \approx V$;
        \item there is no position $i \in [\absolute{Y_1V} + 1, \absolute{Y}]$ such that $\psv_Y(i) \in [\absolute{Y_1} + 1, \absolute{Y_1V}-p]$;
        \item if we define $\forall i \in [1, r] : c_i := \absolute{X_1} + (i-1)p + 1$, then $X[c_i]$ is a minimum in $X[c_i\dd c_r + p)$, and $Y[c_i]$ is a minimum in $Y[c_i\dd c_r + p)$.
    \end{itemize}
    \smallskip
    Then $\hdist_k(\hat X \leadsto \hat Y) = \hdist_k(X \leadsto Y)$, where
    \[\hat X = X[1\dd c_{k + 1}] \cdot X(c_{r - k} - p\dd \absolute{X}]\quad\text{and}\quad\hat Y = Y[1\dd c_{k + 1}] \cdot Y(c_{r - k} - p\dd \absolute{Y}].\]
\end{lemma}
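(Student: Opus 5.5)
We prove the two inequalities $\hdist_k(\hat X \leadsto \hat Y) \le \hdist_k(X \leadsto Y)$ and $\hdist_k(X \leadsto Y) \le \hdist_k(\hat X \leadsto \hat Y)$ separately. In both, $\hat X$ and $\hat Y$ are obtained by deleting the aligned block $(c_{k+1} \dd c_{r-k}-p]$, which consists of $r-2k-2 \ge 1$ full periods. Throughout, write $\ell := c_{k+1}$ and $b := c_{r-k}-p = c_{r-k-1}$.

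\textbf{Direction $\le$.} Assume $\hdist(X \leadsto Y) = d \le k$; otherwise there is nothing to prove. Since $r \ge 2k+3 \ge 2k+1$, \cref{lem:no_sub_in_run} applies. Hence any optimal sequence of substitutions producing $X' \approx Y$ leaves $X[c_{k+1} \dd c_{r-k}]$ untouched, in particular the deleted block $(\ell \dd b]$. The assumption that $Y[\ell]$ is minimal in $Y[\ell \dd b]$ holds because $Y[c_{k+1}]$ is a minimum of $Y[c_{k+1} \dd c_r+p)$. So \cref{lem:delete_retain_ct_eq} applied to $X' \approx Y$ with $a=\ell$ and $b$ gives $X'[1\dd \ell]X'(b\dd] \approx \hat Y$. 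The left-hand side differs from $\hat X$ in exactly the $d$ substituted positions, since none of them lies in the deleted block. Therefore $\hdist(\hat X \leadsto \hat Y) \le d$.

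\textbf{Direction $\ge$.} Assume $\hdist(\hat X \leadsto \hat Y) = d \le k$ with witness $\hat X'$. First verify that $\hat X,\hat Y$ satisfy the hypotheses of \cref{lem:trim} (equivalently of \cref{lem:no_sub_in_run}) with $r' = r-(r-2k-2) = 2k+2 \ge 2k+1$ periods, block $\hat U$, and positions $\hat c_1,\dots,\hat c_{r'}$.
\begin{itemize}
\item $\hat U \approx \hat V$ follows from \cref{lem:delete_retain_ct_eq} applied to $U \approx V$.
\item The minimality conditions transfer directly, since the remaining period starts keep their suffix ranges up to the deletion.
\item For the psv condition, use the four-case description of $\psv_{\hat Y}$ from the proof of \cref{lem:delete_retain_ct_eq}: positions after $\hat V$ either point beyond the block, point before $c_1$, or point to $\ell$ only if they originally pointed into $(\ell,b]\subseteq[c_1,c_r)$. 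The last case is excluded by the hypothesis on $Y$.
\end{itemize}
\cref{lem:no_sub_in_run} then says that $\hat X'$ does not modify $\hat X[\hat c_{k+1}\dd \hat c_{k+2}]$, which is the image of the period starting at $\ell$ together with $X[c_{r-k}]$.

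Next reinsert the block to build $X' := \hat X'[1\dd \ell]\cdot X(\ell\dd b]\cdot \hat X'(\ell\dd]$. It differs from $X$ in at most $d$ positions, and the goal is $X' \approx Y$. We check $\psv_{X'}(x) = \psv_Y(x)$ case by case, mirroring the proof of \cref{lem:no_sub_in_run}.
\begin{itemize}
\item For $x \le \ell$, this holds via the prefix.
\item For $x\in(\ell, c_{r-k}]$, the fragment $X'[\ell\dd c_{r-k}]$ equals $X[\ell \dd c_{r-k}]$, and this CT-matches $Y[\ell\dd c_{r-k}]$ by $U\approx V$ and \cref{fact:subeq}. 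Moreover, both psv values are at least $\ell$ by minimality of $X'[\ell]=X[\ell]$ and $Y[\ell]$.
\item For later $x$, the psv in $\hat X'$ corresponds to the psv in $X'$ shifted by the block length. A pointer landing at $\ell$ in $\hat X'$ must become a pointer into the unchanged periodic block. Here we use that the unchanged period in $\hat X'$ occupies both sides of the cut, so the relative order against the reinserted copies is identical. Pointers landing before $\ell$ remain valid because $X[\ell]$ is a minimum of the reinserted block.
\end{itemize}

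The main obstacle is this last case. It requires showing that the pointer targets inside the reinserted periodic region coincide for $X'$ and $Y$. The plan is to argue, as in Case 4 of \cref{lem:no_sub_in_run}, that for $x \ge c_r+p$ pointers never enter $[c_1,c_r)$ in $Y$, and to handle $x \in (c_{r-k}, c_r+p)$ by the local CT-match of the untouched periods plus the minimality of the period heads.
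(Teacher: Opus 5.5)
Your first direction is the paper's Claim 1 and is correct. The set-up of your second direction also matches the paper: you check the hypotheses of \cref{lem:no_sub_in_run} for $\hat X,\hat Y$ with $\hat r=2k+2$, then reinsert the deleted block to get $X'=\hat X'[1\dd \ell]\cdot X(\ell\dd b]\cdot\hat X'(\ell\dd]$. Your cases $x\le\ell$ and $x\in(\ell,c_{r-k}]$ are fine.

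The gap is the remaining case $x>c_{r-k}$. You leave it as a plan, and the reasoning you do give there would not work. Write $\hat x=x-(b-\ell)$ and suppose $\psv_{\hat X'}(\hat x)=\ell$. Then $\psv_{X'}(x)$ is the last $j\in[\ell,b]$ with $X[j]\le X'[x]$. The reinserted characters $X(\ell\dd b]$ do not occur in $\hat X'$, and $X'[x]$ may be a substituted value. So $\hat X'\approx\hat Y$ gives no information on where $X'[x]$ falls relative to them. Your claim that ``the relative order against the reinserted copies is identical'' is therefore unjustified. The lemma does not even assume that the blocks of $U$ CT-match one another, so there are no ``copies'' to appeal to. Likewise, ``local CT-match of the untouched periods'' is unavailable on $(c_{r-k},c_r+p)$. \cref{lem:no_sub_in_run} applied to $\hat X$ only protects $\hat X[\hat c_{k+1}\dd\hat c_{k+2}]$, so periods after $c_{r-k}$ may carry substitutions.

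The missing idea is that the problematic situation never occurs: for every $x>c_{r-k}$ one has $\psv_Y(x)\notin[\ell,b]$. So no pointer target inside the reinserted region ever has to be matched. Split on $\psv_Y(x)$:
\begin{itemize}
\item If $\psv_Y(x)>b$, then $x$ and its pointer lie in the common suffix $Y(b\dd]=\hat Y(\ell\dd]$. Hence $\psv_{\hat Y}(\hat x)=\psv_Y(x)-(b-\ell)>\ell$. Transfer this to $\hat X'$ using $\hat X'\approx\hat Y$, and back to $X'$ using $X'(b\dd]=\hat X'(\ell\dd]$.
\item If $\psv_Y(x)\le b$, then minimality of $Y[c_{r-k}]$ in $Y[c_{r-k}\dd c_r+p)$ forces $x\ge c_r+p$. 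The third hypothesis then gives $\psv_Y(x)<c_1\le\ell$. Hence $\psv_{\hat X'}(\hat x)=\psv_{\hat Y}(\hat x)=\psv_Y(x)<\ell$. This yields $X'[x]<\hat X'[\ell]=X[\ell]\le X[j]$ for every reinserted $j\in(\ell,b]$, so $\psv_{X'}(x)=\psv_Y(x)$.
\end{itemize}
This is exactly how the paper finishes its second claim; adding it completes your proof.
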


\begin{proof}
    It suffices to prove the following two claims, which together imply the statement.

    \begin{claim}\label{claim:st1}
        $\hdist(X \leadsto Y) \leq k$ implies $\hdist(\hat X \leadsto \hat Y) \leq \hdist(X \leadsto Y)$.
    \end{claim}
    \begin{claim}\label{claim:st2}
        $\hdist(\hat X \leadsto \hat Y) \leq k$ implies $\hdist(\hat X \leadsto \hat Y) \geq \hdist(X \leadsto Y)$.
    \end{claim}

    \begin{claimproof}[Proof of \cref{claim:st1}]
    Assume that $\hdist(X \leadsto Y) \leq k$ and consider any sequence of $\hdist(X \leadsto Y)$ substitutions that transform $X$ into a string $X'$ with $X' \approx Y$.
    By \cref{lem:no_sub_in_run}, we have $X' = X'[1\dd c_{k + 1}) \cdot X[c_{k + 1}\dd c_{r - k}] \cdot X'(c_{r - k} \dd \absolute{X}]$. 
    Note that we can perform the same sequence of substitutions (with an appropriate shift) on $\hat X$ as well.
    Hence, there is a sequence of $\hdist(X \leadsto Y)$ substitutions that transforms $\hat X$ into 
    \[\hat X' : = X'[1\dd c_{k + 1}) \cdot X[c_{k + 1}] \cdot X(c_{r - k} - p \dd c_{r - k}] \cdot X'(c_{r - k} \dd \absolute{X}].\]%
    It remains to be shown that $\hat X' \approx \hat Y$, which implies $\hdist(\hat X \leadsto \hat Y) \leq \hdist(X \leadsto Y)$.
    Note that $\hat X'$ and $\hat Y$ can be obtained by deleting fragments $X'(c_{k + 1}\dd c_{r - k} - p]$ and $Y(c_{k + 1}\dd c_{r - k} - p]$ from $X'$ and $Y$, respectively.
    By the definition of $Y$, we know that $Y[c_{k + 1}]$ is a minimum in $Y[c_{k + 1}\dd c_{r - k} - p]$. Hence, we can apply \cref{lem:delete_retain_ct_eq} to $X'$ and $Y$ with $a = c_{k + 1}$ and $b = c_{r - k} - p$, thus obtaining $\hat X'\approx \hat Y$.
    \end{claimproof}
    
    \begin{claimproof}[Proof of \cref{claim:st2}]
    Assume that $\hdist(\hat X \leadsto \hat Y) \leq k$ and consider any sequence of $\hdist(\hat X \leadsto \hat Y)$ substitutions that transform $\hat X$ into a string $\hat X'$ with $\hat X'\approx \hat Y$.
    It can be readily verified that the strings $\hat X$ and $\hat Y$ still satisfy the conditions of \cref{lem:no_sub_in_run} with parameter $\hat r = 2k + 2$.
    Therefore, we have: 
    \begin{alignat*}{1}
        \hat X' = {}& \hat X'[1\dd c_{k + 1}) \cdot \hat X[c_{k + 1}\dd c_{\hat r - k}] \cdot \hat X'(c_{\hat r - k} \dd \absolute{\hat X}]\\
        {} = {}&\hat X'[1\dd c_{k + 1}) \cdot X[c_{k + 1}] \cdot X(c_{r - k} - p \dd c_{r - k}] \cdot \hat X'(c_{k + 2} \dd \absolute{\hat X}].
    \end{alignat*}
    We observe that the same sequence of substitutions can also be applied directly to $X$, resulting in the string:
    $X' = \hat X'[1\dd c_{k + 1}) \cdot X[c_{k + 1}\dd c_{r - k}] \cdot \hat X'(c_{k + 2} \dd \absolute{\hat X}]$.
    
    It remains to be shown that $X' \approx Y$, which implies $\hdist(\hat X \leadsto \hat Y) \geq {\hdist(X \leadsto Y)}$. Note that $\hat X'$ and $X'$, as well as $\hat Y$ and $Y$, share a prefix of length $c_{k + 1}$.
    Due to ${\hat X' \approx \hat Y}$, combined with \cref{fact:subeq,lem:psv_equiv}, it is clear that, for every $x \in [1, c_{k + 1}]$, $\psv_{X'}(x) = \psv_{\hat X'}(x) = \psv_{\hat Y}(x) = \psv_Y(x)$.
    
    Next, we observe that $X'[c_{k + 1}\dd c_{r - k}] = X[c_{k + 1}\dd c_{r - k}]$. Recall that $X[c_{k + 1}]$ and $Y[c_{k + 1}]$ are minimal in $X[c_{k + 1}\dd c_{r - k}]$ and $Y[c_{k + 1}\dd c_{r - k}]$, respectively.
    Due to $U \approx V$, combined with \cref{fact:subeq,lem:psv_equiv}, we have $\forall x \in (c_{k + 1}, c_{r - k}] : \psv_{X'}(x) = \psv_Y(x) \geq c_{k + 1}$.
    
    Finally, we consider $x \in (c_{r - k}, \absolute{X}]$.
    Let $\hat x = x - \absolute{X} + \absolute{\hat X}$.
    We observe that $\hat X'$ and $X'$, as well as $\hat Y$ and~$Y$, share a suffix of length $\ell := \absolute{X} - c_{r - k} + p$.
    If $\psv_Y(x) > c_{r - k} - p$, then $\hat X' \approx \hat Y$ already implies $\psv_{\hat X'}(\hat x) = \psv_{\hat Y}(\hat x)$.
    It remains to consider the case when $x \in (c_{r - k}, \absolute{X}]$ and $\psv_Y(x) \leq c_{r - k} - p$. In this case, since $Y[c_{r - k}]$ is minimal in $Y[c_{r - k} \dd c_r + p)$, it holds that $x \geq c_r + p = \absolute{Y_1V} + 1$, and, by the condition of the lemma, $\psv_Y(x) \leq \absolute{Y_1} = c_1 - 1$.
    Note that $\psv_{\hat Y}(\hat x) = \psv_Y(x)$, as we obtained $\hat Y$ by merely deleting some elements between $\psv_Y(x)$ and~$x$.
    Due to $\hat X' \approx \hat Y$, we have $\psv_{\hat X'}(\hat x) = \psv_{\hat Y}(\hat x)$.
    Finally, we obtain $X'$ from $\hat X'$ by inserting into $\hat X'$ values that are at least as large as $\hat X'[c_{k + 1}] = X[c_{k + 1}]$ between positions $\psv_{\hat X'}(\hat x)$ and $\hat x$.
    Since $\hat X'[c_{k + 1}] \in (\psv_{\hat X'}(\hat x), \hat x)$, we know that $\hat X'[\hat x] < \hat X'[c_{k + 1}]$, and thus all inserted values are at least $\hat X'[\hat x]$.
    Thus, $\psv_{X'}(x) = \psv_{\hat X'}(\hat x) = \psv_{\hat Y}(\hat x) = \psv_Y(x)$, as required.
    \end{claimproof}
    
    The combination of the two claims yields that either both $\hdist(\hat X \leadsto \hat Y)$ and $\hdist(X \leadsto Y)$ are at most $k$ and are in fact equal, or they are both greater than $k$.
\end{proof}

A symmetric claim holds for \emph{next-smaller-values}. For completeness, we provide a proof of the corresponding lemma below in 
\cref{app:symmetry}.

\begin{restatable}[analogue of \cref{lem:trim}]{lemma}{restatetrimnsv}\label{lem:trim:nsv}
    Consider $p, r, k \in \mathbb Z_{+}$ and strings $X = X_1UX_2$ and $Y = Y_1VY_2$ such that all of the following hold:
    \smallskip
    \begin{itemize}
        \item $\absolute{X_1} = \absolute{Y_1}$, $\absolute{X_2} = \absolute{Y_2}$, and $\absolute{U} = \absolute{V} = rp$, where $r \geq 2k + 3$;
        \item $U \approx V$;
        \item there is no position $i \in [1, \absolute{Y_1}]$ such that $\nsv_Y(i) \in [\absolute{Y_1} + p + 1, \absolute{Y_1V}]$;
        \item if we define $\forall i \in [1, r] : c_i := \absolute{X_1} + ip$, then $X[c_i]$ is a minimum in $X(c_1 - p\dd c_i]$, and $Y[c_i]$ is a minimum in $Y(c_1 - p\dd c_i]$.
    \end{itemize}
    \smallskip
    Then $\hdist_k(\hat X \leadsto \hat Y) = \hdist_k(X \leadsto Y)$, where
    \[\hat X = X[1\dd c_{k + 1} + p) \cdot X[c_{r - k}\dd \absolute{X}]\quad\text{and}\quad\hat Y = Y[1\dd c_{k + 1} + p) \cdot Y[c_{r - k}\dd \absolute{Y}].\]
\end{restatable}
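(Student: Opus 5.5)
The plan is to derive \cref{lem:trim:nsv} from \cref{lem:trim} by a reversal argument rather than redoing the whole proof. For a string $S$, let $S^R$ denote its reversal. Reversal swaps the roles of $\psv$ and $\nsv$, except for tie-breaking: in $S$, $\psv_S$ uses $\leq$ while $\nsv_S$ uses strict $<$, and leftmost minima become rightmost minima. To fix this, I would map each string $S$ to $\tilde S$, defined by reversing $S$ and replacing each character by the pair $(S[i], -i)$ ordered lexicographically. This makes all characters distinct, and ties are then broken so that the original leftmost-minimum convention turns into the one used by $\psv$ on the reversed string.

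The first step is a symmetry fact: $S_1\approx S_2$ if and only if $\tilde S_1\approx\tilde S_2$, since by the analogue of \cref{lem:psv_equiv} (\cref{lem:nsv_equiv}) CT-equality is equivalent to equality of $\nsv$ arrays, and $\nsv_S(i)=j$ corresponds exactly to $\psv_{\tilde S}(|S|+1-i)=|S|+1-j$. I must also check that substitutions are preserved. A substitution in $S$ at position $i$ corresponds to a substitution in $\tilde S$ at position $|S|+1-i$. The tie-break component $-i$ is tied to positions, so I should phrase $\hdist$ as minimising over target strings $S'$ that differ from $S$ in the fewest positions. The transformation acts position-wise, so $\hdist(X\leadsto Y)=\hdist(\tilde X\leadsto\tilde Y)$.

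The second step is to verify that $\tilde X=\tilde X_2\tilde U\tilde X_1$ and $\tilde Y$ satisfy the hypotheses of \cref{lem:trim} with the same $p,r,k$:
\begin{itemize}
\item The lengths match, and $\tilde U\approx\tilde V$ holds by the first step.
\item The $\nsv$ condition on $[1,|Y_1|]$ becomes the required $\psv$ condition on the positions after $\tilde V$.
\item The positions $c_i=|X_1|+ip$ map to the $\psv$-side positions $c'_{r+1-i}$.
\item ``$X[c_i]$ is a minimum in $X(c_1-p\dd c_i]$'' becomes ``$\tilde X[c'_j]$ is a minimum in $\tilde X[c'_j\dd c'_r+p)$''.
\end{itemize}
The last check needs care. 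A minimum in the non-strict sense need not be the leftmost minimum, but \cref{lem:trim} only asks for ``a minimum'', and the tie-break encoding only refines the order. I must confirm that this refinement does not destroy the minimality required, or else argue directly with non-strict minima. I expect this to be the main obstacle, and I would first try to avoid it by choosing the tie-break so that later positions (in $S$) count as smaller. Then $X[c_i]$, being the rightmost candidate in its window, remains a minimum after refinement.

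Finally, I apply \cref{lem:trim} and map the trimmed strings back. Reversing $\tilde X[1\dd c'_{k+1}]\cdot\tilde X(c'_{r-k}-p\dd]$ removes exactly the block $X[c_{k+1}+p\dd c_{r-k})$, giving $\hat X=X[1\dd c_{k+1}+p)\cdot X[c_{r-k}\dd|X|]$. This is an index computation using $c'_{j}=|X_2|+(j-1)p+1$ and $c_{r+1-j}$ mirroring it. The equality of $\hdist_k$ values then transfers back through the first step.

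If the tie-breaking verification turns out to be messy, I would fall back on mirroring the proofs of \cref{lem:no_sub_in_run,lem:delete_retain_ct_eq,lem:trim} verbatim with $\nsv$ in place of $\psv$. In that version, the key case analysis concerns positions $x<c_1$ whose $\nsv$ jumps past the periodic block.
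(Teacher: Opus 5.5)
Your index bookkeeping is fine: $c_i\mapsto c'_{r+1-i}$, the third hypothesis becomes the $\psv$ condition, and the trimmed range maps back to $X[c_{k+1}+p\dd c_{r-k})$. The reduction nevertheless breaks at the tie-breaking, in two independent ways.

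\textbf{The fourth hypothesis does not transfer.} Step~1 requires that, among equal characters of $S$, the one at the \emph{earlier} position of $S$ becomes the smaller pair; this is $-i$ with $i$ the position in the reversed string. Your remedy of letting later positions count as smaller destroys Step~1: the strings $1\,1$ and $1\,2$ CT-match, but their encodings under that tie-break do not. With the correct tie-break, ``$X[c_i]$ is a minimum of $X(c_1-p\dd c_i]$'' gives the hypothesis of \cref{lem:trim} only if $X[c_i]$ is the \emph{leftmost} minimum of that window. In \cref{lem:trim} the distinction is vacuous because $c_i$ is the first position of its window; here it is the last.

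Arguing directly with non-strict minima cannot help, because under the non-strict reading the statement is false. Take $p=k=1$, $r=5$, $X=7,9,10,10,8,5,1$, $Y=4,5,3,3,2,1,0$, with $\absolute{X_1}=\absolute{Y_1}=2$ and $X_2,Y_2$ empty. All four hypotheses hold; $X[4]$ and $Y[4]$ are tied, non-leftmost minima. Since $X'\approx Y$ forces $X'[5]<X'[3]<X'[1]\le X'[2]$, and no single substitution achieves this, $\hdist_1(X\leadsto Y)=2$. But $\hat X=7,9,10,10,5,1$ and $\hat Y=4,5,3,3,1,0$, and replacing $\hat X[3]$ by $6$ gives $\hdist_1(\hat X\leadsto\hat Y)=1$. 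So the lemma must be read with ``the leftmost minimum''. That is what \cref{lem:no_sub_in_run:nsv} assumes, what the paper's proof uses, and what \cref{lem:per_case} supplies via \cref{lem:blockper_chain_v2}.

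\textbf{The distances do not agree.} Even with the leftmost hypothesis, $\hdist(X\leadsto Y)=\hdist(\tilde X\leadsto\tilde Y)$ fails for the distance that \cref{lem:trim} is about. Over the pair alphabet a substitution may write an arbitrary pair, i.e., choose the tie-break of the new character freely, which a substitution in the original string cannot do. For $Y=1,5,3,4,2$ and $X=5,7,5,6,9$, $X'\approx Y$ forces $X'[1]\le X'[5]<X'[3]<X'[2]$, which no single change achieves, so $\hdist(X\leadsto Y)=2$. However, replacing the first character of $\tilde X=(9,-1)(6,-2)(5,-3)(7,-4)(5,-5)$ by $(5,-4)$ yields a string order-isomorphic to $\tilde Y=(2,-1)(4,-2)(3,-3)(5,-4)(1,-5)$, so the encoded distance is $1$.

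Restricting the encoded targets to encodings makes your equality trivial. But \cref{lem:trim} is stated for unrestricted substitutions, so you would have to reopen its proof (and that of \cref{lem:no_sub_in_run}) to check that it only moves existing substituted characters around. At that point the reduction is no longer black-box.

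\textbf{What works.} The viable route is your fallback, which is exactly the paper's proof: prove \cref{lem:no_sub_in_run:nsv} and \cref{lem:delete_retain_ct_eq:nsv}, then rerun both claims with $\nsv$. The delicate case is $x<c_{k+1}$ with $\nsv_Y(x)\ge c_{k+1}+p$. Leftmost minimality forces $x\le\absolute{Y_1}$, and the third hypothesis then forces $\nsv_Y(x)>c_r$. This is precisely the step that fails in the first counterexample, where $x=c_1$ and $\nsv_Y(x)=c_3$ lies in the deleted range.
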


We are now ready to prove the main algorithmic result of this section.

\begin{lemma}\label{lem:per_case}
    Let $n,m,k,p,\Delta \in \mathbb{Z}_+$ satisfy $n\leq 3m/2$, $p<\Delta$, $k<\Delta$, and $k\Delta\leq\sqrt{m}/128$.
    Given $k$, $p$, $\Delta$, a text $T[1 \dd n]$, a pattern $P[1 \dd m]$, and a fragment $P[\ell \dd r)$ of length at least $m/{(4\Delta)}$ with minimum CT-block-period $p$, we can compute ${\hdist_k(T[i \dd i+m) \leadsto P)}$ for all $i\in[1,n-m+1]$ in $\cO(mk^4\Delta)$ time. 
\end{lemma}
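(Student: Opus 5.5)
Let $F := P[\ell \dd r)$ with minimum CT-block-period $p$, and assume by symmetry that its leftmost minimum is $F[1]$; the other case is handled with $\nsv$, \cref{lem:trim:nsv} and the appendix analogues. Write $|F| \geq m/(4\Delta) \geq 32k\sqrt{m}/\ldots$. The only consequence of $k\Delta \le \sqrt{m}/128$ that matters is that $|F|/p$ is much larger than $k$; in fact $|F|/p \ge m/(4\Delta^2) \gg k$.

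The plan is to split the positions $i$ into two groups.
\begin{itemize}
\item Positions $i$ for which $T[i \dd i+m)$ does not contain a long aligned CT-block-periodic stretch opposite $F$. There are few of these, and they are verified directly with \cref{lem:compare}.
\item Positions $i$ for which $T[i \dd i+m)$ does contain such a stretch. These are handled by trimming both strings with \cref{lem:trim} and then running \cref{lem:compare} on short strings.
\end{itemize}

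\textbf{Step 1 (filtering).} Cut $F$ into $2k+2$ consecutive disjoint blocks $F_1,\dots,F_{2k+2}$. Each block has length a multiple of $p$ and at least $2p$, and by \cref{lem:per_frag} each is CT-block-periodic with minimum period $p$; up to shift, all blocks are CT-equal to a common string $B$. If $\hdist(T[i\dd i+m)\leadsto P)\le k$, then by \cref{lem:comb_mark} at least $k+2$ of the blocks occur exactly, in the CT sense, at their aligned positions in $T$.

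Compute all exact CT-occurrences of $B$ in $T$ with \cref{lem:exact}. By \cref{lem:not_folklore}, they form $\cO(n/|B|) = \cO(k)$ arithmetic progressions with difference $p$, each corresponding to a CT-run of $T$, and any two runs overlap in fewer than $p$ positions.

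For a given $i$, the aligned occurrences of the blocks lie in CT-runs. I expect the following dichotomy:
\begin{itemize}
\item either one CT-run $R$ covers at least $2k+3$ full periods of the window opposite $F$, with the correct phase;
\item or the window opposite $F$ meets at least two run boundaries.
\end{itemize}
The $i$'s in the second case are determined by pairs of consecutive run boundaries lying inside a window of length $|F|$. Each boundary and block index gives $\cO(k)$ candidate shifts, so I expect $\cO(k^2)$ bad candidates overall. Verifying them costs $\cO(k^2\cdot mk) = \cO(mk^3)$, which is within budget.

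\textbf{Step 2 (trimming).} For $i$ in the first case, set $U$ to be the common part of $R$ (shifted by $i$) and $F$, aligned to period boundaries, and let $V$ be the corresponding fragment of $P$. Both have length $r'p$ with $r' \ge 2k+3$, and $U\approx V$ by \cref{lem:per_frag}.

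The minimum condition of \cref{lem:trim} follows from \cref{lem:blockper_chain}. The $\psv$ condition on $Y=P$ (no $\psv_P$ pointer from beyond $V$ into $V$ minus its last period) is a property of $P$ alone. So I would first shrink $F$ by one or two periods at its right end, chosen using the CT-run structure of $F$ inside $P$ (as in the remark after the CT-block-period definition), so that the condition holds for every $i$.

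Applying \cref{lem:trim} shrinks the pair to length $m - |V| + \cO(kp)$. This is still $\Theta(m)$ unless $F$ is most of $P$. Hence the key refinement: within one run $R$, as $i$ moves along the arithmetic progression, the trimmed strings $\hat X_i$ differ only by which piece of $T$ outside $R$ is included. I would therefore exploit that all $i$ in the same progression share the trimmed core, and compute the values incrementally. The more robust alternative is to repeat the whole construction recursively on the untrimmed remainder of $P$. I expect this to cost a factor of $\cO(\Delta k)$ over $\cO(m k^3)$, giving the $\cO(mk^4\Delta)$ bound.

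\textbf{Main obstacle.} The main difficulty is Step 2: guaranteeing that the untrimmed part is short enough, and that the $\psv$ hypothesis of \cref{lem:trim} holds uniformly. I would try first to choose $F$ maximal, as a CT-run of $P$. By \cref{lem:no_overlaps}, pointers crossing into a run can then only target its first period. The trimmed instance would then have length $\cO(m - |F| + kp)$, and I would bound the total work by summing over the $\cO(k)$ runs of $T$, each processed in $\cO(mk^3\Delta)$ time.
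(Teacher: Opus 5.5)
\textbf{There is a genuine gap, and it is exactly the ``main obstacle'' you flag.}

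Your Step~1 filters nothing. Suppose $\hdist(T[i\dd i+m)\leadsto P)\le k$. Then some block $F_j$ is CT-matched at its aligned text position. That text fragment is a CT-occurrence of $B$, so it lies inside one CT-run with the correct phase. Moreover, $|F_j|/p=\Omega(m/(k\Delta^2))$ is far larger than $2k+3$ because $k\Delta\le\sqrt m/128$. Hence every relevant $i$ lands in your first case. (Incidentally, there are $\cO(n/|B|)=\cO(k\Delta)$ runs, not $\cO(k)$.)

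So everything rests on Step~2, where \cref{lem:trim} removes only the part of the window opposite $F$.
\begin{itemize}
\item You only know $|F|\ge m/(4\Delta)$, so the trimmed strings still have length $\Theta(m)$.
\item The number of such $i$ can be $\Theta(m/p)$, e.g.\ when $T$ is one long CT-run.
\item For $p=\cO(1)$ this costs $\Theta(m^2k)$ with \cref{lem:compare}, a factor $m/(k^3\Delta)$ above the budget.
\end{itemize}
Neither proposed remedy works as stated. $\hdist_k$ comes from a DP over $\CT(P)$ that depends on the whole window, and no sliding-window update is available. Also, $\hdist$ does not decompose over concatenations, so solving the remainder of $P$ ``recursively'' produces nothing you can combine.

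Your plan for the $\psv$ hypothesis also fails: taking $F$ to be a maximal CT-run does \emph{not} confine incoming pointers to one period. By \cref{lem:blockper_chain}, the block starts of $F$ are non-decreasing. A position to the right of $F$ whose value lies strictly between two block-start values therefore has its $\psv_P$ pointer in the middle of $F$. A decreasing sequence of such positions hits $\Theta(|F|/p)$ different blocks, and removing one or two periods cannot fix that.

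\textbf{The missing idea} is to exploit the structure of $P$ \emph{outside} $F$, splitting on whether the periodicity of $F$ extends to almost all of $P$. Call a length-$p$ block bad if its $3p$-neighbourhood does not CT-match $P[\ell\dd\ell+3p)$. Extend $P[\ell\dd r)$ blockwise until either $\Theta(k)$ bad blocks are collected on one side or the end of $P$ is almost reached.

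\emph{Many bad blocks.} They act as anchors. Since $k$ substitutions repair at most $3k$ good/bad misalignments, one of these bad pattern blocks must face one of $\cO(k)$ bad text blocks. Those are obtained by extending, in the same way, each of the $\cO(k\Delta)$ text CT-runs spanned by exact occurrences of a long prefix of $P[\ell\dd r)$. This gives $\cO(k^3\Delta)$ candidates, each checked with \cref{lem:compare}.

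\emph{Few bad blocks.} Then $P=X_PB_PY_P$ with $|X_P|,|Y_P|<2p$ and only $\cO(k)$ bad blocks, i.e., the whole pattern is almost CT-block-periodic. One then:
\begin{itemize}
\item restricts the text to a single residue class modulo $p$ and a window with $\cO(k)$ bad blocks;
\item marks as implicitly bad the $\cO(kp)$ pattern blocks that receive a $\psv_P$ or $\nsv_P$ pointer from a bad block or from $X_P,Y_P$. This is what makes the third hypothesis of \cref{lem:trim} hold, and it is cheap only because there are few pointer sources;
\item verifies directly the $\cO(k^3p)$ alignments in which bad blocks of the two strings come within $\cO(k)$ blocks of each other;
\item for all other alignments, trims every long stretch that is good in both strings, leaving instances of length $\cO(k^2p^2)$.
\end{itemize}
It is this global almost-periodicity of $P$, not of $F$ alone, that makes the trimmed instances short.
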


\begin{proof}
    Due to $\Delta^2\leq m/(128^2k^2)$, the fragment $P[\ell \dd r)$ is of length at least $m/{(4\Delta)}\geq {p \cdot m/(4\Delta^2)} > p\cdot 1000k^2$ and hence consists of at least $1000k^2$ blocks of length $p$; namely, $P[\ell + ip \dd \ell +(i+1)p)$ for $i \in [0, (r-\ell)/p)$.
    In any $k$-approximate CT-occurrence of $P[\ell \dd r)$ in $T$, at least $1000 k^2-k \geq 999 k^2$ of those blocks are aligned with  fragments of $T$ that CT-match them.
    These blocks are partitioned into at most $k+1$ sets of contiguous blocks by the at most $k$ blocks that are aligned with  fragments of $T$ that do not CT-match them.
    Hence, in any $k$-approximate CT-occurrence of $P[\ell \dd r)$ in $T$, there exists a fragment~$F$ of $P[\ell \dd r)$ that consists of at least $999k^2/(k+1) > 100k$ blocks and is CT-matched exactly.\footnote{Note that it might be a different fragment for each $k$-approximate CT-occurrence of $P[\ell \dd r)$ in $T$.}
    We have $\absolute{F}=\Omega(\absolute{P[\ell \dd r)]}/k)\supseteq \Omega(m / (k\Delta))$.
    Further, the CT-block-periodicity of $P[\ell \dd r)$ implies that
    $F \approx P[\ell \dd \ell+\absolute{F})$ and, combined with \cref{lem:per_frag}, that~$F$ has minimum CT-block-period~$p$.

   For a string $X$, a block $X[i \dd i+p)$ of length $p$ with $i-p, i+2p-1 \in [1,\absolute{X}]$, is \emph{bad} if $X[i-p \dd i+2p) \not\approx P[\ell \dd \ell+3p)$ and \emph{good} otherwise.
   Intuitively, if we have many bad blocks in $P$, at least one of them has to be aligned with a bad block in the text.
   In this case, we exploit this property to obtain a few candidate starting positions of $k$-approximate CT-matches. We then verify each of them using \cref{lem:compare}.
   In the complementary case, the pattern is almost periodic (that is, it almost entirely consists of good blocks).
   We exploit this by trimming long pairs of fragments (one in $T$ and one in $P$) that are composed solely of good blocks using either \cref{lem:trim} or \cref{lem:trim:nsv} before invoking \cref{lem:compare}.
    
    We compute a fragment $\Phi = P[\ell' \dd r')$ of $P$ by extending $P[\ell \dd r)$ blockwise to the left (and to the right) by prepending (resp.~appending) blocks of $p$ characters until we either accumulate $3k + 1$ bad blocks in that direction or there are fewer than $2p$ characters left in that direction---in the latter case, we almost reach the beginning (resp.~end) of $P$.

    \input{figs/blocks}
    
    \subparagraph{Case I: Enough blocks accumulated in at least one direction.}
    Assume, without loss of generality, that we accumulated $3k+1$ bad blocks while extending $P[\ell \dd r)$ to the left.
    
    We employ the linear-time exact CT-matching algorithm of \cref{lem:exact} to compute all exact CT-occurrences of $F := P[\ell \dd \ell+100kp)$ in $T$.
    By \cref{lem:not_folklore}, these CT-occurrences can be partitioned into $\cO(n / \absolute{F}) \subseteq \cO(k\Delta)$ arithmetic progressions with common difference $p$ such that the CT-occurrences in each of the progressions span a CT-run of length $\Omega(m / (k\Delta))$ with minimum CT-block-period $p$. Let $\mathcal{R}$ denote the set of CT-runs in $T$ that are spanned by the exact CT-occurrences of $F$.
    
    Let us fix a CT-run $T[a \dd b] = R \in \mathcal{R}$. We extend $R$ to the left using  a procedure identical to the one  performed on $P[\ell \dd r)$ until we either accumulate $3k+1$ bad blocks or the fragment contains $T[2p]$.
    Consider aligning the pattern with a fragment $T[i \dd i+m)$ such that a fragment of $P[\ell\dd r)$ that CT-matches $F$ is aligned with a fragment of $R$ that CT-matches it.
    By \cref{lem:not_folklore}, any such alignment satisfies $a - i + 1 \equiv \ell \pmod p$, that is, it perfectly aligns the block boundaries of $P[\ell \dd r)$ and $R$.
    In such an alignment, if a bad (resp.\ good) block of the pattern is aligned with a good (resp.~bad) block of the text, at least one substitution is required in either that text block or one of its two neighbours in any sequence of substitutions that transforms $T[i \dd i+m)$ into a string $X \approx P$.
    Thus, $k$ substitutions performed on $T[i \dd i+m)$ can fix at most $3k$ misaligned pairs (that is, pairs consisting of a bad block in one of the strings and a good block in the other).
    Therefore, at least one of the leftmost $9k + 1$ bad blocks of $\Phi$ must be aligned with one of the at most $3k + 1$ bad blocks in the extension of~$R$.
    There are only $\cO(k^2)$ ways to align such a pair of bad blocks, each yielding a candidate starting position of a $k$-approximate CT-occurrence of $P$ in $T$.
    The described procedure can be implemented naively in $\cO(m)$ time.
    
    We perform this procedure for all CT-runs in $\mathcal{R}$, thus obtaining $\cO(\absolute{\mathcal{R}} \cdot k^2) = \cO(k^3\Delta)$ candidate starting positions.
    We then invoke \cref{lem:compare}
    for each such position.
    In total, this takes time $\cO(k^3\Delta \cdot mk) = \cO(mk^4\Delta)$.

    \subparagraph{Case II: The pattern is almost CT-block-periodic.} In the complementary case, we did not accumulate $9k + 1$ bad blocks in either direction. Thus, $P = X_P B_P Y_P$ where $\absolute{X_P},\absolute{Y_P} < 2p$ and~$B_P$ consists of $p$-length blocks, at most $6k$ of which are bad.
    
\begin{claim}\label{claim:trimt}
    In $\cO(n)$ time, we can either decide that $P$ has no $k$-approximate CT occurrences in $T$ or compute
    a fragment $T' = T[t_1 \dd t_2) = X_T B_T Y_T$ such that the following hold:
        \begin{itemize}
            \item for all $i \in [1, n-m+1]$, $\hdist_k(T[i \dd i+m) \leadsto P)\leq k$, implies that $i,i+m \in [t_1, t_2]$ and $i \equiv t_1 \pmod p$.
            \item $\absolute{X_T} = \absolute{X_P}$ and $\absolute{Y_T} = \absolute{Y_P}$,
        \end{itemize}
    Moreover, $B_T$ is partitioned into $p$-length blocks $\cO(k)$ of which are explicitly marked as bad.
\end{claim}
\begin{claimproof}
    By \cref{lem:per_frag}, the fragment $\Pi:=P[\ell \dd \ell+3p)$ has minimum CT-block-period $p$.
    Thus, by  \cref{lem:not_folklore}, any two exact CT-occurrences of $\Pi$ in $T$ are at distance at least $p$ and there are at most $n/p$ CT-occurrences in total.
    
    Now, any $k$-approximate CT-occurrence of $P$ in $T$ implies a large number of exact CT-occurrences of $\Pi$ at positions that are in a single equivalence class modulo $p$, with each such CT-occurrence corresponding to aligning a good block of $B_P$ with a fragment of $T$.
    Specifically, by accounting for the at most $k$ substitutions (each potentially affecting three good blocks), the at most $6k$ bad blocks in $P$, as well as $X_P$ and $X_T$ which are of total length at most $4p$, we conclude that we must have at least
    ${\floor{(m-4p)/p}} - 6k - 3k \geq m/p - 13k$ such exact CT-occurrences.
    The inequalities $k\Delta \leq \sqrt{m}/128$ and $p < \Delta$ imply that $13kp < \sqrt{m}/4$, and hence $m/p - 13k \geq (m-\sqrt{m}/4)/p$.
    Since we have a total of $n/p$ CT-occurrences of $\Pi$ in $T$ and $n/p \leq 3m/(2p) < 2(m-\sqrt{m}/4)/p$, there is at most one residue modulo $p$, say~$\rho$, with at least $m/p - \sqrt{m}/p$ CT-occurrences of $\Pi$ starting at a positions equivalent to $\rho$ modulo~$p$.
    
    We compute all exact CT-occurrences of $\Pi$ in $T$ in $\cO(n)$ time using \cref{lem:exact} and identify the majority residue modulo $p$, say $\rho$, resolving ties arbitrarily.
    If the CT-occurrences of $\Pi$ at positions of $T$ equivalent to $\rho$ modulo $p$ are fewer than $m/p - \sqrt{m}/p$, we conclude that~$P$ does not have any $k$-approximate CT-occurrences in $T$ and terminate the algorithm. We henceforth, assume that we are in the complementary case.    
    
    We select $y$ as the leftmost position in $T$ such that $y \geq n/2$ and $y \equiv \rho \pmod p$. We extend the fragment $T[y\dd y+p)$ in each direction until we either accumulate $9k + 1$ bad blocks or there are fewer than $2p$ characters left. Let $T[a \dd b)$ denote the obtained fragment.

    Any $k$-approximate CT-occurrence $T[i \dd i+m)$ of $P$ in $T$ must align $B_P$ with a fragment of the form $T[a + ip \dd b -jp)$ for nonnegative integers $i$ and $j$.
    Further, in this CT-occurrence,~$B_P$ must be aligned with a fragment of $T$ that contains $T[\floor{n/2}]$.
    Now, observe that at most~$9k$ bad blocks of $T$ can be contained in $T[i \dd i+m)$, as at most $6k$ of them can be aligned with CT-matching bad blocks of $P$ and the at most $k$ substitutions may fix at most $3k$ pairs of bad blocks in $T$ with good blocks in $P$.
    This implies that $i+\absolute{X_P}, i+\absolute{X_P B_P} \in [a,b]$.
        
    Next, we define the boundaries $t_1$ and $t_2$, and describe the decomposition $T' = X_T B_T Y_T$.
    Let $t_1$ be the smallest positive integer in $\{ a + ip - \absolute{X_P}: i \in \{0,1,2\} \}$, let $t_2$ be the largest integer in $\{ b - ip + \absolute{Y_P}: i \in \{0,1,2\} \} \cap [1,n]$, and set $T'=T[t_1 \dd t_2)$.
    Since, for any approximate CT-occurrence $T[i \dd i+m)$ we have $i+\absolute{X_P}, i+\absolute{X_P B_P} \in [a,b]$, this ensures that $i,i+m \in [t_1,t_2]$. Further 
    we have that $i \equiv t_1 \pmod p$ since $i+\absolute{X_P} \equiv a \pmod p$.
    The partition of $T'$ is then defined as:
    $X_T = T[t_1 \dd t_1 + \absolute{X_P})$,
    $B_T = T[t_1 + \absolute{X_P} \dd t_2 - \absolute{Y_P})$, $Y_T = T[t_2 - \absolute{Y_P} \dd t_2)$.
    Finally, we return $t_1$, $t_2$, and the bad blocks in $B_T$.
\end{claimproof}
    
    We also designate each block in $B_P$ as \emph{implicitly bad} if it contains a position $i$ such that there exists a~position $j$ in either an explicitly bad block or in $J:=[1 , \absolute{X_P}] \cup (\absolute{X_PB_P},\absolute{P}]$, such that $i=\psv_P(j)$ or $i=\nsv_P(j)$.
    Note that the number of implicitly bad blocks is $\cO(kp) \subseteq \cO(k\Delta)$ (since $p<\Delta$) as each (explicitly) bad block implies $\cO(p)$ implicitly bad ones and $J$ implies $\cO(p)$ implicitly bad blocks.
    This designation is necessary because substitutions are local operations, whereas CT-equivalence is a global property. Although substitutions directly modify values within explicitly bad blocks, a single substitution can indirectly affect many distant positions. In particular, if a substitution occurs at position~$i$  such that  $i= \psv_P(j)$ or $i= \nsv_P(j)$ for some position $j$, then the values $\psv_P(j)$ or $\psv_P(j)$ may change.
    To ensure that \cref{lem:trim} can be invoked to trim long sequences of good blocks, we must account for all blocks that could potentially interact with either the boundary fragments or the blocks aligned with blocks that contain substitutions.

    By \cref{claim:trimt}, it suffices to compute $\hdist_k(T[i \dd i+m) \leadsto P)$ for each of the $\cO(m / p)$ positions in $T$ that satisfy $i \equiv t_1 \pmod p$ and $i,i+m \in [t_1, t_2]$.
    Let $G =T[i \dd i+m)$.
    $G$~inherits its block-decomposition from $B_T$, as well as the classification of some blocks as bad.
    We call the considered position $i$ an \emph{overlap position} if there exist positions $\pi$ in $P$ and $\phi$ in $G$ that belong to  (implicitly or explicitly) bad blocks and satisfy $\absolute{\pi - \phi} \leq 6kp$, and a~\emph{non-overlap position} otherwise.
    We treat overlap and non-overlap positions differently:
    \begin{itemize}
        \item If $i$ is an overlap position, we employ \cref{lem:compare} to compute $\hdist_k(G \leadsto P)$ in $\cO(mk)$ time.
        The number of overlap positions is $\cO(k^3p) = \cO(k^{3}\Delta)$ since we require one of the $\cO(kp)$ (implicitly or explicitly) bad blocks in $B_P$ to be within $\cO(k)$ blocks of one of the $\cO(k)$ bad blocks in $B_T$.
        The total time required for overlap positions is thus $\cO(mk^{4}\Delta)$.
        Note that we can efficiently enumerate all the overlap positions using the list of bad blocks for both $B_P$ and $B_T$.
        
        \item If $i$ is a non-overlap position, then we can show that $\hdist_k(G \leadsto P)$ only depends on the parts of the alignment of $G$ and $P$ in the $\cO(k)$-block vicinity of the (implicitly or explicitly) bad blocks. More precisely, \cref{lem:trim,lem:trim:nsv} will allow us to delete most of the good blocks, significantly reducing the lengths of $G$ and $P$ prior to performing the comparison of \cref{lem:compare}. We first describe this trimming procedure, and then justify that the conditions for \cref{lem:trim,lem:trim:nsv} are indeed satisfied.
        
        We use \cref{lem:trim} if $P[\ell']$ is the leftmost minimum of $\Phi$ and \cref{lem:trim:nsv} otherwise, that is, if the leftmost minimum is $P[r'-1]$.
        Conceptually, we perform the following procedure.
        We traverse the lists of bad blocks in each of $G$ and~$P$ from left to right in parallel in $\cO(k)$ time.
        Whenever we encounter a maximal pair of fragments $U = G[x_1 \dd x_2)$ and $V=P[x_1 \dd x_2)$ of length at least $(6k+1)p$ that only consist of blocks that are not (implicitly or explicitly) bad, we replace $G[x_1 \dd x_2)$ with $U[1 \dd 3kp] \cdot U(\absolute{U} - 3kp \dd \absolute{U}]$ and $P[x_1 \dd x_2)$ with $V[1 \dd 3kp] \cdot V(\absolute{V} - 3kp \dd \absolute{V}]$.
        A repeated application of either \cref{lem:trim} or \cref{lem:trim:nsv} guarantees that $\hdist_k(G \leadsto P) = \hdist_k(G' \leadsto P')$, where $G'$ and $P'$ are the strings obtained from $G$ and $P$, respectively.
        Instead of making the replacements on $G$ and $P$, we explicitly construct $G'$ and $P'$ by copying the fragments of $G$ and $P$ that are not deleted by the conceptual procedure described above.

        As we have $\cO(kp)$ (implicitly or explicitly) bad blocks in $P$ and $\cO(k)$ bad blocks in $G$, and since each bad block prevents $\cO(k)$ blocks from being trimmed, we have $\absolute{G'}=\absolute{P'}=\cO(k^2p^2)$.
        The computation of $\hdist_k(G \leadsto P)$ using \cref{lem:compare} thus takes $\cO(k^2p^2 \cdot k)$ time and returns $\hdist_k(G \leadsto P)$.
        Over all $\cO(m/p)$ non-overlap positions,
        the total time required is 
        $\cO(m / p \cdot k^2p^2 \cdot k) = \cO(mpk^3) = \cO(mk^{3}\Delta)$.

        It remains to be shown that the described applications of \cref{lem:trim} are valid if $P[\ell']$ is the leftmost minimum of $\Phi$; applications of \cref{lem:trim:nsv} are valid in the remaining case by symmetric arguments.
        First, note that $p$ is a CT-block-period of $U$ and $V$ by \cref{lem:not_folklore}, and they both have their leftmost minimum at their first positions.
        \begin{itemize}
            \item  The first condition of \cref{lem:trim} holds trivially by our assumptions with $r = (x_2-x_1)/p$.
            \item The second condition of \cref{lem:trim} holds by \cref{cor:focusonp} as $p$ is a CT-block-period of both $U$ and $V$, and $U[1 \dd 2p] \approx P[\ell + p \dd \ell +3p) \approx V[1 \dd 2p]$.
            \item The third condition of \cref{lem:trim} holds due to the facts that (a) by the definition of implicitly bad blocks, there is no position $j$ outside a good block with $\psv_P(j) \in [x_1, x_2-p)$, and
            (b) for each position $j$ in a good block, $\psv_P(j)$ is at least $j-p$ by an application of \cref{lem:blockper_chain} to $\Phi$.
            \item The fourth condition of \cref{lem:trim} follows by applying \cref{lem:blockper_chain} to $U$ and $V$.\qedhere
        \end{itemize}
    \end{itemize}
\end{proof}

\subsection{Wrap-Up}

\mainthm*
\begin{proof}
We can assume that $k < m^{1/4}/128$; otherwise, the algorithm of Kim and Han~\cite{DBLP:journals/tcs/KimH25}, which simply invokes \cref{lem:compare} for $k$, $P$, and $T[i \dd i+m)$ for each $i\in [1,n -m+1]$, runs in $\cO(nmk) \subseteq \cO(nk^5)$ time.

For $\Delta \in \mathbb Z_+$ (to be specified later) satisfying $k\Delta \leq \sqrt{m}/128$, we apply \cref{lem:filtering_preprocess} to $P$. We either obtain a  CT-rainbow of  $\Delta$ disjoint fragments in $P$, or a fragment of length at least $m/{(4\Delta)}$ with its minimum CT-block-period $p < \Delta$.
This takes $\cO(m\Delta)$ time.

We then apply the so-called standard trick to reduce the given instance of \problemname to $\cO(n/m)$ instances of the same problem with the same pattern $P$, the same threshold $k$, and the text being a fragment of~$T$ of length at most~$3m/2$.
For $i \in [0, \floor{2n/m}]$, define
$T_i := T[1 + \floor{i \cdot m/2} \dd \min\{\floor{(i+3) \cdot m/2} , n+1\})$.
It is readily verified that each $m$-length fragment of $T$ appears in precisely one of the $T_i$s.

We then apply either \cref{lem:aper_case} or \cref{lem:per_case} (depending on the outcome of the analysis of the pattern according to \cref{lem:filtering_preprocess}) to solve an instance of \problemname separately for each $T_i$.
In the end, we merge the results.

Each instance is solved in $\cO(mk^4\Delta + m^2k/\Delta)$ time.
The total complexity over all $\cO(n/m)$ instances is therefore $\cO((n/m) (mk^4\Delta + m^2k/\Delta)) = \cO(nk^4\Delta + nmk/\Delta)$.
We now distinguish between two cases:
\begin{itemize}
    \item If $k \leq \floor{m^{1/5} / 128^{2/5}}$, we balance the two terms in the  complexity by setting $\Delta=\floor{\sqrt{m} / (128\cdot k^{1.5})}$.
    Since $k\Delta \leq \sqrt{m}/128$, and  $2k \leq \Delta$, the time complexity is
    \[\cO\left(nk^4 \cdot \frac{\sqrt{m}}{k^{1.5}} + \frac{nmk}{\sqrt{m}/k^{1.5}}\right) = \cO(n \sqrt{m} \cdot k^{2.5}).\]

    \item Else, we have $k \in (\floor{m^{1/5} / 128^{2/5}},\floor{m^{1/4}/128})$.
    We set $\Delta = 2k$, noting that $k\Delta = 2k^2$. Since $k < m^{1/4}/128$, we have $k^2 < \sqrt{m}/128^2$ and hence $k\Delta = 2k^2 < \sqrt{m}/128$. Hence, in this case, the time complexity is
    \[\cO\left(nk^4(2k) + \frac{nmk}{2k}\right) = \cO(nk^5 + nm) = \cO(nk^5).\]
    
\end{itemize}

Since $nk^5 = \Theta(n\sqrt{m} \cdot k^{2.5})$ for all $k \in [\floor{m^{1/5} / 128^{2/5}}, \floor{m^{1/5}}]$, the statement holds.
\end{proof}

\section{Symmetric Lemmas for Next Smaller Values}\label{app:symmetry}

The proof of the following lemma is symmetric to that of {\cite[Theorem 1]{exactCTmatching}}.

\begin{restatable}[analogue of \cref{lem:psv_equiv}]{lemma}{ndmapping}\label{lem:nsv_equiv}
    For equal-length strings $X$ and $Y$, $\CT(X) = \CT(Y)$ if and only if $\ND(X) = \ND(Y)$, or equivalently $\forall i \in [1, \absolute{X}] : \nsv_X(i) = \nsv_Y(i)$.
\end{restatable}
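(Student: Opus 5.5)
The plan is to reduce the statement to \cref{lem:psv_equiv} by string reversal together with a value-negation trick, rather than re-deriving the recursion from scratch. For a string $X$ of length $n$, let $\overline{X}$ denote the string with $\overline{X}[i] = -X[n+1-i]$, taken over a suitably ordered copy of $\Sigma$ with reversed order. The key observation is that $\nsv_X$ corresponds to a strict ``previous smaller'' function on $\overline{X}$. Indeed, $X[j] < X[i]$ with $j > i$ is equivalent to $\overline{X}[n+1-j] > \overline{X}[n+1-i]$. This is not directly $\psv$, so negation alone is the wrong tool. The cleaner route is simply to reverse: set $X^R[i] = X[n+1-i]$. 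Then $\nsv_X(i) = n+1-j$, where $j$ is the largest index in $[1,n+1-i)$ with $X^R[j] < X^R[n+1-i]$, or $j = 0$ if none exists. This is a strict previous-smaller-value function on $X^R$.

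Rather than fighting the tie-breaking asymmetry, I would argue directly along the lines of \cite[Theorem 1]{exactCTmatching}, in two directions.

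For ($\Rightarrow$), I would show that $\nsv_X$ is determined by $\CT(X)$. With leftmost tie-breaking, the node of $i$ has as right subtree exactly the positions $(i, \nsv_X(i))$. To see this, consider the root of the recursion containing $i$. Every later position $j$ with $X[j] \ge X[i]$ and no strictly smaller value in between lies in the right subtree of $i$, since $i$ is the leftmost minimum of that range. Conversely, the first strictly smaller value to the right is an ancestor of $i$. Hence $\nsv_X(i) = i + (\text{size of the right subtree of } i) + 1$, and this size is read off the tree with its in-order labelling. So $\CT(X) = \CT(Y)$ gives $\nsv_X = \nsv_Y$, and therefore $\ND(X) = \ND(Y)$. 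The two arrays $\ND$ and $\nsv$ are trivially interconvertible, since $\ND(X)[i] = 0$ if and only if $\nsv_X(i) = n+1$.

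For ($\Leftarrow$), I would proceed recursively. The root of $\CT(X)$ is the leftmost minimum $j$, and this is exactly the smallest $i$ with $\nsv_X(i) = n+1$. All positions before $j$ have strictly smaller elements to their right, namely $j$ itself. Position $j$ has none, because any position after $j$ has a value $\ge X[j]$. So $\nsv$ determines the root. Restricting to $X[1 \dd j)$ and $X(j \dd n]$, the $\nsv$ values of the restricted strings are determined by those of $X$: values pointing outside the range are capped to the range end plus one, in analogy to \cref{fact:pd_fragment}. Induction on length then shows that $\nsv_X = \nsv_Y$ forces equal subtrees.

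The main obstacle is the careful handling of ties in the root characterisation. The direct tree-structure argument above handles it, and it avoids the reversal detour, which would flip the tie-breaking convention.
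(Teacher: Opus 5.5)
Your proof is correct. (The opening paragraph on negation and reversal is an abandoned detour and can be deleted.) It takes a different route from the paper.

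The paper mirrors the proof of \cite[Theorem 1]{exactCTmatching}. It inducts on the length by appending one character $X[i+1]$ and distinguishes whether $X[i+1]$ becomes the root. For ($\Leftarrow$), it rebuilds $\CT(X[1 \dd i+1])$ from $\CT(X[1 \dd i])$ by inserting the new node next to a position $z$ singled out by its $\nsv$ value.

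You instead follow the top-down recursive definition of $\CT$. For ($\Rightarrow$), you show that the right subtree of node $i$ is exactly $(i, \nsv_X(i))$. Hence $\nsv_X(i)$ equals $i+1$ plus the size of that subtree, which is an invariant of the ordered tree. For ($\Leftarrow$), you recover the root as $\min\{i : \nsv_X(i) = \absolute{X}+1\}$ and restrict $\nsv_X$ to the two sides by shifting and capping.

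Your version is the more robust one. Appending a character changes the $\nsv$ values of earlier positions. The paper only asserts that these changes are ``determined solely by the tree structure'', which is essentially your structural identity left unproved. Its insertion rule also appears to have an off-by-one slip: the relevant $z$ should be the largest $z \le i$ with $\nsv_X(z) = i+2$, not $i+1$. As written, the rule already fails for $X = [1,2]$, where it would make position~$2$ the new root. In your decomposition, $\nsv$ restricts cleanly to the subproblems, and ties are handled explicitly by the strictness of $\nsv$ together with the leftmost choice of root.

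The incremental route is what makes the $\PD$ version easy, since $\PD$ of a prefix is a prefix of $\PD$. For $\nsv$, the top-down recursion you use is the better fit.
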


\begin{proof}
We prove the lemma  by induction on $n=\absolute{X}$.  

If $n = 1$, $X$ and $Y$ both have the same Cartesian tree consisting of a single node.  
Their $\ND$ representations are also identical ($\ND(X)[1] = \ND(Y)[1] = 0$ by definition).
Thus, the lemma holds for $n = 1$.

Assume the lemma holds for strings of length $i$. We show it holds for length $i+1$. 

($\Rightarrow$:) Assume $X[1 \dd i+1] \approx Y[1 \dd i+1]$. Two cases are to be considered:
\begin{itemize}
\item $X[i+1]$ and $Y[i+1]$ are not roots: Let $X[j]$ be the parent of $X[i+1]$, and $Y[\ell]$ the
parent of $Y[i+1]$. Since $X[1\dd i+1] \approx Y[1 \dd i+1]$, we must have $j = \ell$.  Removing the node at index $i+1$ results in $\CT(X[1 \dd i])$ and $\CT(Y[1 \dd i])$. By the induction hypothesis, $\ND(X[1 \dd i]) = \ND(Y[1 \dd i])$. 
Since $\nsv$ values represent the first smaller element to the right, and the structural position of $i+1$ relative to its parent $j$ is identical in both trees, it follows that $\nsv_X(z) = \nsv_Y(z)$ for all $z \in [1, i+1]$. Specifically, the ``new'' $\nsv$ values for indices in $[1, i]$ that were affected by the insertion of position $i+1$ are determined solely by the tree structure, which is identical. Hence, $\ND(X) = \ND(Y)$.

\item $X[i+1]$ and $Y[i+1]$ are roots: This implies $X[i+1]$ and $Y[i+1]$ are the  minima of their respective strings. Removing these roots leaves $\CT(X[1 \dd i])$ and $\CT(Y[1 \dd i])$ as the left subtrees. Since $\CT(X) = \CT(Y)$, these subtrees are identical, and by the induction hypothesis, $\ND(X[1 \dd i]) = \ND(Y[1 \dd i])$. Because the root is at $i+1$, any $z \leq i$ that had $\nsv_{X[1 \dd i]}(z) = i+1$ will maintain $\nsv_X(z) = i+1$. Thus, all $\ND$ values remain equal.
\end{itemize}

($\Leftarrow$:) Assume $\ND(X[1 \dd i+1]) = \ND(Y[1 \dd i+1])$. Since $\ND(X[1 \dd i]) = \ND(Y[1 \dd i])$ is implied by the consistency of the $\nsv$ definition, the induction hypothesis gives $X[1 \dd i] \approx Y[1 \dd i]$.

We can reconstruct $\CT(X[1 \dd i+1])$ from $\CT(X[1 \dd i])$ by considering $\nsv_X$. Let $z$ be the largest index such that $z < i+1$ and $\nsv_X(z) = i+1$.
\begin{itemize}
\item If such a $z$ exists, $X[i+1]$ is inserted as the right child of $X[z]$, and the previous right subtree of $X[z]$ becomes the left subtree of $X[i+1]$.
\item If no such $z$ exists, $X[i+1]$ becomes the new root, and the entire $\CT(X[1 \dd i])$ becomes its left subtree.
\end{itemize}
Since $X[1 \dd i] \approx Y[1 \dd i]$ and the $\nsv$ values for the new position $i+1$ are identical for both strings, the insertion process results in identical trees. Thus, $X[1 \dd i+1] \approx Y[1 \dd i+1]$.
\end{proof}

\begin{lemma}[analogue of \cref{lem:blockper_chain}]\label{lem:blockper_chain_v2}
    Consider a string $P[1 \dd m]$ that has a CT-block-period $p \in [1, \lfloor m/2\rfloor)$, where $P[m]$ is the  leftmost minimum of~$P$. 
    For $i \in [1, m/p]$, let $c_i = m - (i-1)p$.
    For all $i \in [1, m/p]$, it holds that $P[c_i]$ is the leftmost minimum  of $P[1 \dd c_i]$.
\end{lemma}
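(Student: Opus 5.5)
I will mirror the proof of \cref{lem:blockper_chain}, this time walking leftwards from the end of $P$. For $i \in [1, m/p]$ set $P_i := P[1 \dd c_i]$, so that $P_1 = P$ and $\absolute{P_i} = m - (i-1)p$. By induction on $i$, I will prove the stronger statement that $p$ is a CT-block-period of $P_i$ and that $P_i[c_i]$, its last character, is the leftmost minimum of $P_i$.

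The base case $i=1$ is exactly the hypothesis.

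For the inductive step, fix $i \in [2, m/p]$ and assume the statement for $P_{i-1}$. Since $p$ is a CT-border-period of $P_{i-1}$, we have
\[
P_{i-1}[1 \dd \absolute{P_{i-1}} - p] \approx P_{i-1}(p \dd \absolute{P_{i-1}}],
\]
that is, $P_i \approx P[1+p \dd c_{i-1}]$. The right-hand string is a suffix of $P_{i-1}$ that ends at $c_{i-1}$. Because $P[c_{i-1}]$ is the leftmost minimum of $P_{i-1}$, it is strictly smaller than every earlier character of $P_{i-1}$. Hence $P[c_{i-1}]$ is also the leftmost minimum of $P[1+p \dd c_{i-1}]$, and so it is the root of that string's Cartesian tree. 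Since CT-matching strings have the same root position, the leftmost minimum of $P_i$ is its last character $P[c_i]$.

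It remains to check that $p$ is still a block period of $P_i$. Applying \cref{fact:subeq} to the border equivalence of $P_{i-1}$, restricted to the first $\absolute{P_i} - p$ positions, gives
\[
P[1 \dd c_i - p] \approx P[1+p \dd c_i],
\]
so $p$ is a CT-border-period of $P_i$. Divisibility of $\absolute{P_i} = m - (i-1)p$ by $p$ follows from $p \mid m$.

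The only delicate point is the ``strictly smaller'' claim used for the root. It follows from the leftmost-minimum convention: a leftmost minimum at the last position of $P_{i-1}$ means every earlier value exceeds it strictly. That strictness is what makes it the leftmost minimum of every suffix of $P_{i-1}$. This also covers the case $\absolute{P_i} \le p$, where the claim about the last character still follows from the first argument. If $\absolute{P_i} = p$, the border condition is vacuous or trivial. So the induction runs over all $i \in [1, m/p]$ with no extra case.
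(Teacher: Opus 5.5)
Your proof is correct and follows the paper's argument essentially step for step. It uses the same strengthened induction on $P_i = P[1 \dd c_i]$, moves the root position across via $P_i \approx P[1+p \dd c_{i-1}]$, gets the border period of $P_i$ from \cref{fact:subeq}, and inherits divisibility from $p \mid m$. You also spell out the strictness argument that the paper leaves implicit, and you correctly write $P[1 \dd c_i - p] \approx P[1+p \dd c_i]$ where the paper has the typo $=$.
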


\begin{proof}
    For $i \in [1,m/p]$ denote $P_i := P[1 \dd c_i]$.
    Our goal is to show the more general statement that $p$ is a CT-block-period of $P_{i}$ for all $i$ with $P_{i}[\absolute{P_i}]$ being the leftmost minimum of~$P_{i}$.
    By our assumptions, this is true for $i=1$.
    Now, consider any $i \in [2,m/p]$ and suppose that $p$ is a CT-block-period of $P_{i-1}$ with $P_{i-1}[\absolute{P_{i-1}}]$ being the leftmost minimum of~$P_{i-1}$.
    First, since~$p$ is a CT-block-period of $P_{i-1}$, we have $P[1+p \dd c_{i-1}]= P_{i-1}[c_{i-1}-\absolute{P_i}+1 \dd c_{i-1}] \approx P_i$ and hence the  leftmost minimum of $P_i$ is $P_i[\absolute{P_i}]$.
    Further, \cref{fact:subeq} implies that $P[1 \dd c_i - p] = P[p +1 \dd c_i]$, and hence $p$ is a CT-border-period of $P_{i}$.
    Finally, since $p$ divides $\absolute{P}$, it also divides $\absolute{P_i} = \absolute{P} - (i-1)p$.
    This completes the induction and the statement follows.
\end{proof}

\begin{fact}[analogue of \cref{fact:pd_fragment}]\label{fact:nd_fragment}
Consider a string $X$.
The $\ND$ representation of a fragment $X[i \dd j]$ of $X$ is given by
\[
\ND(X[i \dd j])[t] =
\begin{cases}
  0, & \text{if } \ND(X)[i+t-1] > j-i-t+1, \\[6pt]
  \ND(X)[i+t-1], & \text{otherwise}.
\end{cases}
\] 
If $X[j]$ is the leftmost minimum in $X[i \dd j]$, then $\ND(X[i \dd j))$ is a prefix of $\ND(X)[i\dd j])$.
\end{fact}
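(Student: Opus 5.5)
The first formula follows by comparing the next-smaller-value function of the fragment with that of $X$, position by position. The second statement then follows by checking that the ``otherwise'' branch of the formula always applies.

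Fix $t \in [1, j-i+1]$ and let $s = i+t-1$ be the corresponding position of $X$. By definition, $\nsv_{X[i \dd j]}(t)$ is determined by the smallest $s' \in (s, j]$ with $X[s'] < X[s]$, if such an $s'$ exists. The key observation is that the smallest $s' \in (s,\absolute{X}]$ with $X[s'] < X[s]$ is exactly $\nsv_X(s)$, or no such $s'$ exists.
\begin{itemize}
\item If $\nsv_X(s) \le j$, then the first smaller value within the fragment is the same one. Hence $\ND(X[i \dd j])[t] = \nsv_X(s) - s = \ND(X)[s]$.
\item If $\nsv_X(s) > j$, which includes the case $\nsv_X(s) = \absolute{X}+1$, then no position of $(s,j]$ holds a smaller value. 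Hence $\ND(X[i \dd j])[t] = 0$.
\end{itemize}

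It remains to translate $\nsv_X(s) \le j$ into the condition stated in the formula. When $\nsv_X(s) \neq \absolute{X}+1$, we have $\nsv_X(s) = s + \ND(X)[s]$, so $\nsv_X(s) \le j$ if and only if $\ND(X)[s] \le j - s = j-i-t+1$. When $\nsv_X(s) = \absolute{X}+1$, we have $\ND(X)[s] = 0$. Then both branches of the stated formula output $0$, since the condition $0 > j-i-t+1 \ge 0$ fails and the second branch returns $\ND(X)[s] = 0$. This agrees with the fragment value $0$. This completes the case distinction.

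For the second statement, assume $X[j]$ is the leftmost minimum of $X[i \dd j]$. Then for every $s \in [i, j)$ we have $X[j] < X[s]$ strictly; equality is excluded because $X[j]$ is the leftmost position attaining the minimum. Hence $\nsv_X(s) \le j$, that is, $\ND(X)[s] \le j - s$, for all $s \in [i,j)$. By the formula, $\ND(X[i \dd j])[t] = \ND(X)[i+t-1]$ for all $t \in [1, j-i]$. In other words, $\ND(X)[i \dd j)$ is a prefix of $\ND(X[i \dd j])$ (this is the intended reading of the stated bracket notation). Only the last entry, at position $j$, may differ, and it equals $0$ in the fragment.

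I expect no real obstacle here. The only delicate points are the sentinel value $\absolute{X}+1$ in the definition of $\nsv$, which is handled above, and the strictness of the inequality, which comes from the leftmost-minimum tie-breaking.
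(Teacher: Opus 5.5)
Your proof is correct. The paper records this as a Fact without proof, and your position-by-position comparison of $\nsv_{X[i \dd j]}$ with $\nsv_X$ is exactly the routine verification it leaves implicit, including the sentinel value $\absolute{X}+1$ and the strict inequality forced by the leftmost-minimum tie-breaking. You were also right to read the garbled second claim as ``$\ND(X)[i \dd j)$ is a prefix of $\ND(X[i \dd j])$''; this is how it is invoked in the proof of \cref{lem:no_overlaps_v2}, and the literal reading already fails for $X = [2,1]$ with $i=1$, $j=2$.
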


\begin{restatable}[analogue of \cref{lem:no_overlaps}]{lemma}{noovstwo}\label{lem:no_overlaps_v2}
Consider a string $X$ and a fragment $X[i \dd j]$ with leftmost minimum $X[j]$.
Then, $p$ is a CT-block-period of $X[i\dd j]$ if and only if $p$ is a period of $\ND(X)[i\dd j)$ and $p$ divides $j-i+1$.
\end{restatable}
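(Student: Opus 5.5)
We mirror the proof of \cref{lem:no_overlaps}, replacing $\PD$ and $\psv$ by $\ND$ and $\nsv$, and \cref{fact:pd_fragment,lem:psv_equiv} by \cref{fact:nd_fragment,lem:nsv_equiv}. The roles of the first and last positions are exchanged. The key observation is that if $X[j]$ is the leftmost minimum of $X[i \dd j]$, then every $t \in [i,j)$ satisfies $\nsv_X(t) \le j$. Hence $\ND(X)[i\dd j)$ agrees with $\ND(X[i\dd j])$ on the first $j-i$ entries, and the last entry of $\ND(X[i\dd j])$ is $0$.

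For ($\Rightarrow$), $p$ divides $j-i+1$ and $X[i \dd j-p] \approx X[i+p \dd j]$ by the definition of a CT-block-period. By \cref{lem:nsv_equiv}, this gives $\ND(X[i \dd j-p]) = \ND(X[i+p\dd j])$. By \cref{lem:blockper_chain_v2}, applied to $X[i\dd j]$, the leftmost minimum of $X[i\dd j-p]$ is $X[j-p]$. The leftmost minimum of $X[i+p\dd j]$ is $X[j]$, since $X[j]$ is the global minimum of $X[i\dd j]$. \cref{fact:nd_fragment} then identifies $\ND(X)[i\dd j-p)$ with the prefix of length $j-p-i$ of $\ND(X[i\dd j-p])$, and $\ND(X)[i+p\dd j)$ with the corresponding prefix of $\ND(X[i+p\dd j])$. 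Hence $\ND(X)[i\dd j-p) = \ND(X)[i+p\dd j)$, that is, $p$ is a period of $\ND(X)[i\dd j)$.

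For ($\Leftarrow$), it suffices to show that $X[j-p]$ is the leftmost minimum of $X[i\dd j-p]$. Given this, \cref{fact:nd_fragment} shows that $\ND(X[i\dd j-p])$ consists of $\ND(X)[i\dd j-p)$ followed by $0$. It also shows that $\ND(X[i+p\dd j])$ consists of $\ND(X)[i+p\dd j)$ followed by $0$. These coincide by periodicity, so \cref{lem:nsv_equiv} yields $X[i\dd j-p]\approx X[i+p\dd j]$. Together with $p \mid j-i+1$ and $X[j]$ being the leftmost minimum, $p$ is then a CT-block-period.

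To prove the claim, suppose the leftmost minimum of $X[i\dd j-p]$ is at some $z\in[i,j-p)$, chosen so that every value in $(z,j-p]$ is at least $X[z]$. Since $X[j]$ is smaller than everything in $X[i\dd j)$ (it is the leftmost minimum, hence strictly smaller to its left), we get $\nsv_X(z)\in(j-p,j]$. By periodicity, $\ND(X)[z+p]=\ND(X)[z]$, so $\nsv_X(z+p)=\nsv_X(z)+p > j$. This contradicts $\nsv_X(t)\le j$ for all $t\in[i,j)$, noting that $z+p<j$. The main care needed is with the strict versus non-strict inequalities from the tie-breaking: $\nsv$ uses strict "$<$", while the leftmost minimum favours the left. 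The choice of $z$ as the leftmost minimum, giving $X[z] \le X[t]$ for all $t \in (z, j-p]$, is exactly what places $\nsv_X(z)$ beyond $j-p$.
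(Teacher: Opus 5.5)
Your proof is correct and follows essentially the same route as the paper's. Both directions match: ($\Rightarrow$) via \cref{lem:nsv_equiv} and \cref{fact:nd_fragment}, and ($\Leftarrow$) by reducing to $X[j-p]$ being the leftmost minimum of $X[i\dd j-p]$, refuted via $\nsv_X(z+p)=\nsv_X(z)+p>j$.

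One small fix: \cref{lem:blockper_chain_v2} is stated only for $p<\lfloor (j-i+1)/2\rfloor$. For the remaining admissible values (e.g.\ $p=(j-i+1)/2$), get that $X[j-p]$ is the leftmost minimum of $X[i\dd j-p]$ directly from $X[i\dd j-p]\approx X[i+p\dd j]$, since CT-matching strings have their roots at the same position; the paper uses this implicitly.
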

\begin{proof}
($\Rightarrow$:) By the definition of a CT-block-period, we have that $p$ divides $j-i+1$ and that
$\CT(X[i \dd j-p]) = \CT(X[i+p \dd j])$.
By \cref{lem:nsv_equiv}, this equality implies that
\begin{equation}\label{eq:simple2}
\ND(X[i \dd j-p]) = \ND(X[i+p \dd j]).
\end{equation}
Further,  by \cref{fact:nd_fragment}, since $X[j]$ and $X[j-p]$ are the minima of $X[i \dd j]$ and $X[i \dd j-p]$, respectively, we have that
$\ND(X[i \dd j))$ is a prefix of $\ND(X)[i \dd j)$   and that $\ND(X[i \dd j-p))$
is a prefix of $\ND(X)[1 \dd j-p)$.
By \eqref{eq:simple2}, we then have $\ND(X)[i \dd j-p) = \ND(X)[i+p \dd j)$
and hence $p$ is a period of $\ND(X)[i \dd j]$.

($\Leftarrow$:)
We first argue that it suffices to show that $X[j-p]$ is the leftmost minimum of $X[i \dd j-p]$.
If this is indeed the case, then by \cref{fact:nd_fragment} we have that $\ND(X)[i\dd j-p)=\ND(X)[i+p \dd j)$ is a prefix of $\ND(X[i \dd j-p])$, which, together with the fact that $\ND(X[i \dd j-p])[j-p-i+1] = 0 = \ND(X[i+p \dd j])[j-i-p+1]$, implies that $\ND(X[i \dd j-p]) = \ND(X[i+p \dd j])$.

Now, suppose for the sake of contradiction that the leftmost minimum of $X[i \dd j-p]$ is at some position $z \in [i, j-p]$.  
Then, $\nsv_X(z) = \ND(X)[z] +z > j-p$.
Since $\ND(X[i \dd j])$ is periodic with period $p$, we have 
$\ND(X)[z] = \ND(X)[z+p]$.
Therefore,
our assumption implies that $\nsv_X(z+p) =  \ND(X)[z+p]+(z+p) > j$,
which contradicts the assumption that $X[j]$ is the leftmost minimum of $X[i \dd j]$.
\end{proof}

\begin{lemma}[analogue of \cref{lem:no_sub_in_run}]\label{lem:no_sub_in_run:nsv}
    Consider $p, r, k \in \mathbb Z_{+}$ and strings $X = X_1UX_2$ and $Y = Y_1VY_2$ such that all of the following hold:
    \smallskip
    \begin{itemize}
        \item $\absolute{X_1} = \absolute{Y_1}$, $\absolute{X_2} = \absolute{Y_2}$, and $\absolute{U} = \absolute{V} = rp$, where $r \geq 2k + 1$;
        \item $U \approx V$;
        \item there is no position $i \in [1, \absolute{Y_1}]$ such that $\nsv_Y(i) \in [\absolute{Y_1} + p + 1, \absolute{Y_1V}]$;
        \item if we define $\forall i \in [1, r] : c_i := \absolute{X_1} + ip$, then $X[c_i]$ is the  leftmost minimum of $X(c_1 - p\dd c_i]$, and $Y[c_i]$ is the  leftmost minimum of $Y(c_1 - p\dd c_i]$;
        \item $\hdist(X \leadsto Y) \leq k$.
    \end{itemize}
    \smallskip
    Any sequence of $\hdist(X \leadsto Y)$ substitutions that transform $X$ into $X'$ with $X' \approx Y$ does not perform any substitution in fragment $X[c_{k + 1} \dd c_{r - k}]$.
\end{lemma}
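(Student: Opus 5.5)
We mirror the proof of \cref{lem:no_sub_in_run}, replacing $\psv$ by $\nsv$ and using \cref{lem:nsv_equiv} instead of \cref{lem:psv_equiv}. Fix any sequence of $\hdist(X \leadsto Y)$ substitutions turning $X$ into $X'$ with $X' \approx Y$. Let $i$ and $j$ be the minimal and maximal indices in $[1,r]$ such that $X[c_i]$ and $X[c_j]$ are not modified. Such indices exist and are distinct because $r \geq 2k+1 \geq \hdist(X\leadsto Y)+2$ when $k\ge1$. Define
\[Z = X'[1 \dd c_i] \cdot X(c_i \dd c_j) \cdot X'[c_j \dd \absolute{X}],\]
which restores the original characters strictly between $c_i$ and $c_j$. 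The goal is to show $Z \approx Y$. Then, by minimality of the number of substitutions, none were made in $X[c_i \dd c_j]$. Moreover, positions $c_1,\dots,c_{i-1}$ and $c_{j+1},\dots,c_r$ were all modified, so $(i-1)+(r-j)\le k$. This gives $i\le k+1$ and $j \ge r-k$, which yields the claim.

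To prove $Z\approx Y$, we check $\nsv_Z(x)=\nsv_Y(x)$ for every $x$, splitting into four cases that mirror those of \cref{lem:no_sub_in_run}.

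\textbf{Case 1:} $x\in[c_j,\absolute{X}]$. The suffix of $Z$ from $c_j$ equals that of $X'$, so $\nsv_Z(x)=\nsv_{X'}(x)=\nsv_Y(x)$ by \cref{fact:subeq} and \cref{lem:nsv_equiv} applied to suffixes.

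\textbf{Case 2:} $x\in[c_i,c_j)$. Since $Y[c_j]$ is the leftmost minimum of $Y(c_1-p\dd c_j]$, we get $\nsv_Y(x)\le c_j$. Likewise $\nsv_Z(x)\le c_j$, since $Z[c_j]=X[c_j]$ is the minimum of $X(c_1-p\dd c_j]$ and $Z[c_i\dd c_j]=X[c_i\dd c_j]$. The fragments $Z[c_i\dd c_j]$ and $Y[c_i\dd c_j]$ are aligned subfragments of $U\approx V$, so they CT-match, and the $\nsv$ values agree.

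\textbf{Case 3:} $x\in(c_1-p, c_i)$. Here $Z$ agrees with $X'$ on $(c_1-p\dd c_i]$, and $Y[c_i]$ is the minimum of $Y(c_1-p\dd c_i]$. Hence both $\nsv$ values lie in $(c_1-p, c_i]$ and coincide via $X'\approx Y$.

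\textbf{Case 4:} $x\in[1,\absolute{X_1}]$. By hypothesis, $\nsv_Y(x)\notin[c_1+1, c_r]$, i.e., $\nsv_Y(x)\le c_1$ or $\nsv_Y(x)>c_r$. If $\nsv_Y(x)\le c_1\le c_i$, then $Z$ and $X'$ agree up to $c_i$, so $\nsv_Z(x)=\nsv_{X'}(x)=\nsv_Y(x)$. Otherwise $\nsv_{X'}(x)>c_r\ge c_j$. Then $X'[x]\le X'[c_i]=X[c_i]=Z[c_i]$. Every value of $Z(c_i\dd c_j)$ is at least $\min(X[c_i],X[c_j])$; this holds because $X[c_j]$ is a minimum of $X(c_1-p\dd c_j]$ and $X[c_i]\ge X[c_j]$ (also by the fourth hypothesis). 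Also $X'[x]\le X'[c_j]=X[c_j]$. So no position in $(c_i,c_j)$ becomes a next strictly smaller value for $x$, and $\nsv_Z(x)=\nsv_{X'}(x)$.

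The main obstacle is Case 4. We must argue, with the correct strict versus non-strict inequalities under leftmost tie-breaking, that the restored middle values cannot create a new next-smaller value for an $x$ to the left of $U$. I would handle it exactly as above: use that $c_j$ is unmodified and is a minimum of everything in the block region to its left, while $X'[x]\le X'[c_j]$ because $\nsv_{X'}(x)>c_j$.
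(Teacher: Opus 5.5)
Your proof is correct and follows the paper's argument essentially step for step. You use the same minimal/maximal unmodified indices $i,j$, the same hybrid string $Z$, and the same four-way case split on $x$, listed in reverse order. Your treatment of the case $x \le \absolute{X_1}$ with $\nsv_Y(x) > c_r \ge c_j$ also states the inequalities more carefully than the paper does: the paper writes $\nsv_Y(x) > c_j \ge c_r$, with the inequality between $c_j$ and $c_r$ reversed.
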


\begin{proof}
    Consider any sequence of $\hdist(X \leadsto Y)$ substitutions that transforms $X$ into $X'$ such that ${X' \approx Y}$.
    We fix $i,j \in [1,r]$ to be the minimal and maximal values, respectively, such that $X[c_{i}]$ and $X[c_{j}]$ are not modified. (There are at least two distinct values due to $r \geq \hdist(X \leadsto Y) + 2$.)
    We will show that no substitution is performed in $X[c_i\dd c_j]$.

    Let $Z = X'[1 \dd c_i) \cdot X[c_i \dd c_j] \cdot X'(c_j \dd \absolute{X}] = X'[1 \dd c_i] \cdot X(c_i \dd c_j) \cdot X'[c_j \dd \absolute{X}]$. We claim that $Z \approx Y$.
    If this is not the case, there exists $x \in [1, \absolute{X}]$ with $\nsv_Z(x) \neq \nsv_Y(x)$. We show that such $x$ does not exist, distinguishing between the following four cases:

    \begin{description}
        \item[Case 1:] $x \in [1, c_1-p]$.
        By the assumptions, we have $\nsv_Y(x) \notin [\absolute{Y_1} + p + 1, \absolute{Y_1V}] = (c_1, c_r]$. 
        Now, suppose that $\nsv_Y(x) \leq c_1$.
        Note that $c_i \geq c_1$ implies $Z[1\dd c_1] = X'[1\dd c_1]$ and
        hence $\nsv_Z(x) = \nsv_{X'}(x) = \nsv_Y(x)$ follows from $X' \approx Y$, together with \cref{fact:subeq,lem:nsv_equiv}.
        If, however, $\nsv_{X'}(x) = \nsv_Y(x) > c_j \geq c_r$, then clearly $X'[x] \leq X'[c_j] = X[c_j]$.
        Since $X[c_j] = Z[c_j]$ is the  leftmost minimum of $X(c_i\dd c_j] = Z(c_i\dd c_j]$, it cannot be that $\nsv_Z(x) \in [c_i, c_j]$, which implies $\nsv_Z(x) = \nsv_{X'}(x) = \nsv_Y(x)$.
        \item[Case 2:] $x \in (c_1-p,c_i)$.
        By our assumptions, $Y[c_r]$ is the  leftmost minimum of $Y(c_1-p\dd c_r]$, and thus $\nsv_Y(x) \leq c_r$. Therefore, $\nsv_Z(x) = \nsv_{X'}(x) = \nsv_Y(x)$ follows from
        $Z[c_1-p\dd c_i] = X'[c_1-p\dd c_i] \approx Y[c_1-p\dd c_i]$.
        \item[Case 3:] $x \in [c_i, c_j)$.
        By our assumptions, $Y[c_j]$ is the  leftmost minimum of $Y[c_i \dd c_j]$.
        Hence, $\nsv_Y(x) \leq c_j$. We obtain $\nsv_Z(x) = \nsv_Y(x)$ due to \cref{lem:nsv_equiv} and the observation that
        \begin{alignat*}{1}
        Z[c_i \dd c_j] = X[c_i \dd c_j] = {}& U(ip \dd jp]\\
        {}\approx {}& V(ip \dd jp] = Y[c_i \dd c_j].
        \end{alignat*}%
        \item[Case 4:] $x \in [c_j, \absolute{X}]$.
        Since $Z$ has suffix $X'[c_j \dd \absolute{X}]$, and due to $X' \approx Y$, \cref{fact:subeq}, and \cref{lem:nsv_equiv}, it holds $\forall x \in [c_j, \absolute{X}] : \nsv_Z(x) = \nsv_{X'}(x) = \nsv_Y(x)$.
    \end{description}

    We have shown that $Z \approx Y$. Hence no optimal sequence of substitutions modifies $X[c_i\dd c_j]$. Finally, by the definition of $i$ and $j$, all of $c_1, \dots, c_{i - 1}$ and $c_{j + 1}, \dots, c_r$ were modified by our fixed sequence of substitutions.
    We therefore have $(i - 1) + (r - j) \leq \hdist(X \leadsto Y) \leq k$, which implies $i \leq k+1$ and $j \geq r - k$.
    Consequently, there are no substitutions in $X[c_{k + 1} \dd c_{r - k}]$.
\end{proof}

\begingroup
\def\ell{a}
\def\err{b}
\begin{lemma}[analogue of \cref{lem:delete_retain_ct_eq}]\label{lem:delete_retain_ct_eq:nsv}
    Consider two strings $X$ and $Y$ such that $X \approx Y$.
    Let $\ell, \err \in [1, \absolute{Y}]$ with $\ell \leq \err$ be chosen such that $Y[\err]$ is a leftmost minimum in $Y[\ell\dd \err]$.
    Then $X[1\dd \ell) \cdot X[\err\dd \absolute{X}] \approx Y[1\dd \ell) \cdot Y[\err\dd \absolute{X}]$.
\end{lemma}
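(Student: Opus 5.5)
We mirror the proof of \cref{lem:delete_retain_ct_eq}, replacing $\psv$ by $\nsv$ and \cref{lem:psv_equiv} by \cref{lem:nsv_equiv}. First, from $X \approx Y$, \cref{fact:subeq} gives $X[a \dd b] \approx Y[a \dd b]$, so $X[b]$ is also the leftmost minimum of $X[a \dd b]$. For $Z \in \{X,Y\}$ set $\hat Z = Z[1 \dd a) \cdot Z[b \dd \absolute{Z}]$. In $\hat Z$, positions $[1,a)$ keep their indices, and a position $t \geq a$ of $\hat Z$ corresponds to $t + (b-a)$ in $Z$. The goal is to write $\nsv_{\hat Z}$ purely in terms of $\nsv_Z$. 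By \cref{lem:nsv_equiv}, $\nsv_X = \nsv_Y$, hence $\nsv_{\hat X} = \nsv_{\hat Y}$ and so $\hat X \approx \hat Y$.

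The case analysis is as follows.
\begin{itemize}
\item For $t \in [a, \absolute{\hat Z}]$, the suffix is unchanged, so $\nsv_{\hat Z}(t) = \nsv_Z(t+b-a) - (b-a)$. The sentinel $\absolute{Z}+1$ maps to $\absolute{\hat Z}+1$.
\item For $t < a$ with $\nsv_Z(t) < a$, we have $\nsv_{\hat Z}(t) = \nsv_Z(t)$. The deleted part lies to the right of $\nsv_Z(t)$ and does not matter.
\item For $t < a$ with $\nsv_Z(t) \in [a, b]$, we get $Z[b] \leq Z[\nsv_Z(t)] < Z[t]$, since $Z[b]$ is a minimum of $Z[a \dd b]$. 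Also, every position in $(t, a)$ has value at least $Z[t]$. Hence $\nsv_{\hat Z}(t) = a$, which is the new index of $Z[b]$.
\item For $t < a$ with $\nsv_Z(t) > b$, every value in $(t, \nsv_Z(t))$ is at least $Z[t]$, and this includes $Z[b]$. Deleting $Z[a \dd b)$ only removes such values, so $\nsv_{\hat Z}(t) = \nsv_Z(t) - (b-a)$. This also holds for the sentinel.
\end{itemize}

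The only delicate point is the third case. There we must use that $Z[b]$ is a minimum of the window, and the strict inequality $<$ in the definition of $\nsv$ must be handled correctly. If $Z[b]$ equalled the minimum attained earlier in the window, the claim $Z[b] < Z[t]$ would still hold, because $Z[b] \leq Z[\nsv_Z(t)] < Z[t]$. So leftmost-ness is not even needed beyond minimality.

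Since each case is decided by $\nsv_Z$ together with the constants $a, b$, the maps $\nsv_{\hat X}$ and $\nsv_{\hat Y}$ coincide, and the result follows from \cref{lem:nsv_equiv}. I expect no real obstacle beyond checking the boundary indices.
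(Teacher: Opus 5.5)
Your proof is correct and is essentially the paper's argument: the same four-case description of $\nsv_{\hat Z}$ purely in terms of $\nsv_Z$, $a$ and $b$, concluded via \cref{lem:nsv_equiv}. You are in fact slightly more careful than the paper, writing the shifted index $\nsv_Z(t+b-a)-(b-a)$ in the first case and covering the sentinel $\absolute{Z}+1$ in the last.

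One caution on your closing remark: minimality of $Z[b]$ for \emph{both} $Z\in\{X,Y\}$ is indeed all the case analysis uses, but leftmost-ness of $Y[b]$ is genuinely needed to transfer that minimality to $X$, as you correctly do at the start. With mere minimality in $Y$ the statement fails, e.g.\ $Y=[2,1,1]$, $X=[2,1,5]$, $a=2$, $b=3$ give $\hat Y=[2,1]\not\approx[2,5]=\hat X$.
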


\begin{proof}
    Due to $X \approx Y$, together with \cref{fact:subeq,lem:nsv_equiv}, it is clear that $X[\err]$ is minimal in $X[\ell\dd \err]$.
    For $Z \in \{X, Y\}$, let $\hat Z = Z[1\dd \ell) \cdot Z[\err\dd \absolute{Z}]$. It is easy to see that the following hold:
    \begin{itemize}
        \item For $i \in [\ell, \absolute{\hat Z})$, it holds $\nsv_{\hat Z}(i) = \nsv_{Z}(i) - \err + \ell$. (This is trivial.)
        \item For $i \in [1, \ell)$ with $\nsv_Z(i) \in [1, \ell)$, it holds $\nsv_{\hat Z}(i) = \nsv_Z(i)$. 
        \item For $i \in [1, \ell)$ with $\nsv_Z(i) \in [\ell, \err]$, it holds $\nsv_{\hat Z}(i) = \ell$. 
        \item For $i \in [1, \ell)$ with $\nsv_Z(i) \in (\err, \absolute{Z}]$, it holds $\nsv_{\hat Z}(i) = \nsv_Z(i) - \err + \ell$. 
    \end{itemize}
    Note that the definition of $\nsv_{\hat Z}$ depends only on $\nsv_Z$, and we already know that $\nsv_Z = \nsv_X = \nsv_Y$ due to $X \approx Y$ and \cref{lem:nsv_equiv}. Therefore, we have $\nsv_{\hat X} = \nsv_{\hat Y}$, and due to \cref{lem:nsv_equiv} this implies $\hat X \approx \hat Y$.
\end{proof}
\endgroup

\restatetrimnsv*

\begin{proof}
    Together, the following two statements are equivalent to the claim:
    
    \begin{claim}\label{claim:st1:nsv}
        $\hdist(X \leadsto Y) \leq k$ implies $\hdist(\hat X \leadsto \hat Y) \leq \hdist(X \leadsto Y)$.
    \end{claim}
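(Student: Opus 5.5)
The plan is to mirror the proof of \cref{claim:st1} in \cref{lem:trim}, with next-smaller values replacing previous-smaller values. We assume $\hdist(X \leadsto Y) \leq k$ and fix an optimal sequence of $\hdist(X \leadsto Y)$ substitutions turning $X$ into some $X' \approx Y$. The conditions of \cref{lem:trim:nsv} are exactly those of \cref{lem:no_sub_in_run:nsv}, except for $r \geq 2k+3 \geq 2k+1$ and the extra hypothesis $\hdist(X \leadsto Y)\le k$, which we now have. There is one subtlety: \cref{lem:no_sub_in_run:nsv} asks for leftmost minima, whereas \cref{lem:trim:nsv} only says ``a minimum''. We will note that the proof of \cref{lem:no_sub_in_run:nsv} only uses that $X[c_j]$ and $Y[c_j]$ are minimal among the relevant ranges, so the weaker hypothesis suffices; alternatively, this is guaranteed by the way $\nsv$ is defined with strict inequality. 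Applying \cref{lem:no_sub_in_run:nsv} shows that no substitution touches $X[c_{k+1} \dd c_{r-k}]$. Hence $X' = X'[1\dd c_{k+1}) \cdot X[c_{k+1}\dd c_{r-k}] \cdot X'(c_{r-k}\dd \absolute{X}]$.

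Next we transport these substitutions to $\hat X = X[1\dd c_{k+1}+p)\cdot X[c_{r-k}\dd \absolute{X}]$. The deleted fragment of $X$ is $X[c_{k+1}+p \dd c_{r-k})$, which lies strictly inside the unmodified region $X[c_{k+1} \dd c_{r-k}]$. Therefore every substitution position of the sequence survives in $\hat X$, after a shift of the positions to the right of the deleted block. Applying the shifted substitutions to $\hat X$ gives
\[\hat X' = X'[1\dd c_{k+1}+p)\cdot X'[c_{r-k}\dd \absolute{X}],\]
that is, $\hat X'$ is obtained from $X'$ by deleting $X'[c_{k+1}+p \dd c_{r-k})$. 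Likewise, $\hat Y$ is obtained from $Y$ by deleting $Y[c_{k+1}+p\dd c_{r-k})$. The number of substitutions used is $\hdist(X \leadsto Y)$.

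It remains to show $\hat X' \approx \hat Y$, and this is where \cref{lem:delete_retain_ct_eq:nsv} enters. We apply it to $X' \approx Y$ with $a = c_{k+1}+p$ and $b = c_{r-k}$. Then $X'[1\dd a)\cdot X'[b\dd\absolute{X}] = \hat X'$ and the analogous identity holds for $Y$. The hypothesis to verify is that $Y[c_{r-k}]$ is a minimum of $Y[c_{k+1}+p \dd c_{r-k}]$. This follows from the fourth condition of \cref{lem:trim:nsv}: $Y[c_{r-k}]$ is a minimum of $Y(c_1-p\dd c_{r-k}]$, which contains the required range because $c_{k+1}+p > c_1 - p$.

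The one point needing care is whether ``minimum'' can stand in for ``leftmost minimum'' in \cref{lem:delete_retain_ct_eq:nsv}. Its proof only uses that $Y[b]$ is not larger than anything in $Y[a\dd b]$, which yields the stated $\nsv$ relations, because $\nsv$ uses strict inequality. Ties would only shorten $\nsv$ jumps into positions $\le b$, and those are mapped to $a$ or shifted consistently. We also confirm that $r \geq 2k+3$ ensures $c_{k+1}+p \leq c_{r-k}$, so the deleted fragment is well defined. With these checks in place, the claim follows.
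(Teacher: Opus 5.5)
Your main line is the paper's. You use \cref{lem:no_sub_in_run:nsv} to show that $X[c_{k+1}\dd c_{r-k}]$ is untouched, transfer the substitutions to $\hat X$, and apply \cref{lem:delete_retain_ct_eq:nsv} to $X'\approx Y$ with $a=c_{k+1}+p$, $b=c_{r-k}$ to get $\hat X'\approx\hat Y$. The paper silently reads the fourth hypothesis as ``leftmost minimum''. Its proof asserts that $Y[c_{r-k}]$ is a leftmost minimum; this is what \cref{lem:no_sub_in_run:nsv} requires and what \cref{lem:blockper_chain_v2} provides in the application. Under that reading your argument is complete.

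What is wrong are your two remarks claiming that a plain minimum suffices.

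For \cref{lem:delete_retain_ct_eq:nsv}, the case $\nsv_Z(i)\in[a,b]$ needs $Z[b]$ to be minimal in $Z[a\dd b]$ for $Z=X$ as well as for $Z=Y$. For $X$, the paper derives this from $X\approx Y$, which requires $Y[b]$ to be the \emph{leftmost} minimum. Concretely, take $X=[1,0,1]\approx Y=[1,0,0]$ with $a=2$, $b=3$. Then $Y[3]$ is a minimum of $Y[2\dd 3]$, yet $\hat X=[1,1]\not\approx[1,0]=\hat Y$. In your application this is repairable: $X'[a\dd b]=X[a\dd b]$, and the hypothesis on $X$ gives minimality of $X'[b]$ directly. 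But that is a different justification from the one you give.

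For \cref{lem:no_sub_in_run:nsv}, the proof does use leftmostness. In Case~2 ($x\in(c_1-p,c_i)$) it needs $\nsv_Y(x)\le c_i$, i.e.\ $Y[x]>Y[c_i]$ strictly. With a tie $Y[x]=Y[c_i]$, the strict inequality in the definition of $\nsv$ lets $\nsv_Y(x)$ jump past $c_i$ into $(c_i,c_j)$. There $Z$ and $X'$ may differ, and the argument for $Z\approx Y$ breaks. So strictness of $\nsv$ is what makes ties harmful, not what makes them harmless.

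Under the literal ``a minimum'' hypothesis, your invocation of \cref{lem:no_sub_in_run:nsv} is therefore unsupported. Either adopt the leftmost reading, as the paper does, or give a new argument for that lemma with ties allowed.
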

    \begin{claim}\label{claim:st2:nsv}
        $\hdist(\hat X \leadsto \hat Y) \leq k$ implies $\hdist(\hat X \leadsto \hat Y) \geq \hdist(X \leadsto Y)$.
    \end{claim}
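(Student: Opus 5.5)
We need to prove the two claims of the $\nsv$-analogue of \cref{lem:trim}. Together they give the statement exactly as in the proof of \cref{lem:trim}: either both distances are at most $k$ and equal, or both exceed $k$. The plan mirrors the $\psv$ proof, now using \cref{lem:no_sub_in_run:nsv}, \cref{lem:delete_retain_ct_eq:nsv}, and \cref{lem:nsv_equiv}.

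\emph{Claim~\ref{claim:st1:nsv}.} Assume $\hdist(X \leadsto Y)\le k$ and fix an optimal sequence of substitutions transforming $X$ into $X'\approx Y$. Since $r\ge 2k+3\ge 2k+1$, \cref{lem:no_sub_in_run:nsv} shows that no substitution touches $X[c_{k+1}\dd c_{r-k}]$. The deleted part $X[c_{k+1}+p\dd c_{r-k})$ lies inside this fragment, so the same substitutions, suitably shifted, apply to $\hat X$. They produce $\hat X'$, which is obtained from $X'$ by deleting $X'[c_{k+1}+p\dd c_{r-k})$. Likewise, $\hat Y$ is obtained from $Y$ by deleting $Y[c_{k+1}+p\dd c_{r-k})$. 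By the fourth hypothesis, $Y[c_{r-k}]$ is a minimum of $Y(c_1-p\dd c_{r-k}]$, hence of $Y[c_{k+1}+p\dd c_{r-k}]$. If ties make it non-leftmost, I would use the block-chain structure from \cref{lem:blockper_chain_v2} applied to $V$, where leftmost-ness holds. Applying \cref{lem:delete_retain_ct_eq:nsv} with $a=c_{k+1}+p$ and $b=c_{r-k}$ then gives $\hat X'\approx\hat Y$.

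\emph{Claim~\ref{claim:st2:nsv}.} Assume $\hdist(\hat X\leadsto\hat Y)\le k$. First check that $\hat X,\hat Y$ satisfy the hypotheses of \cref{lem:no_sub_in_run:nsv} with $\hat r=2k+2$ blocks. The middle parts are $U[1\dd(k+1)p]\cdot U((r-k-1)p\dd rp]$ and the analogous part of $V$, and these CT-match by \cref{lem:delete_retain_ct_eq:nsv} applied to $U\approx V$. The minimum conditions survive deletion because the kept blocks retain their relative order. The $\nsv$ condition for positions in $Y_1$ transfers because deleting a block between $\nsv$-targets only shifts indices. By \cref{lem:no_sub_in_run:nsv}, an optimal sequence for $\hat X$ leaves $\hat X[c_{k+1}\dd c_{k+2}]$ untouched, which is exactly the junction of the cut. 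Replaying these substitutions on $X$ gives $X'$, which equals $\hat X'$ with $X[c_{k+1}+p\dd c_{r-k})$ reinserted.

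It remains to show $\nsv_{X'}=\nsv_Y$, which I would do case by case in mirror image of the $\psv$ argument. For positions in the shared suffix $[c_{r-k},\absolute X]$, equality is immediate from $\hat X'\approx\hat Y$ together with \cref{fact:subeq,lem:nsv_equiv}. For positions in $[c_{k+1},c_{r-k})$, the fragment is unmodified, $U\approx V$ holds, and $Y[c_{r-k}]$ is its minimum, so $\nsv$ stays inside the fragment and agrees. For a position $x<c_{k+1}$, if $\nsv_Y(x)<c_{k+1}+p$, equality follows from the shared prefix. Otherwise, by the hypothesis on $Y_1$ and the chain of block minima, either $x>c_1-p$ and $\nsv_Y(x)$ lands at a block end, or $x\le c_1-p$ and $\nsv_Y(x)>c_r$, so it lies beyond the cut. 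In both cases the reinserted values are at least $X[c_{r-k}]=\hat X'[c_{k+2}]$. In the second case we moreover have $\hat X'[x]\le \hat X'[c_{k+2}]$, since $\hat X'[c_{k+2}]$ lies strictly between $x$ and its $\nsv$ target. Hence the reinsertion does not create a new next-smaller value, and $\nsv_{X'}(x)=\nsv_{\hat X'}(\hat x)$ after shifting, which equals $\nsv_Y(x)$.

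The main obstacle is this last case. Unlike $\psv$, the relation $\nsv$ uses a strict inequality, so the tie argument reverses. I would rely on $\hat X'[x]\le\hat X'[c_{k+2}]$ together with $X[c_{r-k}]$ being the leftmost minimum of the reinserted range. This shows that every reinserted value is at least $\hat X'[x]$, so none of them is strictly smaller.
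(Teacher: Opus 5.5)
\textbf{Gap in the last case.} Your outline matches the paper's proof: apply \cref{lem:no_sub_in_run:nsv} to $(\hat X,\hat Y)$ with $\hat r=2k+2$, replay the substitutions on $X$, then check $\nsv_{X'}=\nsv_Y$ on the shared suffix, the untouched middle, and the prefix. The problem is the case $x<c_{k+1}$ with $\nsv_Y(x)\ge c_{k+1}+p$.

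\begin{itemize}
\item You split it into two alternatives but only handle the second. The inequality $\hat X'[x]\le\hat X'[c_{k+2}]$, which is what makes the reinsertion harmless, is derived only when $x\le c_1-p$.
\item For the first alternative (``$x>c_1-p$ and $\nsv_Y(x)$ lands at a block end'') you show nothing, and your argument cannot work there. If $\nsv_Y(x)=c_j$ with $k+2\le j<r-k$, the target lies inside the deleted range $[c_{k+2},c_{r-k})$. Then $\nsv_{\hat Y}(x)$ collapses to $c_{k+2}$ rather than being a shift of $\nsv_Y(x)$. Determining $\nsv_{X'}(x)$ would require comparing the possibly substituted value $X'[x]$ with $X[c_{k+2}\dd c_{r-k})$, which nothing in your argument controls.
\item The missing observation is that this alternative is empty. $Y[c_{k+1}]$ is the leftmost minimum of $Y(c_1-p\dd c_{k+1}]$; you need exactly this property anyway to invoke \cref{lem:no_sub_in_run:nsv} on $(\hat X,\hat Y)$. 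Hence $Y[x]>Y[c_{k+1}]$ for every $x\in(c_1-p,c_{k+1})$, so $\nsv_Y(x)\le c_{k+1}<c_{k+1}+p$.
\item Only $x\le c_1-p$ survives. There the third hypothesis gives $\nsv_Y(x)>c_r$, and your reinsertion argument finishes the case.
\end{itemize}
Your non-strict $\hat X'[x]\le\hat X'[c_{k+2}]$ is precisely what is available, since $\nsv$ is strict, and it suffices. The paper writes this inequality with $<$, which is not justified.

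\textbf{Index slip in the trimmed middle parts.} The lemma's $\hat X$ is $X_1\cdot U[1\dd (k+2)p)\cdot U[(r-k)p\dd rp]\cdot X_2$, not the block-aligned $U[1\dd (k+1)p]\cdot U((r-k-1)p\dd rp]$ you wrote. This affects the hypothesis $\hat U\approx\hat V$. For the correct cut, \cref{lem:delete_retain_ct_eq:nsv} applies with $a=(k+2)p$ and $b=(r-k)p$, where $V[b]=Y[c_{r-k}]$ is the required minimum of $V[a\dd b]$. Your version would need a minimum at $V[(r-k-1)p+1]$, which is not given. Elsewhere, for the reinserted segment $X[c_{k+1}+p\dd c_{r-k})$, you use the correct cut, so this is an inconsistency to fix rather than a structural error.
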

      
    \begin{claimproof}[Proof of \cref{claim:st1:nsv}]
    Assume that $\hdist(X \leadsto Y) \leq k$ and consider any sequence of $\hdist(X \leadsto Y)$ substitutions that transform $X$ into a string $X'$ with $X' \approx Y$.
    By \cref{lem:no_sub_in_run:nsv}, it holds $X' = X'[1\dd c_{k + 1}) \cdot X[c_{k + 1}\dd c_{r - k}] \cdot X'(c_{r - k} \dd \absolute{X}]$. 
    Note that we can perform the same sequence of substitutions (with an appropriate shift) on $\hat X$ as well.
    Hence, there is a sequence of $\hdist(X \leadsto Y)$ substitutions that transforms $\hat X$ into 
    \[\hat X' : = X'[1\dd c_{k + 1}) \cdot X[c_{k + 1}\dd c_{k + 1} + p) \cdot X[c_{r - k}] \cdot X'(c_{r - k} \dd \absolute{X}].\]%
    It remains to be shown that $\hat X' \approx \hat Y$, which implies $\hdist(\hat X \leadsto \hat Y) \leq \hdist(X \leadsto Y)$.
    Note that $\hat X'$ and $\hat Y$ can be obtained by deleting respectively fragments $X'[c_{k + 1} + p \dd c_{r - k})$ and $Y[c_{k + 1} + p\dd c_{r - k})$ from $X'$ and $Y$.
    By the definition of $Y$, we know that $Y[c_{r - k}]$ is a leftmost minimum in $Y[c_{k + 1} + p \dd c_{r - k}]$. Hence, we can apply \cref{lem:delete_retain_ct_eq:nsv} to $X'$ and $Y$ with $a = c_{k + 1} + p$ and $b = c_{r - k}$, thus obtaining $\hat X'\approx \hat Y$.
    \end{claimproof}

    \begin{claimproof}[Proof of \cref{claim:st2:nsv}]
    Assume that $\hdist(\hat X \leadsto \hat Y) \leq k$ and consider any sequence of $\hdist(\hat X \leadsto \hat Y)$ substitutions that transform $\hat X$ into a string $\hat X'$ with $
    \hat X' \approx \hat Y$.
    It can be readily verified that the strings $\hat X$ and $\hat Y$ still satisfy the conditions of \cref{lem:no_sub_in_run:nsv} with parameter $\hat r = 2k + 2$.
    Therefore, it holds 
    \begin{alignat*}{1}
        \hat X' = {}& \hat X'[1\dd c_{k + 1}) \cdot \hat X[c_{k + 1}\dd c_{\hat r - k}] \cdot \hat X'(c_{\hat r - k} \dd \absolute{\hat X}]\\
        {} = {}&\hat X'[1\dd c_{k + 1}) \cdot X[c_{k + 1}\dd c_{k + 1} + p) \cdot X[c_{r - k}] \cdot \hat X'(c_{k + 2} \dd \absolute{\hat X}].
    \end{alignat*}

    We observe that the same sequence of substitutions can also be applied directly to $X$, resulting in the string
    \[ X' = \hat X'[1\dd c_{k + 1}) \cdot X[c_{k + 1}\dd c_{r - k}] \cdot \hat X'(c_{k + 2} \dd \absolute{\hat X}].\]
    Let $\ell := \absolute{X} - \absolute{\hat X}$.     It remains to be shown that $X' \approx Y$, which implies $\hdist(\hat X \leadsto \hat Y) \geq {\hdist(X \leadsto Y)}$. Note that $\hat X'$ and $X'$, and also $\hat Y$ and $Y$ share a suffix of length $\absolute{X} - c_{r - k} + 1$. Because of ${\hat X' \approx \hat Y}$, combined with \cref{fact:subeq,lem:nsv_equiv}, it is clear that, for every $x \in (c_{r - k}, \absolute{X}]$, it holds $\nsv_{X'}(x) = \nsv_{\hat X'}(x - \ell) = \nsv_{\hat Y}(x - \ell) = \nsv_Y(x)$.
    
    Next, we observe that $X'[c_{k + 1}\dd c_{r - k}] = X[c_{k + 1}\dd c_{r - k}]$. Recall that $X[c_{r - k}]$ and $Y[c_{r - k}]$ are respectively minimal in $X[c_{k + 1}\dd c_{r - k}]$ and $Y[c_{k + 1}\dd c_{r - k}]$.
    Due to $U \approx V$, combined with \cref{fact:subeq,lem:nsv_equiv}, we have $\forall x \in [c_{k + 1}, c_{r - k}) : \nsv_{X'}(x) = \nsv_Y(x) \leq c_{r - k}$.
    
    Finally, we have to consider $x \in [1, c_{k + 1})$.
    We observe that, $\hat X'$ and $X'$, and also $\hat Y$ and~$Y$ share a prefix of length $c_{k + 1} + p - 1$.
    Hence, if $\nsv_Y(x) \leq c_{k + 1} + p - 1$, then $\hat X' \approx \hat Y$ already implies $\nsv_{\hat X'}(x) = \nsv_{\hat Y}(x)$.
    It remains to consider the case when $x \in [1, c_{k + 1})$ and $\nsv_Y(x) \geq c_{k + 1} + p$.
    Since $Y[c_{k +1}]$ is minimal in $Y(c_1 - p \dd c_{k + 1}]$, it holds $x \leq c_1 - p = \absolute{Y_1}$, and, by the condition of the lemma, $\nsv_Y(x) \geq \absolute{Y_1V} + 1 = c_{r} + 1$.
    Note that $\nsv_{\hat Y}(x) = \nsv_Y(x) - \ell \geq c_{r} + 1 - \ell = c_r + 1 - c_{r - k} + c_{k + 1} + p = c_{2k + 2} + 1$, as we obtained $\hat Y$ by merely deleting $\ell$ elements between $x$ and $\nsv_Y(x)$.
    Due to $\hat X' \approx \hat Y$, we have $\nsv_{\hat X'}(x) = \nsv_{\hat Y}(x)$.
    Finally, we obtain $X'$ from $\hat X'$ by inserting in $\hat X'$, between positions $x$ and $\nsv_{\hat X'}(x)$, values that are at least as large as $\hat X'[c_{k + 1} + p] = X[c_{r - k}]$. Due to $\hat X'[c_{k + 1} + p] \in (x, \nsv_{\hat X'}(x))$, we know that $\hat X'[x] < \hat X'[c_{k + 1} + p]$, and thus all the inserted values are at least $\hat X'[x]$.
    Consequently, we have $\nsv_{X'}(x) = \nsv_{\hat X'}(x) + \ell = \nsv_{\hat Y}(x) + \ell = \nsv_Y(x)$, as required.
    \end{claimproof}
    
    The combination of the two claims yields that either both $\hdist(\hat X \leadsto \hat Y)$ and $\hdist(X \leadsto Y)$ are at most $k$ and are in fact equal, or they are both greater than $k$.
\end{proof}

\section{Conclusions and Open Problems}

    We have presented an efficient algorithm for \problemname.
    Our algorithm can be seen as a reduction to several instances of computing $\hdist(X \leadsto Y)$ for pairs of strings $X$ and $Y$.
    We note that any improvement upon the dynamic-programming approach of Kim and Han \cite{DBLP:journals/tcs/KimH25} 
    (see also \cref{app:compare}) 
    for this comparison problem would directly yield an improvement over our main result.
    Interestingly, no (conditional) lower bound is known for either the problem of computing $\hdist(X \leadsto Y)$ or \problemname, and devising such a lower bound is another direction for future work.
    Finally, we hope that our techniques will serve as a foundation for advancing the state of the art in approximate CT-matching under the edit distance.

\bibliographystyle{plainurl}
\bibliography{references}

\clearpage

\appendix

\section{Efficient Computation of Cartesian Tree Hamming Distance}\label{app:compare}

First, we formally define a convolution problem on weakly monotone sequences and present a self-contained version of the efficient solution of Kim and Han~\cite{DBLP:journals/tcs/KimH25}.

\defproblem{\textsc{Max-Min Convolution for Non-Decreasing Sequences}}
{Two non-decreasing sequences $A[i]_{i=0}^{i=z}$ and $B[i]_{i=0}^{i=z}$.}
{A sequence $C[i]_{i=0}^{i=z}$ such that $C[i]= \max\limits_{j \in [0,i]}(\min \{A[j], B[i-j]\})$.}

\begin{lemma}\label{lem:conv}
    The $\textsc{Max-Min Convolution for Non-Decreasing Sequences}$ problem can be solved in $\cO(z)$ time.
\end{lemma}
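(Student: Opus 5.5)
The plan is a two-pointer (monotone) sweep that exploits the non-decreasing structure of $A$ and $B$. For a fixed $i$, define $f_i(j) = \min\{A[j], B[i-j]\}$ for $j \in [0,i]$. As $j$ increases, $A[j]$ is non-decreasing and $B[i-j]$ is non-increasing. Consequently $f_i$ is unimodal: it equals $A[j]$ while $A[j] \le B[i-j]$, and $B[i-j]$ afterwards. The maximum of $f_i$ is therefore attained at the crossing point. Define
\[
j^*(i) = \max\{ j \in [0,i] : A[j] \le B[i-j] \},
\]
or $j^*(i) = -1$ if no such $j$ exists. Then
\[
C[i] = \max\{A[j^*(i)],\, B[i-j^*(i)-1]\},
\]
where the terms with out-of-range indices are dropped. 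It suffices to compare these two candidates, since every $j \le j^*$ gives value at most $A[j^*]$ and every $j > j^*$ gives value at most $B[i-j^*-1]$.

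The key step is to show that $j^*(i)$ is non-decreasing in $i$, so that all crossing points can be found by one pointer moving monotonically. Suppose $A[j] \le B[i-j]$. Then $B[(i+1)-j] \ge B[i-j] \ge A[j]$, because $B$ is non-decreasing. Hence the predicate ``$A[j] \le B[i-j]$'' is downward closed in $j$ for fixed $i$ (both sides move monotonically in $j$) and upward closed in $i$ for fixed $j$. This gives $j^*(i+1) \ge j^*(i)$.

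The algorithm then maintains a pointer $j$, starting from $j^*(i-1)$. While $j+1 \le i$ and $A[j+1] \le B[i-j-1]$, it increments $j$, and then it evaluates the formula for $C[i]$. The pointer never decreases and is bounded by $z$, so the total work over all $i$ is $\cO(z)$ amortized.

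The main obstacle is a careful correctness argument at the boundaries: the case where no crossing exists ($A[0] > B[i]$, so $C[i] = B[i]$), the case $j^* = i$ (so $C[i] = A[i]$), and ties. Ties are harmless, since $f_i$ takes the same value on either side of an equality. I would state unimodality as a small claim, prove the monotonicity of $j^*$ as a second claim, and then conclude with the amortized running-time bound.
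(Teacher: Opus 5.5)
Your proposal is correct and follows the paper's proof in its main steps. Your $j^*(i)$ is the paper's breakpoint $\mathit{BP}[i]$; the paper's convention $A[-1]=B[-1]=-\infty$ plays the role of your dropped out-of-range terms. Your formula $C[i]=\max\{A[\mathit{BP}[i]],B[i-\mathit{BP}[i]-1]\}$ is the paper's claim, and both arguments use that $\mathit{BP}$ is non-decreasing.

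The only difference is how the breakpoints are computed. The paper merges $A$ and $B$ to get $Z[j]$, the number of entries of $B$ strictly smaller than $A[j]$. It uses that $A[j]\le B[i-j]$ holds if and only if $i\ge j+Z[j]$, and then obtains $\mathit{BP}[i]+1=|\{j : j+Z[j]\le i\}|$ by a prefix-sum sweep. Your two-pointer sweep relies directly on the monotonicity and downward-closedness you proved, and is if anything slightly simpler.
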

\begin{proof}
    We define, for each $i \in [0,z]$, breakpoint $\mathit{BP}[i]$ as the largest integer $j \in [-1, i]$ for which
    $A[j] \leq B[i-j]$.
    Note that $\mathit{BP}$ is non-decreasing due to the monotonicity of $A$ and $B$.
    
    In what follows, we first describe how array $\mathit{BP}$ enables an efficient computation of array~$C$ and then present an $\cO(z)$-time algorithm for constructing $\mathit{BP}$.
    
	For convenience, we set $A[-1] := -\infty$ and $B[-1] = -\infty$.
    \begin{claim}
    	For all $i \in [0,z]$, we have
    $C[i]= \max\{ A[\mathit{BP}[i]], B[i-\mathit{BP}[i]-1]\}$.
    \end{claim}
	\begin{claimproof}
	We can rewrite
	$C[i]= \max\limits_{j \in [0,i]}(\min \{A[j], B[i-j]\})$ as
	\[C[i]= \max \, \left\{ \, \max\limits_{j \in [0,\mathit{BP}[i]]} A[j] \,\, ,  \,\, \max\limits_{j \in (\mathit{BP}[i],i+1]} B[i-j] \, \right\} .\]
	The claim then follows by the monotonicity of arrays $A$ and $B$.
	\end{claimproof}
    The above claim implies that given array $\mathit{BP}$, we can compute $C[i]$ for any $i \in [0,z]$ in $\cO(1)$ time, and we can thus compute $C$ in $\cO(z)$ time in total.
    
    It remains to show that array $\mathit{BP}$ can be computed in $\cO(z)$ time given $A$ and $B$.
    We first compute an auxiliary array $Z[0\dd z]$, where $Z[j]$ is the number of the elements of~$B$ that are strictly smaller than $A[j]$, noting that $Z$ is a non-decreasing sequence since $A$ is non-decreasing.
    Array $Z$ can be computed in $\cO(z)$ time by merging $A$ and~$B$ in a merge-sort fashion.
    
    We then initialise $\mathit{BP}[0\dd z]$ as an array all of whose entries are set to $-1$ and, intuitively, for each $j$ from $0$ to $z$, we then want to increment all entries in
    $BP[j + Z[j] \dd z]$ by $1$.
    We can do this in $\cO(z)$ total time using a left-to-right sweep to accumulate said increments, effectively computing prefix sums.
    This concludes the proof of the lemma.
\end{proof}

\compare*
\begin{proof}
Consider a node $v$ of $\CT(P)$ and an integer $x \in [0, k]$.
Let the subtree of $\CT(P)$ rooted at $v$ correspond to some fragment $P_v = P[i \dd i+\absolute{str(v)})$ with leftmost minimum at position $P[j]$.
Let $\mathcal{Y}(v,x)$ denote the set of strings $Y$ that can be obtained from $T[i \dd i+\absolute{str(v)})$ with at most $x$ substitutions that satisfy $Y \approx P_v$.
We define the following three values:
\begin{align*}
    \dphd(v, x) & := \max\limits_{Y \in \mathcal{Y}(v,x)\cup \{-\infty\}} \min Y, \\
    \dphdsub(v, x) & :=  \max\limits_{Y \in \{Y \in \mathcal{Y}(v,x) \mid Y[j]\neq T[j]\} \cup \{-\infty\}} \min Y, \\
    \dphdnosub(v, x) & := \max\limits_{Y \in \{Y \in \mathcal{Y}(v,x) \mid Y[j]= T[j]\} \cup \{-\infty\}} \min Y.
\end{align*}
\cref{fig:ct-computation} illustrates tables $\dphdsub$, $\dphdnosub$, and $\dphd$.

Following \cite{DBLP:journals/tcs/KimH25}, we next provide recursive formulas that allow us to construct the tables defined above; see \cref{fig:ct-computation}. 
\begin{itemize}
    \item Let $v$ be a leaf node corresponding to $P[i]$.
    For $x \geq 1$ we can substitute the $T[i]$ with an arbitrarily large and hence have
    \begin{alignat}{1}
    & \dphd(v, 0) = T[i]\textnormal, \label{eq:in1} \\
    & \dphd(v, x) = \infty \textnormal{\ for\ } x \in [1, k]. \label{eq:in2}
\end{alignat}

    \item Let $v$ be an internal node with children $\ell$ and $r$, let $\mu:= T[i+\absolute{str(\ell)}]$.
    As the value of the root node $v$ must be strictly smaller than the value of its left child $\ell$ and smaller than or equal to the value of its right child $r$, we have, for $x \in [0, k]$:
    \begin{alignat}{1}
    \dphdsub(v, x) =\ & \begin{cases}
    \quad\ -\infty & \textnormal{if $x = 0$}\\
    \ \ \max\limits_{\overset{\scriptstyle y,z \in \mathbb Z_{\geq 0}}{\mathclap{y + z = x - 1}}}(\min \{ \dphd(\ell, y) - 1, \dphd(r, z)\}) & \textnormal{if $x > 0$}\end{cases} \label{eq:opthsub}\\[.5\baselineskip]
    \dphdnosub(v, x) =\ & \color{white}\begin{cases}\end{cases}\color{black}\ \ \max\limits_{\overset{\scriptstyle y,z \in \mathbb Z_{\geq 0}}{y + z = x}}(\min\{\dphd(\ell, y)-1, \dphd(r, z)\}). \label{eq:opthnosub}
\end{alignat}

Then, to compute $\dphd(v, x)$ from $\dphdsub(v, x)$ and $\dphdnosub(v, x)$, it suffices to note that
if $\mu \leq \dphdnosub(v, x)$ we can afford to not perform a substitution at $T[i+\absolute{str(\ell)}]$.
In this case, we can hence take the maximum of $\mu$ and the maximum possible value obtain by performing a substitution ($\dphdsub(v, x)$).
Otherwise, substituting $T[i+\absolute{str(\ell)}]$ is mandatory.
We thus have:
\begin{alignat}{1}
    \dphd(v, x) =\ & \begin{cases}
        \ \max\{\dphdsub(v, x), \mu \} & \textnormal{if $\mu \leq \dphdnosub(v, x) $}\\
        \phantom{\ \max\{ }\dphdsub(v, x) & \textnormal{otherwise.}
    \end{cases}\label{eq:edit}
\end{alignat}
\end{itemize}

It is easy to see that $\hdist_k(T \leadsto P) = \min( \{x \in [0, k] \mid \dphd(r, x) \neq -\infty \} \cup \{k+1\})$, where $r$ is the root of $\CT(P)$.
It remains to describe how we compute the tables $\dphd, \dphdsub$ and $\dphdnosub$.
The computations are performed in the order of increasing subtree sizes (resolving tries arbitrarily), so that when we want to compute the entries for a node $v$, all the entries for its children $\ell$ and $r$ have been already computed.

We now describe how to populate $\dphdsub(\cdot, \cdot)$; the computations for $\dphdnosub(\cdot, \cdot)$ are analogous.
At an internal node $v$ with children $\ell$ and $r$, the algorithm has already computed entries $\dphd(\ell, x)$ and $\dphd(r, x)$ for $x\in [0,k]$.
Define $A[x] = \dphd(\ell,x)-1$ and $B[x] = \dphd(r ,x)$ for $x \in [0,k]$.
Observe that $A$ and $B$ are non-decreasing by the definition of $\dphd$.
Let $C$ be the output of an application of \cref{lem:conv} to $A$ and $B$.
Then, for $x \in (0,k]$, we have $C[x] = \dphdsub(v,x)$.
Thus, it suffices to invoke \cref{lem:conv} once for each node of $\CT(P)$  to populate $\dphdsub(\cdot, \cdot)$.
The total cost of all such invocations is $\cO(mk)$ and this is also an upper bound on the time required to populate $\dphdnosub(\cdot, \cdot)$.
\end{proof}

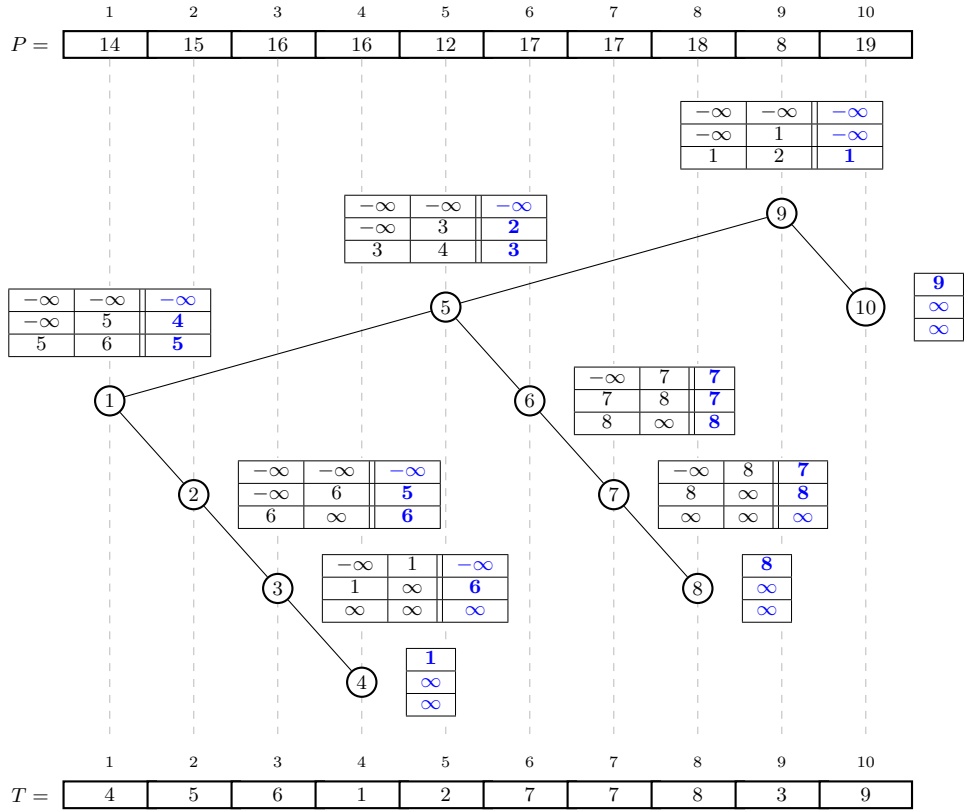
\begin{figure}[t]
\centering
{\input{figs/edit}}
\caption{The Cartesian tree $\CT(P)$ of the pattern $P = [14, 15, 16, 16, 12, 17, 17, 18, 8, 19]$, where each node is labeled by the index of the value it represents.
We consider the text $T = [4, 5, 6, 1, 2, 7, 7, 8, 3, 9]$ which satisfies $T \not\approx P$.
However, we have $\hdist(T \leadsto P) = 2$; by substituting the values at positions 4 and 9, we obtain $T' = [4, 5, 6, {\color{red}\mathbf{\infty}}, 2, 7, 7, 8, {\color{red}\mathbf{1}}, 9] \approx P$.
For each leaf node $v$ and each integer $x \in [0,2]$, the shown table values $\dphd(v,x)$ are initialized using \eqref{eq:in1} and \eqref{eq:in2} (shown in blue).
For internal nodes, the tables display $\dphdsub(v,x)$ (in Column 1) and $\dphdnosub(v,x)$ (in Column 2) computed via \eqref{eq:opthsub} and \eqref{eq:opthnosub}, respectively, as well as $\dphd(v,x)$ (in Column 3 in blue), computed according to \eqref{eq:edit}.}
\label{fig:ct-computation}
\end{figure}

\end{document}

%% file: figs/plot.tex



\begin{tikzpicture}

    \pgfmathsetmacro{\sx}{5}
    \pgfmathsetmacro{\sy}{4}

    \node[fill,circle, inner sep=.75pt] (o) at (0,0) {};
    \node[fill,circle, inner sep=1pt] (1) at (\sx*0.75,0) {};
    \node[fill,circle, inner sep=1pt] (2) at  (\sx*0.9375,0) {};
    \node[fill,circle, inner sep=1pt] (4) at  (\sx*0.75,\sy*1) {};
    \node[fill,circle, inner sep=.75pt] (4a) at  (\sx*0.75,0) {};
    \node[fill,circle, inner sep=.75pt] (4b) at  (0,\sy*1) {};
    \node[fill,circle, inner sep=1pt] (5) at  (\sx*0.9375,\sy*1.2) {};
    \node[fill,circle, inner sep=.75pt] (5a) at  (\sx*0.9375,0) {};
    \node[fill,circle, inner sep=.75pt] (5b) at  (0,\sy*1.2) {};
    \node[fill,circle, inner sep=1pt] (6) at  (\sx*2.25,\sy*1.515
) {};
    \coordinate (6p) at (\sx*3,\sy*2) {};
    \node[fill,circle, inner sep=.75pt] (6a) at  (\sx*2.25,0) {};
    \coordinate (6ap) at  (\sx*3,0) {};
    \node[fill,circle, inner sep=.75pt] (6b) at  (0,\sy*1.515
) {};
    \coordinate (6bp) at  (0,\sy*2) {};
    
    \node[fill,circle, inner sep=1pt] (9) at  (0,\sy*1) {};
    \node[fill,circle, inner sep=1pt] (10) at  (0,\sy*.5) {};

    \coordinate (62) at (\sx*1.8,\sy*1.40625
) {};
    \coordinate (622) at (\sx*1.86,\sy*1.42025
) {};
    \coordinate (92) at (\sx*1.8,\sy*.6) {};
    \coordinate (922) at (\sx*1.86,\sy*1.64) {};

    \coordinate (7a) at  (\sx*1.795,0) {};
    \coordinate (7a2) at  (\sx*1.865,0) {};
    \coordinate (7a') at  (\sx*1.8,0) {};
    \coordinate (7a2') at  (\sx*1.86,0) {};
    \coordinate (7b) at  (0,\sy*1.39625) {};
    \coordinate (7b2) at  (0,\sy*1.43025
) {};
    \coordinate (7b') at  (0,\sy*1.40625
) {};
    \coordinate (7b2') at  (0,\sy*1.42025
) {};

    \draw[very thin,black!10] (7a') -- (62) -- (7b');
    \draw[very thin,black!10] (7a2') -- (622) -- (7b2');

    \draw[thin] (62) -- ++(180:.15);
    \draw[thin] (62) -- ++(90:-.15);
    \draw[thin] (622) -- ++(180:.15);
    \draw[thin] (622) -- ++(90:-.15);


    \draw[very thin,black!30] (6a) -- (6) -- (6b);
    \draw[very thin,black!30] (5a) -- (5) -- (5b);
    \draw[very thin,black!30] (4a) -- (4) -- (4b);

    \draw (o) -- (6a);
    \draw (o) -- (6b);
    \draw[white,thick] (7b) -- (7b2);
    \draw[thin] (7b) -- ++(45:.3);
    \draw[thin] (7b) -- ++(45:-.3);
    \draw[thin] (7b2) -- ++(45:.3);
    \draw[thin] (7b2) -- ++(45:-.3);

    \draw[white,thick] (7a) -- (7a2);
    \draw[thin] (7a) -- ++(45:.3);
    \draw[thin] (7a) -- ++(45:-.3);
    \draw[thin] (7a2) -- ++(45:.3);
    \draw[thin] (7a2) -- ++(45:-.3);

    \draw[color=purple!90!black, line width=2pt] (10) -- (4) 
    node[pos=.45, below=.25em,
    sloped]{
    This work:\ \(\cO(n \sqrt{n} \cdot k^{2.5})\)}; 
    \draw[color=purple!90!black, line width=2pt] (4) -- (5) node[pos=.4, below=.25em,
    sloped,inner sep=0.5ex, fill=white]{
    \(\cO(n k^{5})\)}; 
    \draw[color=black, line width=.5pt] (622) -- (6) (9) -- (5) (5) -- (62) node[midway, below=.25em, sloped]
    {\(\cO(n^2k)\), Kim and Han~\cite{DBLP:journals/tcs/KimH25}}; 


    \def\slfrac#1#2{\ensuremath{}^{#1}\!/\!{}_{#2}}

    \node[anchor=north east,outer sep=.1em,rotate=45] at (o)  { $k \approx 1$};
    \node[anchor=north east,outer sep=.1em,rotate=45] at (1)  { $k \approx n^{1/5}$};
    \node[anchor=north east,outer sep=.1em,rotate=45] at (2)  { $k \approx n^{1/4}$};
    \node[anchor=north east,outer sep=.1em,rotate=45] at (6a)  { $k \approx n$};

    \node[anchor=east,outer sep=.5em] at (o) { $t(n,k) \approx n$};
    \node[anchor=east,outer sep=.5em] at (10) { $t(n,k) \approx n^{3/2}$};
    \node[anchor=east,outer sep=.5em] at (4b) { $t(n,k) \approx n^{2}$};
    \node[anchor=east,outer sep=.5em] at (5b) { $t(n,k) \approx n^{9/4}$};
    \node[anchor=east,outer sep=.5em] at (6b) { $t(n,k) \approx n^3$};
\end{tikzpicture}


%% file: figs/ct-tree-tables.tex


\begingroup
\def\onex{1.35em}
\def\oney{1.5em}
\small
\begin{tikzpicture}[
    treenode/.style={circle, draw=black, fill=white, thick, 
    inner sep=2.5pt,
    text centered, text=black, 
    },
    arraynode/.style={rectangle, 
    draw=black, 
    fill=white,
    minimum width=1.2*\onex,
    text centered, 
    },
    arrow/.style={thick},
    y=\oney, x=\onex,
    baseline=(current bounding box.north),
]

\node[arraynode] (x1) at (0, 1) {4};
\node[font=\small, left=.25em of x1] {$X\,=$};
\node[arraynode] (x2) at (1.2, 1) {5};
\node[arraynode] (x3) at (2.4, 1) {6};
\node[arraynode] (x4) at (3.6, 1) {1};
\node[arraynode] (x5) at (4.8, 1) {2};
\node[arraynode] (x6) at (6, 1) {7};
\node[arraynode] (x7) at (7.2, 1) {7};
\node[arraynode] (x8) at (8.4, 1) {8};
\node[arraynode] (x9) at (9.6, 1) {3};
\node[arraynode] (x10) at (10.8, 1) {9};

\foreach \i in {1,...,10} {
  \node[above=.25em of x\i, inner sep=0] {$\scriptstyle \i$};
}

\foreach \i in {1,...,10} {
\draw[dashed, gray] (x\i.south) -- ++(0, -5.5);
}

\def\ystep{.8}

\begin{scope}[yshift = .7*\oney]
\node[treenode] (n1)  at (3.6,  -1)               {1};
\node[treenode] (n4)  at (0,    {-1-1*\ystep})    {4};
\node[treenode] (n2)  at (4.8,  {-1-1*\ystep})    {2};
\node[treenode] (n5)  at (1.2,  {-1-2*\ystep})    {5};
\node[treenode] (n3)  at (9.6,  {-1-2*\ystep})    {3};
\node[treenode] (n6)  at (2.4,  {-1-3*\ystep})    {6};
\node[treenode] (n7a) at (6,    {-1-3*\ystep})    {7};
\node[treenode] (n9)  at (10.8, {-1-3*\ystep})    {9};
\node[treenode] (n7b) at (7.2,  {-1-4*\ystep})    {7};
\node[treenode] (n8)  at (8.4,  {-1-5*\ystep})    {8};

\draw[thick] (n1) -- (n4);
\draw[thick] (n1) -- (n2);
\draw[thick] (n4) -- (n5);
\draw[thick] (n2) -- (n3);
\draw[thick] (n5) -- (n6);
\draw[thick] (n3) -- (n7a);
\draw[thick] (n3) -- (n9);
\draw[thick] (n7a) -- (n7b);
\draw[thick] (n7b) -- (n8);

\end{scope}

\begin{scope}[yshift = 2*\oney]
\node[arraynode] (z1)  at (0,   -8) {\clap{14}};
\node[font=\small, left=.25em of z1] {$Z\,=$};
\node[arraynode] (z2)  at (1.2, -8) {\clap{15}};
\node[arraynode] (z3)  at (2.4, -8) {\clap{16}};
\node[arraynode] (z4)  at (3.6, -8) {\clap{16}};
\node[arraynode] (z5)  at (4.8, -8) {\clap{12}};
\node[arraynode] (z6)  at (6,   -8) {\clap{17}};
\node[arraynode] (z7)  at (7.2, -8) {\clap{17}};
\node[arraynode] (z8)  at (8.4, -8) {\clap{18}};
\node[arraynode] (z9)  at (9.6, -8) {\clap{8}};
\node[arraynode] (z10) at (10.8,-8) {\clap{19}};


\foreach \i in {1,...,10} {
\draw[dashed, gray] (z\i.south) -- ++(0, -5.5);
}
\begin{scope}[yshift = -8.3*\oney]
\node[treenode] (m8)   at (9.6,  -1)               {8};

\node[treenode] (m12)  at (4.8,  {-1-1*\ystep})    {\phantom{0}};
\node[treenode,draw=none,fill=none]                at (4.8,  {-1-1*\ystep}) {12};

\node[treenode] (m19)  at (10.8, {-1-1*\ystep})    {\phantom{0}};
\node[treenode,draw=none,fill=none]                at (10.8, {-1-1*\ystep}) {19};

\node[treenode] (m14)  at (0,    {-1-2*\ystep})    {\phantom{0}};
\node[treenode,draw=none,fill=none]                at (0,    {-1-2*\ystep}) {14};

\node[treenode] (m17a) at (6,    {-1-2*\ystep})    {\phantom{0}};
\node[treenode,draw=none,fill=none]                at (6,    {-1-2*\ystep}) {17};

\node[treenode] (m15)  at (1.2,  {-1-3*\ystep})    {\phantom{0}};
\node[treenode,draw=none,fill=none]                at (1.2,  {-1-3*\ystep}) {15};

\node[treenode] (m17b) at (7.2,  {-1-3*\ystep})    {\phantom{0}};
\node[treenode,draw=none,fill=none]                at (7.2,  {-1-3*\ystep}) {17};

\node[treenode] (m16a) at (2.4,  {-1-4*\ystep})    {\phantom{0}};
\node[treenode,draw=none,fill=none]                at (2.4,  {-1-4*\ystep}) {16};

\node[treenode] (m18)  at (8.4,  {-1-4*\ystep})    {\phantom{0}};
\node[treenode,draw=none,fill=none]                at (8.4,  {-1-4*\ystep}) {18};

\node[treenode] (m16b) at (3.6,  {-1-5*\ystep})    {\phantom{0}};
\node[treenode,draw=none,fill=none]                at (3.6,  {-1-5*\ystep}) {16};

\draw[thick] (m8)  -- (m12);
\draw[thick] (m8)  -- (m19);

\draw[thick] (m12) -- (m14);
\draw[thick] (m12) -- (m17a);

\draw[thick] (m14) -- (m15);
\draw[thick] (m15) -- (m16a);
\draw[thick] (m16a)-- (m16b);

\draw[thick] (m17a)-- (m17b);
\draw[thick] (m17b)-- (m18);
\end{scope}\end{scope}

\end{tikzpicture}~\hfill~%
\begin{tikzpicture}[
    treenode/.style={circle, draw=black, fill=white, thick, 
    inner sep=2pt,
    text centered, text=black, 
    },
    arraynode/.style={rectangle, 
    draw=black, 
    fill=white,
    minimum width=1.2*\onex,
    text centered, 
    },
    arrow/.style={thick},
    y=\oney, x=\onex,
    baseline=(current bounding box.north),
]

\node[arraynode] (x1) at (0, 1) {4};
\node[font=\small, left=.25em of x1] {$X\,=$};
\node[arraynode] (x2) at (1.2, 1) {5};
\node[arraynode] (x3) at (2.4, 1) {6};
\node[arraynode] (x4) at (3.6, 1) {1};
\node[arraynode] (x5) at (4.8, 1) {2};
\node[arraynode] (x6) at (6, 1) {7};
\node[arraynode] (x7) at (7.2, 1) {7};
\node[arraynode] (x8) at (8.4, 1) {8};
\node[arraynode] (x9) at (9.6, 1) {3};
\node[arraynode] (x10) at (10.8, 1) {9};

\foreach \i in {1,...,10} {
  \node[above=.25em of x\i, inner sep=0] {$\scriptstyle \i$};
}

\foreach \i in {1,...,10} {
\draw[dashed, gray] (x\i.south) -- ++(0, -11);
}

\begin{scope}[yshift = -.3*\oney]
\node[treenode] (r) at (-1.2, 0) {\phantom{0}};

\node[treenode] (n1)  at (0,   -1) {4};
\node[treenode] (n2)  at (1.2, -2) {5};
\node[treenode] (n3)  at (2.4, -3) {6};
\node[treenode] (n4)  at (3.6, -1) {1};
\node[treenode] (n5)  at (4.8, -2) {2};
\node[treenode] (n6)  at (6,   -3) {7};
\node[treenode] (n7)  at (7.2, -4) {7};
\node[treenode] (n8)  at (8.4, -5) {8};
\node[treenode] (n9)  at (9.6, -3) {3};
\node[treenode] (n10) at (10.8,-4) {9};

\draw[thick] (r)  -- (n1);
\draw[thick] (r)  -- (n4);

\draw[thick] (n1) -- (n2);
\draw[thick] (n2) -- (n3);

\draw[thick] (n4) -- (n5);
\draw[thick] (n5) -- (n6);
\draw[thick] (n6) -- (n7);
\draw[thick] (n7) -- (n8);

\draw[thick] (n5) -- (n9);
\draw[thick] (n9) -- (n10);
\end{scope}

\def\ypsv{-6.5}
\def\ypd{-7.5}
\def\ynsv{-9.5}
\def\ynd{-10.5}

\node[arraynode] (psv1)  at (0,   \ypsv) {0};
\node[font=\small, left=.25em of psv1] {$\psv_X\,=$};
\node[arraynode] (psv2)  at (1.2, \ypsv) {1};
\node[arraynode] (psv3)  at (2.4, \ypsv) {2};
\node[arraynode] (psv4)  at (3.6, \ypsv) {0};
\node[arraynode] (psv5)  at (4.8, \ypsv) {4};
\node[arraynode] (psv6)  at (6,   \ypsv) {5};
\node[arraynode] (psv7)  at (7.2, \ypsv) {6};
\node[arraynode] (psv8)  at (8.4, \ypsv) {7};
\node[arraynode] (psv9)  at (9.6, \ypsv) {5};
\node[arraynode] (psv10) at (10.8,\ypsv) {9};

\node[arraynode] (pd1)  at (0,   \ypd) {0};
\node[font=\small, left=.25em of pd1] {$\PD(X)\,=$};
\node[arraynode] (pd2)  at (1.2, \ypd) {1};
\node[arraynode] (pd3)  at (2.4, \ypd) {1};
\node[arraynode] (pd4)  at (3.6, \ypd) {0};
\node[arraynode] (pd5)  at (4.8, \ypd) {1};
\node[arraynode] (pd6)  at (6,   \ypd) {1};
\node[arraynode] (pd7)  at (7.2, \ypd) {1};
\node[arraynode] (pd8)  at (8.4, \ypd) {1};
\node[arraynode] (pd9)  at (9.6, \ypd) {4};
\node[arraynode] (pd10) at (10.8,\ypd) {1};

\node[arraynode] (nsv1)  at (0,   \ynsv) {4};
\node[font=\small, left=.25em of nsv1] {$\nsv_X\,=$};
\node[arraynode] (nsv2)  at (1.2, \ynsv) {4};
\node[arraynode] (nsv3)  at (2.4, \ynsv) {4};
\node[arraynode] (nsv4)  at (3.6, \ynsv) {11};
\node[arraynode] (nsv5)  at (4.8, \ynsv) {11};
\node[arraynode] (nsv6)  at (6,   \ynsv) {9};
\node[arraynode] (nsv7)  at (7.2, \ynsv) {9};
\node[arraynode] (nsv8)  at (8.4, \ynsv) {9};
\node[arraynode] (nsv9)  at (9.6, \ynsv) {11};
\node[arraynode] (nsv10) at (10.8,\ynsv) {11};

\node[arraynode] (nd1)  at (0,   \ynd) {3};
\node[font=\small, left=.25em of nd1] {$\ND(X)\,=$};
\node[arraynode] (nd2)  at (1.2, \ynd) {2};
\node[arraynode] (nd3)  at (2.4, \ynd) {1};
\node[arraynode] (nd4)  at (3.6, \ynd) {0};
\node[arraynode] (nd5)  at (4.8, \ynd) {0};
\node[arraynode] (nd6)  at (6,   \ynd) {3};
\node[arraynode] (nd7)  at (7.2, \ynd) {2};
\node[arraynode] (nd8)  at (8.4, \ynd) {1};
\node[arraynode] (nd9)  at (9.6, \ynd) {0};
\node[arraynode] (nd10) at (10.8,\ynd) {0};

\end{tikzpicture}
\endgroup


%% file: figs/blocks.tex
\begin{figure}
\centering
{\small

\edef\onex{1.25em}
\edef\oney{1.1em}

\begin{tikzpicture}[
    every node/.style={inner sep=0pt},
    x=\onex,
    y=\oney,
    every fit/.style={inner sep=-0.2pt}
]

\foreach[count=\y from 0] \ypos in {0,-2,-13,-15} {
    \foreach \i in {0,...,30} {
        \node[minimum width=\onex, minimum height=\oney] (\i-\y) at (\i,\ypos) {};
        \node[fit=(\i-\y), draw, gray] {};
    }
}

\node[fit=(9-0)(0-0),   draw=white, fill=white, inner sep=1pt] {};
\node[fit=(30-0)(26-0), draw=white, fill=white, inner sep=1pt] {};
\node[fit=(5-2)(0-2),   draw=white, fill=white, inner sep=1pt] {};
\node[fit=(30-2)(26-2), draw=white, fill=white, inner sep=1pt] {};

\foreach \x in {11-2, 11-3, 14-2, 14-3, 19-2, 19-3} {
\node[fit=(\x.north west)(\x.south west), draw=white, fill=white, inner xsep=1pt, inner ysep=-.4pt] (tmp) {};
\node at (tmp) {$\dots$};
}

\node[
    fit=(6-1)(0-1),
    draw=white,
    fill=white,
    inner sep=1pt
] {};

\node[
    fit=(5-3)(0-3),
    draw=white,
    fill=white,
    inner sep=1pt
] {};

\node[
    fit=(26-3)(30-3),
    draw=white,
    fill=white,
    inner sep=1pt
] {};

\foreach \i in {9,10,11,13,21,23} {
    \node[fit=(\i-0), fill=black!20!white, draw=gray] {};
}

\foreach \i in {7,8,10,13,21,23} {
    \node[fit=(\i-1), fill=black!20!white, draw=gray] {};
}

\foreach \i in {9,22} {
    \node[fit=(\i-2), fill=black!20!white, draw=gray] {};
}

\foreach \i in {8,20,21,24} {
    \node[
        fit=(\i-2),
        pattern=north east lines,
        pattern color=gray,
        draw=gray
    ] {};
}

\foreach \i in {12} {
    \node[fit=(\i-3), fill=black!20!white, draw=gray] {};
}

\node[fit=(6-0)(25-0), draw] (pcase1) {};
\node[fit=(0-1)(30-1), draw] (tcase1) {};
\node[fit=(0-3)(30-3), draw] (tcase2) {};

\node[left=0 of pcase1] (P1) {$P=\ $};
\node[left=0 of tcase1] (T1) {$T=\ $};
\node[left=0 of tcase2] (T2) {$T=\ $};

\node[fit=(13-0)(13-0), draw=red, line width=1.5pt, inner sep=.75pt] {};
\node[fit=(13-0)(13-0), draw=black, thin] (hlbox) {};
\node[fit=(13-1)(13-1), draw=red, line width=1.5pt, inner sep=.75pt] {};
\node[fit=(13-1)(13-1), draw=black, thin] (hlbox) {};

\node[
    below right=5em and 0 of T1.west
] (C1) {
    \parbox{\textwidth}{
        \textbf{\textsf{Case I:}}
        If enough bad blocks accumulate in the pattern,
        then, for every approximate occurrence, a bad block of the text must be aligned with
        a bad block of the pattern (e.g., the highlighted block).
    }
};

\node[
    below right=5em and 0 of T2.west
] (C2) {
    \parbox{\textwidth}{
        \textbf{\textsf{Case II:}}
        Of a maximal fragment of blocks that are good in both $P$ and $T$ (e.g., the one between the dotted lines), we can trim all but the initial and final $3k$ blocks (e.g., the highlighted three blocks).
    }
};

\draw[
    decorate,
    decoration={brace, amplitude=4pt},
    line width=1pt
]
($(16-0.north west)+(1pt,0.1em)$) --
($(18-0.north east)+(-1pt,0.1em)$)
node[midway, above=.6em] {$P[\ell..r)$};

\draw[
    decorate,
    decoration={brace, amplitude=4pt},
    line width=1pt
]
($(9-0.north west)+(1pt,0.1em)$) --
($(15-0.north east)+(-1pt,0.1em)$)
coordinate[midway, above=.5em] (badbrace);

\node[
    above right=3.5em and 1em of T1.north west,
    align=center,
    inner xsep=.5em
] (accum) {
    $3k+1$ bad blocks on\\one side, where $k = 1$
};

\draw[-Latex] (accum) to[out=10, in=90] (badbrace);

\draw[dotted]
    (8-0.north east) -- ++(0,2em) coordinate (vleft);

\draw[dotted]
    (25-0.north east) -- ++(0,2em) coordinate (vright);

\draw[
    decorate,
    decoration={brace, amplitude=4pt},
    line width=1pt
]
(vleft) -- (vright)
node[midway, above=.6em] {$\Phi$};

\draw[
    decorate,
    decoration={brace, amplitude=4pt, mirror},
    line width=1pt
]
($(14-1.south west)+(1pt,-0.1em)$) --
($(20-1.south east)+(-1pt,-0.1em)$)
node[midway, below=.6em] {CT-run};

\draw[dotted]
    (6-1.south east) -- ++(0,-2em) coordinate (vleft2);

\draw[dotted]
    (30-1.south east) -- ++(0,-2em) coordinate (vright2);

\draw[
    decorate,
    decoration={brace, amplitude=4pt, mirror},
    line width=1pt
]
(vleft2) -- (vright2)
node[midway, below=.6em]
{CT-run extended with up to $3k+1$ bad blocks in each direction};

\node[fit=(6-2),  draw=white, fill=white] {};
\node[fit=(25-2), draw=white, fill=white] {};

\node[
  minimum width=1.5*\onex,
  minimum height=\oney,
  fill=white,
  draw=gray,
  left=0 of 7-2.west,
] (6-2-wide) {};

\node[
  minimum width=1.5*\onex,
  minimum height=\oney,
  fill=white,
  draw=gray,
  right=0 of 24-2.east,
] (25-2-wide) {};

\node[fit=(6-2-wide)(25-2-wide), draw] (pcase2) {};
\node[left=0 of pcase2] (P2) {$P=\ $};

\node at (6-2-wide.center) {$X_P$};
\node at (25-2-wide.center) {$Y_P$};

\draw[
    decorate,
    decoration={brace, amplitude=4pt},
    line width=1pt
]
($(6-2-wide.north east)+(0pt,0.3em)$) --
($(25-2-wide.north west)+(0pt,0.3em)$)
node[midway, above=.6em]
{$B_P$ contains at most $6k$ explicitly bad blocks};

\foreach[evaluate=\i as \iminus using int(\i - 1)] \i in {14,19} {
  \draw[
    decorate,
    decoration={brace, amplitude=4pt, mirror},
    line width=1pt
  ]
  ($( \iminus-3.south west)+(1pt,-0.1em)$) --
  ($( \i-3.south east)+(-1pt,-0.1em)$) node[midway, below=.6em, align=center] {$3k$ good\\blocks};
  
}

\foreach \i in {13,20} {
\draw[line width=3pt, white] (\i-2.north west) ++(0,.25em) to (\i-3.south west);
\draw[very thick, densely dotted] (\i-2.north west) ++(0,.25em) to (\i-3.south west);
}

\node[fit=(15-2)(17-2), draw=red, line width=1.5pt, inner sep=.75pt] {};
\node[fit=(15-2)(17-2), draw=black, thin] (hlbox) {};
\node[fit=(15-3)(17-3), draw=red, line width=1.5pt, inner sep=.75pt] {};
\node[fit=(15-3)(17-3), draw=black, thin] (hlbox) {};

\node[below=1em of C1] (sep) {};
\draw[gray, thin] (C1.west |- sep) to (C1.east |- sep);
\node[below=1em of C2] (sep) {};
\draw[gray, thin] (C2.west |- sep) to (C2.east |- sep);

\end{tikzpicture}%
}%
\caption{Illustrations for \cref{lem:per_case}. Gray (hatched) blocks are explicitly (resp., implicitly) bad.}
\end{figure}

%% file: figs/edit.tex
\usetikzlibrary{positioning}

\newcommand{\tablethreebythree}[9]{%
\setlength{\arrayrulewidth}{0.4pt}%
\begin{tabular}{|c|c||c|}
\hline
#1 & #2 & \bf{\textcolor{blue}{#3}} \\ \hline
#4 & #5 & \bf{\textcolor{blue}{#6}} \\ \hline
#7 & #8 & \bf{\textcolor{blue}{#9}} \\ \hline
\end{tabular}%
}
\renewcommand{\arraystretch}{1.2}

\newcommand{\tablethreebyone}[3]{%
\setlength{\arrayrulewidth}{0.4pt}%
\begin{tabular}{|c|}
\hline \bf{\textcolor{blue}{#1}} \\ \hline
\bf{\textcolor{blue}{#2}} \\ \hline
\bf{\textcolor{blue}{#3}} \\ \hline
\end{tabular}%
}


\begingroup
\def\onex{2.0em}
\def\oney{1.8em}
\def\ystep{2.5}

\small

\begin{tikzpicture}[
    treenode/.style={
        circle, draw=black, thick,
        fill=white, inner sep=2pt
    },
    arraynode/.style={
        rectangle, draw=black, thick,
        fill=white,
        minimum width=2.2*\onex,
        minimum height=1.2em
    },
    tablebox/.style={
        fill=white,
        inner sep=1pt,
        font=\scriptsize,
        anchor=center
    },
    x=\onex, y=\oney,
    scale=0.85,
    transform shape
]

\node[arraynode] (p1)  at (0,0) {14};
\node[left=0.3em of p1] {$P=$};
\node[arraynode] (p2)  at (2,0) {15};
\node[arraynode] (p3)  at (4,0) {16};
\node[arraynode] (p4)  at (6,0) {16};
\node[arraynode] (p5)  at (8,0) {12};
\node[arraynode] (p6)  at (10,0) {17};
\node[arraynode] (p7)  at (12,0) {17};
\node[arraynode] (p8)  at (14,0) {18};
\node[arraynode] (p9)  at (16,0) {8};
\node[arraynode] (p10) at (18,0) {19};

\foreach \i in {1,...,10}
  \node[above=0.3em of p\i] {\scriptsize \i};

\foreach \i in {1,...,10}
  \draw[dashed,gray!50] (p\i.south) -- ++(0,-18);

\begin{scope}[yshift=-4.5*\oney]

\node[treenode] (m8)   at (16,0) {9};
\node[tablebox, above=4mm of m8]
{\tablethreebythree{$-\infty$}{$-\infty$}{$-\infty$}{$-\infty$}{1}{$-\infty$}{1}{2}{1}};

\node[treenode] (m12)  at (8,-\ystep) {5};
\node[tablebox, above=4mm of m12]
{\tablethreebythree{$-\infty$}{$-\infty$}{$-\infty$}{$-\infty$}{3}{2}{3}{4}{3}};

\node[treenode] (m19)  at (18,-\ystep) {10};
\node[tablebox, right=4mm of m19]
{\tablethreebyone{9}{$\infty$}{$\infty$}}; 

\node[treenode] (m14)  at (0,-2*\ystep) {1};
\node[tablebox, above=4mm of m14]
{\tablethreebythree{$-\infty$}{$-\infty$}{$-\infty$}{$-\infty$}{5}{4}{5}{6}{5}};

\node[treenode] (m17a) at (10,-2*\ystep) {6};
\node[tablebox, right=4mm of m17a]
{\tablethreebythree{$-\infty$}{7}{7}{7}{8}{7}{8}{$\infty$}{8}};

\node[treenode] (m15)  at (2,-3*\ystep) {2};
\node[tablebox, right=4mm of m15]
{\tablethreebythree{$-\infty$}{$-\infty$}{$-\infty$}{$-\infty$}{6}{5}{6}{$\infty$}{6}};

\node[treenode] (m17b) at (12,-3*\ystep) {7};
\node[tablebox, right=4mm of m17b]
{\tablethreebythree{$-\infty$}{8}{7}{8}{$\infty$}{8}{$\infty$}{$\infty$}{$\infty$}};

\node[treenode] (m16a) at (4,-4*\ystep) {3};
\node[tablebox, right=4mm of m16a]
{\tablethreebythree{$-\infty$}{1}{$-\infty$}{1}{$\infty$}{6}{$\infty$}{$\infty$}{$\infty$}};

\node[treenode] (m18)  at (14,-4*\ystep) {8};
\node[tablebox, right=4mm of m18]
{\tablethreebyone{8}{$\infty$}{$\infty$}}; 

\node[treenode] (m16b) at (6,-5*\ystep) {4};
\node[tablebox, right=4mm of m16b]
{\tablethreebyone{1}{$\infty$}{$\infty$}}; 

\draw (m8)--(m12);
\draw (m8)--(m19);
\draw (m12)--(m14);
\draw (m12)--(m17a);
\draw (m14)--(m15);
\draw (m15)--(m16a);
\draw (m16a)--(m16b);
\draw (m17a)--(m17b);
\draw (m17b)--(m18);

\end{scope}

\begin{scope}[yshift=-20*\oney]

\node[arraynode] (t1)  at (0,0) {4};
\node[left=0.3em of t1] {$T=$};
\node[arraynode] (t2)  at (2,0) {5};
\node[arraynode] (t3)  at (4,0) {6};
\node[arraynode] (t4)  at (6,0) {1};
\node[arraynode] (t5)  at (8,0) {2};
\node[arraynode] (t6)  at (10,0) {7};
\node[arraynode] (t7)  at (12,0) {7};
\node[arraynode] (t8)  at (14,0) {8};
\node[arraynode] (t9)  at (16,0) {3};
\node[arraynode] (t10) at (18,0) {9};

\foreach \i in {1,...,10}
  \node[above=0.3em of t\i] {\scriptsize \i};

\end{scope}

\end{tikzpicture}

\endgroup